\documentclass[aps,prx,reprint,superscriptaddress,amsmath,amssymb,nobibnotes,longbibliography]{revtex4-2}
\usepackage{iftex}
\ifPDFTeX\pdfoutput=1\usepackage[utf8]{inputenc}\fi
\usepackage[T1]{fontenc}
\usepackage[english]{babel}
\usepackage{amsthm,mathtools,etoolbox,xcolor}
\usepackage{graphicx,booktabs,tikz}
\usetikzlibrary{positioning}
\tikzset{
 penrose/.style={semithick,every node/.style={inner sep=1.5pt,font=\scriptsize}},
 tensor/.style={draw,semithick,minimum size=8mm,inner sep=0pt,font=\normalsize},
}
\usepackage{hyperref}
\newcommand{\sharedauthornote}{%
 \thanks{\raggedright These authors contributed equally.
 \mbox{Contact authors:} Shihao Ru (\mbox{\href{mailto:shihao.ru.2018@gmail.com}{shihao.ru.2018@gmail.com}})
 and Bikun Li (\mbox{\href{mailto:bikunli@uchicago.edu}{bikunli@uchicago.edu}}).}%
}
\definecolor{linkdarkred}{RGB}{128,0,0}
\hypersetup{colorlinks=true,allcolors=linkdarkred}
\newcommand{\Cbb}{\mathbb{C}}
\newcommand{\Rbb}{\mathbb{R}}
\newcommand{\ACal}{\mathcal{A}}
\newcommand{\ECal}{\mathcal{E}}
\newcommand{\HCal}{\mathcal{H}}
\newcommand{\KCal}{\mathcal{K}}
\newcommand{\MCal}{\mathcal{M}}

\newcommand{\OCal}{\mathcal{O}}
\newcommand{\RCal}{\mathcal{R}}
\newcommand{\fopt}{f^{\mathrm{opt}}}
\newcommand{\Zopt}{Z^{\mathrm{opt}}}
\newcommand{\Fe}{F_{\mathrm{e}}}
\newcommand{\Psucc}{P_{\mathrm{succ}}}
\newcommand{\ket}[1]{\lvert#1\rangle}
\newcommand{\bra}[1]{\langle#1\rvert}
\newcommand{\dket}[1]{\lvert#1\rangle\!\rangle}
\newcommand{\norm}[1]{\lVert#1\rVert}
\newcommand{\opnorm}[1]{\norm{#1}}
\newcommand{\fnorm}[1]{\norm{#1}_{\mathrm{F}}}
\newcommand{\inner}[2]{\langle#1,#2\rangle}
\newcommand{\finner}[2]{\inner{#1}{#2}_{\mathrm{F}}}
\newcommand{\dd}{\operatorname{d}}
\DeclareMathOperator{\id}{id}
\DeclareMathOperator{\tr}{tr}
\DeclareMathOperator{\supp}{supp}
\DeclareMathOperator{\rank}{rank}
\newcommand{\lin}{\mathsf{L}}
\newtheorem{theorem}{Theorem}
\newtheorem{lemma}[theorem]{Lemma}
\newtheorem{corollary}[theorem]{Corollary}
\newtheorem{proposition}{Proposition}[section]

\begin{document}
\title{Generalized Reimpell--Werner Iteration}
\author{Shihao Ru}
\sharedauthornote
\affiliation{School of Electrical and Electronic Engineering, Nanyang Technological University, Singapore 639798, Singapore}
\author{Bikun Li}
\sharedauthornote
\affiliation{Chicago Quantum Institute and Pritzker School of Molecular Engineering, University of Chicago, Chicago, Illinois 60637, USA}
\author{Weibo Gao}
\affiliation{School of Electrical and Electronic Engineering, Nanyang Technological University, Singapore 639798, Singapore}
\author{Liang Jiang}
\affiliation{Chicago Quantum Institute and Pritzker School of Molecular Engineering, University of Chicago, Chicago, Illinois 60637, USA}
\begin{abstract}
Quantum measurements and channels determine how information is extracted, encoded, and transmitted in quantum protocols. Optimizing their performance often requires numerical methods that remain practical as Hilbert space dimensions increase. The Reimpell--Werner iteration offers a practical approach to these tasks through repeated matrix updates that respect the constraints. Here, we generalize this iteration to linear objectives with arbitrary Hermitian cost matrices. We prove that the iterates converge to a global optimum whenever the initialization satisfies suitable support overlap conditions. For each fixed problem, choice of iteration parameters, and admissible initialization, $\OCal(1/\varepsilon)$ iterations suffice asymptotically to bring the objective value within $\varepsilon$ of the optimum. These results provide a rigorous foundation for the iteration and broaden the class of optimization problems to which its convergence guarantees apply.
\end{abstract}
\maketitle

\section{Introduction}
Quantum information protocols rely on measurements to extract information from quantum systems and on channels to manipulate, transmit, or preserve it. Designing effective protocols therefore requires both optimizing these operations and understanding the quantum resources available to support them. In minimum error state discrimination~\cite{Jezek2002}, for example, one seeks the measurement that most accurately identifies a state drawn from a known ensemble. Optimizing entanglement fidelity~\cite{Schumacher1996,FletcherShorWin2007} addresses another central question of how faithfully a channel can preserve a given state and its correlations with a reference system. Optimization also provides a way to quantify the resources present in a state. Examples include the robustness of coherence~\cite{Napoli2016} and the robustness of imaginarity~\cite{Wu2021Imaginarity}. These measure the minimum amount of arbitrary noise needed to turn a quantum state into a resource-free state. Despite their different physical interpretations, these tasks admit formulations as semidefinite programs (SDPs)~\cite{Jezek2002,FletcherShorWin2007,Watrous2018,SkrzypczykCavalcanti2023,SiddhuTayur2022}. Their convex structure allows standard SDP methods to approach a global optimum to a prescribed accuracy under the usual feasibility assumptions~\cite{BoydVandenberghe2004}. As Hilbert space dimensions grow, this motivates the development of iterative methods that exploit the structure of measurements and channels to reduce the computational burden.

Two such well-known methods are the Je\v{z}ek--\v{R}eh\'{a}\v{c}ek--Fiur\'{a}\v{s}ek (JRF) iteration~\cite{Jezek2002} for measurement optimization and the Reimpell--Werner (RW) iteration~\cite{ReimpellWerner2005} for channel optimization. For the positive objectives to which they apply, both update the operators through matrix multiplication and normalization based on a positive square root. These operations preserve positivity and enforce the measurement or channel constraints, giving a direct implementation that can avoid the expense of the large linear systems encountered in interior point methods. The algebraic structure also connects the JRF iteration to the pretty good measurement~\cite{HausladenWootters1994,Tyson2009} and to its generalizations and variants~\cite{Mochon2006,TysonWeighted2009,AudenaertMosonyi2014,ZhouChessaChitambarLeditzky2025,McIrvinMohanSikora2024}. A common foundation for these methods was established by Tyson, who identified both as instances of a directional iteration framework and proved that their objective values are nondecreasing~\cite{Tyson2010}. Boundedness then ensures that these values converge, but their limits need not be globally optimal. Moreover, convergence of objective values alone does not establish convergence of the measurement or channel operators for general initialization.

Reimpell and Werner recognized that their iteration may have fixed points that are not optimal~\cite{ReimpellWerner2005,Reimpell2008}. They therefore proposed testing stability and applying random perturbations. Their global convergence discussion concerns this modified, stabilized procedure, with enough Kraus operators to represent any channel. For the original iteration, without stability tests or perturbations, it is not known whether a general initial input of high rank converges to a global optimum. We show that this convergence is guaranteed if the input satisfies an explicit support condition, which every positive definite input fulfills, including cases where the cost matrix is not invertible.

The JRF convergence question has a similar history. The original work reports successful numerical tests without a general analytic proof~\cite{Jezek2002}. An approximation guarantee is available for the first uniformly initialized JRF iterate~\cite{Tyson2009}, while the directional iteration framework establishes monotonicity~\cite{Tyson2010}. These results control finite step performance and objective values, respectively. Nakahira, Kato, and Usuda establish convergence for linearly independent pure ensembles~\cite{NakahiraKato2015}. Their later work explicitly distinguishes this theorem from the then unresolved general case~\cite{NakahiraUsudaKato2017}. L\"u and Dong subsequently treat arbitrary pure ensembles under a support condition on the initialization, together with a restricted class of mixed ensembles~\cite{LuDong2026}. Whether the arbitrarily initialized JRF iteration converges to a globally optimal measurement for every finite ensemble of mixed states, subject to a suitable initial support condition, has remained open. We resolve this open problem in this work.

A broader range of quantum information tasks also motivates extending the scope of these iterations. Examples range from entanglement detection through witnesses~\cite{Lewenstein2000} to the use of local filtering to concentrate entanglement~\cite{Verstraete2003} or reveal hidden Bell nonlocality~\cite{Hirsch2013}. Games with rewards and penalties provide operational benchmarks for quantum memories~\cite{Yuan2021}, while local energy optimization~\cite{Alhambra2019} suggests heralded extensions in quantum thermodynamics. These settings motivate related formulations in which the input states and tests are fixed and a single successful quantum operation is optimized. The resulting objectives can have cost matrices with both positive and negative eigenvalues, calling for an iteration that accommodates general Hermitian cost matrices.

We generalize the RW iteration to arbitrary Hermitian cost matrices under partial trace inequality constraints and establish a global convergence theorem for the resulting iteration. The theorem covers every initial input satisfying a sufficient support condition, including all positive definite inputs and admissible singular ones. Both the optimization variable and its constraint slack converge to an optimal pair, while the limiting normalizer determines the unique dual optimizer of the normalized problem. For positive semidefinite cost matrices with a positive definite partial trace, the result specializes to the original RW iteration. Global convergence of the JRF iteration then follows by a block diagonal specialization.

The proof certifies optimality by showing that, for every input satisfying the support condition, the limiting objective value matches a feasible dual upper bound. The duality gap therefore closes at the fixed point reached by the iteration.

We also quantify how many iterations are needed to approach the optimum. For a fixed problem, choice of iteration parameters, and admissible initialization, $\OCal(1/\varepsilon)$ iterations suffice asymptotically to bring the objective value within $\varepsilon$ of the optimum. In the best case, this bound improves to $\OCal(\log(1/\varepsilon))$. We also establish iteration bounds for bringing the certificate gap and matrix error below a prescribed tolerance $\varepsilon$.

Section~\ref{sec:optimization} presents the SDP formulation, its applications, and the random matrix setting of Figure~\ref{fig:gue}. Section~\ref{sec:iterations} introduces the generalized RW iteration and its specializations, and relates their computational structure to the runtime comparison in Figure~\ref{fig:timing}. Section~\ref{sec:convergence} proves global convergence, and Section~\ref{sec:complexity} establishes iteration bounds. Section~\ref{sec:discussion} discusses implications and open questions. Appendix~\ref{sec:diagonal-rates} derives geometric convergence for diagonal cost matrices and admissible diagonal seeds. Appendix~\ref{sec:slow-trajectories} constructs a family of slower iterations satisfying the support condition and derives sharp polynomial rates for the objective, certificate, and matrix errors, while Appendix~\ref{sec:numerics} gives the numerical simulation details.

\section{Quantum operations optimization}
\label{sec:optimization}
\subsection{Notation and semidefinite formulation}
We work throughout with finite dimensional complex Hilbert spaces and write $\lin(\HCal)$ for the linear space of linear maps from $\HCal$ to itself. We denote the dimensions of $\HCal_A$ and $\HCal_B$ by $d_A\coloneqq\dim\HCal_A$ and $d_B\coloneqq\dim\HCal_B$, respectively. Tensor factors are ordered as $A\otimes B$. The symbols $X^{\dagger}$ and $X^{\mathsf{T}}$ denote the adjoint and transpose of $X$ in the fixed basis, and $\tr_A$ traces out $A$. For Hermitian operators, $X\succeq0$ and $X\succ0$ denote positive semidefiniteness and positive definiteness. The symbols $\opnorm{X}$ and $\fnorm{X}$ denote the operator and Frobenius norms. Vector norms are Euclidean and are denoted by $\norm{\ket{x}}_2$. We use $\finner{U}{V}\coloneqq\tr(U^\dagger V)$, with its real part as the inner product on real tangent spaces. A trace or determinant with a subspace subscript is taken on that subspace. Square roots are positive square roots. For a singular matrix $A\succeq0$, $A^{-1}$ denotes the Moore--Penrose pseudoinverse.

For a fixed Hermitian cost matrix $C=C^\dagger\in\lin(\HCal_A\otimes\HCal_B)$ and a fixed positive semidefinite (PSD) constraint matrix $D\in\lin(\HCal_B)$, consider the primal SDP
\begin{equation}\label{eq:primal}
 \fopt(C,D)\coloneqq\max_{X\succeq0}\bigl\{\tr(CX)\,:\,\tr_A X\preceq D\bigr\}.
\end{equation}
The cost matrix $C$ and the constraint matrix $D$ have finite operator norms, $\opnorm{C}<\infty$ and $\opnorm{D}<\infty$. 

When $C\succeq0$, an optimizer can always be chosen to satisfy $\tr_A X=D$, although other optimizers may leave slack. 
For a Hermitian cost matrix $C$ that is not PSD, an optimizer need not saturate the constraint. Enforcing equality can therefore change the optimum. For example, if $D\ne0$ and $C=-I_{AB}\coloneqq-I_A\otimes I_B$, the inequality optimum is zero at $X=0$, whereas the equality optimum is $-\tr D<0$. Additionally, for Hermitian $C$, if the constraint in Eq.~\eqref{eq:primal} is replaced by the equality $\tr_A X = D$, an optimizer can be obtained by solving the shifted SDP with a sufficiently large $s$ satisfying $C+sI_{AB}\succ 0$. The optimal value under the equality constraint is then $\fopt(C+sI_{AB},D) - s\tr D$.

The dual problem of Eq.~\eqref{eq:primal}, with $Z\in\lin(\HCal_B)$, is
\begin{equation}\label{eq:dual-general}
 \inf_{Z\succeq0}\bigl\{\tr(DZ)\,:\,I_A\otimes Z\succeq C\bigr\}.
\end{equation}
Both constraints are needed for a Hermitian cost matrix. For primal and dual feasible points, weak duality follows from
\begin{equation}\label{eq:weak-duality}
\begin{aligned}
 \tr(DZ)-\tr(CX)
 &=\tr\bigl[Z(D-\tr_A X)\bigr]\\
 &\quad+\tr\bigl[(I_A\otimes Z-C)X\bigr]\ge0.
\end{aligned}
\end{equation}
Equality of primal and dual objectives therefore certifies optimality. The primal feasible set is compact because $\tr X\le\tr D$. Its optimum is attained and nonnegative because $X=0$ is feasible. The dual is strictly feasible at $Z=tI_B$ for $t>\opnorm{C}$, so Slater's theorem gives the same optimal value~\cite{BoydVandenberghe2004}. If $D$ is singular, the dual infimum need not be attained. Any attained dual optimizer is nonunique because adding a nonzero PSD operator supported on $\ker D$ preserves feasibility and the objective.

If $D=0$, only $X=0$ is feasible and $\fopt(C,D)=0$. Otherwise, $\tr_A X\preceq D$ forces every feasible $X$ to be supported on $\HCal_A\otimes\supp(D)$. We therefore replace $\HCal_B$ by $\supp(D)$, compress the cost matrix $C$ to $\HCal_A\otimes\HCal_B$, and restrict $D$ to the new $\HCal_B$. Extension by zero identifies the reduced and original primal feasible sets and preserves their objective values. On the reduced space $D\succ0$, so we define
\begin{subequations}\label{eq:constraint-transformation}
\begin{align}
 C_D&\coloneqq(I_A\otimes D^{1/2})C(I_A\otimes D^{1/2}),\label{eq:transformed-cost}\\
 X_D&\coloneqq(I_A\otimes D^{-1/2})X(I_A\otimes D^{-1/2}),\label{eq:transformed-variable}\\
 Z_D&\coloneqq D^{1/2}ZD^{1/2}\label{eq:transformed-dual}
\end{align}
\end{subequations}
These transformations are invertible and preserve positivity of $X$ and $Z$ on the reduced space. They convert the constraints to $\tr_A X_D\preceq I_B$ and $I_A\otimes Z_D\succeq C_D$, while $\tr(C_DX_D)=\tr(CX)$ and $\tr Z_D=\tr(DZ)$. The original and normalized primal problems therefore have the same optimum. Every normalized dual feasible point gives an upper bound on this optimum through the reduced dual. If $D$ is singular, extending the corresponding reduced dual matrix by zero need not satisfy the original dual constraint.

Thus, apart from the trivial case $D=0$, we set $D=I_B$ without loss of generality for the rest of this work and use $C,X,Z$ for the normalized quantities. The normalized dual is
\begin{equation}\label{eq:dual}
 \min_{Z\succeq0}\bigl\{\tr Z\,:\,I_A\otimes Z\succeq C\bigr\}.
\end{equation}
Both normalized problems are strictly feasible, so their optimal values agree and both are attained. The dual uniqueness assertion in Theorem~\ref{thm:global} concerns this normalized problem.

For the applications below, the input first Choi matrix of a linear map $\Phi\colon\lin(\HCal_A)\to\lin(\HCal_B)$ is
\begin{equation}\label{eq:choi}
 J(\Phi)\coloneqq\sum_{a,a'}\ket a\bra{a'}\otimes\Phi(\ket a\bra{a'}).
\end{equation}
For a completely positive trace preserving (CPTP) map, $J(\Phi)\succeq0$ and $\tr_B J(\Phi)=I_A$. The normalized Choi state is $J(\Phi)/d_A$. We use $J$ for the unnormalized matrix throughout. The adjoint map $\Phi^\dagger\colon\lin(\HCal_B)\to\lin(\HCal_A)$ is defined by $\finner{Y}{\Phi(X)}=\finner{\Phi^\dagger(Y)}{X}$ for $X\in\lin(\HCal_A)$ and $Y\in\lin(\HCal_B)$.

\subsection{Optimizing entanglement fidelity}
Entanglement fidelity quantifies how well an operation preserves an input state and its correlations with a reference system. Let $\rho$ be a density matrix on $\HCal_A$ and let $\ECal\colon\lin(\HCal_A)\to\lin(\HCal_B)$ be a noise channel with Kraus operators $E_\ell$. For a completely positive trace nonincreasing map $\RCal\colon\lin(\HCal_B)\to\lin(\HCal_A)$ with Kraus operators $R_j$, the entanglement fidelity is~\cite{Schumacher1996,FletcherShorWin2007}
\begin{equation}\label{eq:fidelity}
 \Fe(\rho,\RCal\circ\ECal)\coloneqq\sum_{j,\ell}\bigl|\tr(\rho R_jE_\ell)\bigr|^2.
\end{equation}
For a trace nonincreasing map, this expression includes the probability of success. With $\rho$ and $\ECal$ fixed, we seek
\begin{equation}\label{eq:recovery-optimum}
 \max_{\RCal}\Fe(\rho,\RCal\circ\ECal)
\end{equation}
over all such recovery maps.
Set $X_{\RCal}\coloneqq J(\RCal^\dagger)$. This matrix acts on $\HCal_A\otimes\HCal_B$ and satisfies $\tr_A X_{\RCal}=\RCal^\dagger(I_A)\preceq I_B$. Define
\begin{equation}\label{eq:fidelity-cost}
 C_{\rho,\ECal}\coloneqq(\rho^{\mathsf{T}}\otimes I_B)J(\ECal)
                         (\rho^{\mathsf{T}}\otimes I_B).
\end{equation}
A Kraus expansion gives $\Fe(\rho,\RCal\circ\ECal)=\tr(C_{\rho,\ECal}X_{\RCal})$, so Eq.~\eqref{eq:recovery-optimum} is precisely Eq.~\eqref{eq:primal}. The completion described there shows that a CPTP recovery map always attains the optimum. The weighted Choi operator $C_{\rho,\ECal}$ is PSD, and its partial trace over $A$ is
\begin{equation}\label{eq:fidelity-marginal}
 \tr_A C_{\rho,\ECal}=\ECal(\rho^2).
\end{equation}
Positive upper and lower bounds between $\rho$ and $\rho^2$ on their support imply $\ker\ECal(\rho)=\ker\ECal(\rho^2)$. Thus $\tr_A C_{\rho,\ECal}\succ0$ means that the noisy input state $\ECal(\rho)$ has full support on $\HCal_B$. This condition allows both $\rho$ and $C_{\rho,\ECal}$ to be singular. Otherwise the optimization can be restricted to $\supp(\ECal(\rho))$ without changing its optimum.

With $\RCal$ fixed, maximizing $\Fe(\rho,\RCal\circ\ECal)$ over $\ECal$ has the same SDP structure. In this case, the variable is $J(\ECal)$, the constraint is $\tr_B J(\ECal)\preceq I_A$ (the roles of the constrained input and traced output spaces are exchanged), and the PSD cost matrix is $(\rho^{\mathsf{T}}\otimes I_B)J(\RCal^\dagger)(\rho^{\mathsf{T}}\otimes I_B)$. The optimization of encoder and decoder subject to particular conditions plays a key role in approximate quantum error correction~\cite{Li2025Petz,LiJiang2026HighRank}.

\subsection{State discrimination and exclusion}
Consider an ensemble $\{(p_i,\rho_i)\}_{i=1}^m$ on $\HCal_B$, where $p_i>0$, $\sum_i p_i=1$, and each $\rho_i$ is a density matrix. A positive operator valued measure (POVM) has effects $M_i\succeq0$ with $\sum_i M_i=I_B$, and it identifies the state drawn from the ensemble with the success probability
\begin{equation}\label{eq:success-probability}
 \Psucc(\{(p_i,\rho_i)\}_i)\coloneqq\sum_{i=1}^{m} p_i\tr(\rho_iM_i),
\end{equation}
in which the dependence on the POVM is left implicit. We seek
\begin{equation}\label{eq:discrimination-optimum}
 \max_{\{M_i\}_i}\Psucc(\{(p_i,\rho_i)\}_i),
\end{equation}
over all POVMs $\{M_i\}_i$ on $\HCal_B$. Introduce a classical space $\HCal_A\coloneqq\Cbb^m$, define the weighted states $\sigma_i\coloneqq p_i\rho_i$, and set
\begin{align}\label{eq:cq-cost}
 C_{\mathrm{cq}}&\coloneqq\sum_i\ket i\bra i\otimes \sigma_i,\\
 X_M&\coloneqq\sum_i\ket i\bra i\otimes M_i.
\end{align}
The classical-to-quantum channel $\ECal_{\mathrm{cq}}$ encodes an alphabet of size $d_A=m$ by mapping each symbol $i$ to the state $\rho_i$ on $\HCal_B$. The cost matrix $C_{\mathrm{cq}}$ is its prior weighted Choi operator, which reduces to the normalized Choi state $J(\ECal_{\mathrm{cq}})/d_A$ for uniform priors.

The identity $\tr(C_{\mathrm{cq}}X_M)=\Psucc(\{(p_i,\rho_i)\}_i)$ relates state discrimination to Eq.~\eqref{eq:primal} with $D=I_B$. Here the cost matrix $C_{\mathrm{cq}}$ is PSD, and discarding off-diagonal classical blocks preserves feasibility and the objective. The completion argument following Eq.~\eqref{eq:primal}, applied to one diagonal block, therefore yields an optimal normalized POVM with $\tr_A X_M=\sum_iM_i=I_B$.

To satisfy the partial trace condition required for the original RW iteration, we assume that the ensemble average has full support,
\begin{equation}\label{eq:ensemble-marginal}
 \overline{\rho}\coloneqq\sum_i p_i\rho_i=\tr_A C_{\mathrm{cq}}\succ0.
\end{equation}
Since every state has a nonzero prior probability, this assumption means that the supports of the ensemble states together span $\HCal_B$. Individual states need not be full rank or linearly independent. If the ensemble average is singular, we can restrict the problem to its support without changing any discrimination probability.

Explicitly, $\ECal_{\mathrm{cq}}(\ket i\bra j)\coloneqq\delta_{ij}\rho_i$, so its Choi matrix is
\begin{equation}\label{eq:preparation-choi}
 J(\ECal_{\mathrm{cq}})=\sum_i\ket i\bra i\otimes\rho_i.
\end{equation}
If $\RCal_M$ is the measurement channel with effects $M_i$, then $X_M=J(\RCal_M^\dagger)$, consistent with the convention of the preceding subsection.

The same ensemble also defines minimum error state exclusion, in which outcome $i$ rules out state $\rho_i$ rather than identifying it. The pretty bad measurement provides a candidate POVM for this task~\cite{McIrvinMohanSikora2024}, whose exclusion error is $\sum_i\tr(\sigma_iM_i)$. When $m\ge2$, define the complementary cost matrix~\cite{BandyopadhyayJainOppenheimPerry2014}
\begin{equation}\label{eq:exclusion-cost}
 C_{\mathrm{excl}}\coloneqq\frac{I_A\otimes\overline{\rho}-C_{\mathrm{cq}}}{m-1}\succeq0.
\end{equation}
For a normalized POVM, its objective is
\begin{equation}\label{eq:exclusion-objective}
 \tr(C_{\mathrm{excl}}X_M)=\frac{1-\sum_i\tr(\sigma_iM_i)}{m-1}.
\end{equation}
As in state discrimination, discarding off-diagonal classical blocks and completing the POVM preserves optimality. Hence Eq.~\eqref{eq:primal} with $C=C_{\mathrm{excl}}$ gives the minimum exclusion error as $1-(m-1)\fopt$. Moreover, $\tr_A C_{\mathrm{excl}}=\overline{\rho}\succ0$ on the support of the ensemble average, so the original RW iteration applies to this PSD cost matrix.

\subsection{Further applications}
\label{sec:further-applications}
Coherence quantification and local energy extraction illustrate the range of quantum information tasks covered by this SDP framework.

For any density matrix $\rho$ on $\HCal_B$, including singular states, the robustness of coherence $R_{\mathrm{coh}}(\rho)$~\cite{Napoli2016} is the minimum weight of arbitrary noise needed to make its mixture with $\rho$ diagonal in a fixed orthonormal basis $\{\ket i_B\}_i$. To express this quantity as an SDP, let $\HCal_A$ be a copy of $\HCal_B$ and define the isometry $U_{\mathrm{coh}}\ket i_B\coloneqq\ket i_A\otimes\ket i_B$. Writing $\rho_{\mathrm{off}}$ for $\rho$ with its diagonal entries set to zero, we define the Hermitian cost matrix $C_{\mathrm{off}}\coloneqq U_{\mathrm{coh}}\rho_{\mathrm{off}}U_{\mathrm{coh}}^\dagger$. With $\tau$ ranging over density matrices on $\HCal_B$, the robustness satisfies
\begin{subequations}\label{eq:coherence-robustness}
\begin{align}
 R_{\mathrm{coh}}(\rho)
 &\coloneqq\min_{\lambda\ge0,\,\tau}
 \left\{\lambda:\frac{\rho+\lambda\tau}{1+\lambda}\text{ is diagonal}\right\}
 \label{eq:coherence-definition}\\
 &=\min_{\Delta\succeq\rho_{\mathrm{off}}}\bigl\{\tr\Delta:\Delta\text{ is diagonal}\bigr\}
 \label{eq:coherence-majorant}\\
 &=\min_{Z\succeq0}
 \bigl\{\tr Z:I_A\otimes Z\succeq C_{\mathrm{off}}\bigr\}.
 \label{eq:coherence-dual}
\end{align}
\end{subequations}
The substitution $\Delta\coloneqq\rho_{\mathrm{off}}+\lambda\tau$ gives Eq.~\eqref{eq:coherence-majorant}, with $\tr\Delta=\lambda$. Compressing the constraint in Eq.~\eqref{eq:coherence-dual} by $U_{\mathrm{coh}}$ yields a diagonal majorant of $\rho_{\mathrm{off}}$ with trace $\tr Z$. Conversely, every such majorant $\Delta$ is PSD because $\rho_{\mathrm{off}}$ has zero diagonal, and the inequality $I_A\otimes\Delta\succeq U_{\mathrm{coh}}\Delta U_{\mathrm{coh}}^\dagger\succeq C_{\mathrm{off}}$ makes $Z=\Delta$ feasible. This identifies Eq.~\eqref{eq:coherence-dual} with Eq.~\eqref{eq:dual-general} for $D=I_B$ and establishes $R_{\mathrm{coh}}(\rho)=\fopt(C_{\mathrm{off}},I_B)$.

Local energy extraction provides a complementary application, in which the decrease in total energy of a fixed bipartite state is optimized by acting on one subsystem~\cite{Alhambra2019}. For a state $\rho_{BR}$ with Hamiltonian $H_{BR}$, let $\HCal_A$ be a copy of $\HCal_B$ and denote the relabeled Hamiltonian by $H_{AR}$. A local CPTP map $\ECal\colon\lin(\HCal_B)\to\lin(\HCal_A)$ produces $\omega_{\ECal}\coloneqq(\ECal\otimes\id_R)(\rho_{BR})$, so the maximum energy decrease is
\begin{equation}\label{eq:energy-extraction}
 \max_{\ECal}\bigl\{\tr(H_{BR}\rho_{BR})-\tr(H_{AR}\omega_{\ECal})\bigr\}.
\end{equation}
This task admits a heralded extension in which $\ECal$ is completely positive and trace nonincreasing (CPTNI). Writing $e_0\coloneqq\tr(H_{BR}\rho_{BR})$, the payoff becomes $e_0\tr\omega_{\ECal}-\tr(H_{AR}\omega_{\ECal})$. It weights the conditional energy decrease by the success probability and assigns zero payoff to unsuccessful outcomes. Such probability weighting also appears in the specific measurement protocols of Ref.~\cite[Eq.~(9)]{Chaki2025}. Both scores quantify energy reduction without including the energy required to implement the operation.

To connect these formulations to the SDP framework, set $X\coloneqq J(\ECal^\dagger)$ and define the Hermitian cost matrix $C$ by $\tr(CX)=e_0\tr\omega_{\ECal}-\tr(H_{AR}\omega_{\ECal})$ for every $\ECal$. The CPTP constraint is $\tr_A X=I_B$, so the positive shift reduction described after Eq.~\eqref{eq:primal} applies. The heralded version instead satisfies $\tr_A X\preceq I_B$ and is precisely Eq.~\eqref{eq:primal} with $D=I_B$ and a possibly indefinite cost matrix.

To complement these structured applications, Figure~\ref{fig:gue} displays iteration count statistics for the generalized RW iteration with Hermitian cost matrices sampled from the Gaussian unitary ensemble (GUE), with $d_A=d_B=d$. We use Wigner normalization~\cite{Wigner1958}, so that $\mathbb{E}(C^2)=I_{AB}$. Specifically, the diagonal entries are independent centered real Gaussians with variance $1/d^2$, while the real and imaginary parts of the independent upper triangular entries have variance $1/(2d^2)$. Hermitian symmetry determines the remaining entries. Appendix~\ref{sec:gue-details} specifies the sampling, initialization, and error measures.

\section{Iterative algorithms}
\label{sec:iterations}
The optimization problems in the preceding section maximize a linear objective over PSD matrices subject to a partial trace constraint. For a PSD cost matrix, the original RW iteration multiplies the current matrix on both sides by the cost matrix and then normalizes the result to satisfy that constraint with equality. We first generalize this construction to Hermitian cost matrices, then recover the original RW iteration and its JRF specialization for state discrimination.

To make the cost matrix PSD, we introduce a fixed scalar $s\ge0$ and define the shifted cost matrix $\widetilde C\coloneqq C+sI_{AB}$. A sufficiently large shift achieves positivity, but the additional objective term $s\tr X$ can change the optimizer because the inequality constraint does not fix $\tr X$.

To preserve the original optimization problem, we introduce a slack matrix $S\succeq0$ on $\HCal_B$ and use the equivalent constraint~\cite[Eqs.~(3.20)--(3.21)]{Reimpell2008}
\begin{equation}\label{eq:slack-reformulation}
 \tr_A X+S=I_B.
\end{equation}
Every feasible $X$ determines the unique slack $S=I_B-\tr_A X$. Taking the trace gives $\tr X+\tr S=d_B$, so the augmented objective satisfies
\begin{equation}\label{eq:shifted-objective}
\begin{split}
 \tr(\widetilde C X)+s\tr S
 &=\tr(CX)+s(\tr X+\tr S)\\
 &=\tr(CX)+s d_B.
\end{split}
\end{equation}
This objective therefore has exactly the same optimizing matrices $X$. Related spectral shifts for indefinite quadratic objectives appear in fixed norm optimization~\cite[Example~6]{Journee2010} and in factored SDPs with fixed diagonal~\cite{AktasKroer2026}. Here the slack matrix makes the shift compatible with the original inequality constraint. The two cost matrices, $\widetilde C$ and $sI_B$, lead to coupled RW updates for $X$ and $S$, with a common normalizer enforcing Eq.~\eqref{eq:slack-reformulation}.

For this coupled construction, we call the shift \emph{admissible} when
\begin{equation}\label{eq:admissible-shift}
 s\ge0,\qquad \widetilde C\succeq0,\qquad
 \tr_A\widetilde C+sI_B\succ0.
\end{equation}
Equivalently, $s\ge\max\{0,-\lambda_{\min}(C)\}$ and either $s>0$ or $\tr_A C\succ0$. The strict choice $s>\max\{0,-\lambda_{\min}(C)\}$ always suffices, although the displayed conditions also allow a singular $\widetilde C$.

To state the recursion, we use parenthesized superscripts for successive iterates, with ``$(0)$'' denoting the initial value and ``$(\infty)$'' the limit when it exists. From a general seed pair $X^{(0)},S^{(0)}\succeq0$, the generalized RW iteration is defined by
\begin{equation}\label{eq:rw}
\left\{
\begin{aligned}
 Y^{(k)}&\coloneqq\Bigl[\tr_A(\widetilde C X^{(k)}\widetilde C)+s^2S^{(k)}\Bigr]^{1/2},\\
 X^{(k+1)}&\coloneqq(I_A\otimes Y^{(k)})^{-1}\widetilde C X^{(k)}\widetilde C(I_A\otimes Y^{(k)})^{-1},\\
 S^{(k+1)}&\coloneqq s^2\bigl(Y^{(k)}\bigr)^{-1}S^{(k)}\bigl(Y^{(k)}\bigr)^{-1},
\end{aligned}
\right.
\end{equation}
provided $Y^{(k)}\succ0$. We call $Y^{(k)}$ the normalizer because it ensures that the updated PSD matrices satisfy
\begin{equation}\label{eq:pair-normalization}
\begin{split}
 &\tr_A\bigl(X^{(k+1)}\bigr)+S^{(k+1)}\\
 &\qquad=\bigl(Y^{(k)}\bigr)^{-1}\bigl(Y^{(k)}\bigr)^2\bigl(Y^{(k)}\bigr)^{-1}=I_B.
\end{split}
\end{equation}
Consequently, $X^{(k)}$ is feasible for the original inequality problem, $S^{(k)}$ is its constraint slack, and Eq.~\eqref{eq:shifted-objective} applies for $k\ge1$. The initial pair need not be normalized, but the same identity holds at $k=0$ if it is.

Hermitian measurement payoff operators are made PSD by a common operator shift in~\cite{NakahiraUsudaKato2017}. With no additional scalar constraints, we recover the block diagonal specialization of Eq.~\eqref{eq:rw} from the measurement update formula in that work by adjoining an outcome $S$ with zero payoff and shifting every payoff by $sI_B$.

For a current iterate pair $(X^{(k)},S^{(k)})$ that is positive definite and normalized, and a shift $s>\max\{0,-\lambda_{\min}(C)\}$, Eq.~\eqref{eq:rw} can be treated as combining the ascent form of the Bures--Wasserstein gradient step~\cite{Fan2024} with the Bures projection onto a fixed partial trace constraint~\cite[Cor.~14]{AfhamTomamichel2026}, applied after the slack augmentation in Eq.~\eqref{eq:slack-reformulation}.

The normalization ensures feasibility after each defined step, while our convergence guarantee also uses a support condition on the initial pair. Each update is a congruence, so $\rank X^{(k+1)}\le\rank X^{(k)}$, and its limit cannot be optimal if every optimizer has rank greater than $\rank X^{(0)}$. Our sufficient conditions, which guarantee convergence to a global optimum, are
\begin{equation}\label{eq:support-condition}
 \begin{aligned}
 \ker X^{(0)}\cap\supp(\widetilde C)&=\{0\},\\
 \ker S^{(0)}\cap\supp(sI_B)&=\{0\}.
 \end{aligned}
\end{equation}
Any $X^{(0)},S^{(0)}\succ0$ fulfill the above conditions, and the first condition also allows singular $X^{(0)}$ when $\widetilde C$ is singular. The second requires $S^{(0)}\succ0$ when $s>0$ and is vacuous when $s=0$. At a positive shift, setting $S^{(0)}=0$ would keep the slack zero forever and could exclude the inequality optimum. One convenient normalized choice is $X^{(0)}\coloneqq I_{AB}/(d_A+1)$ and $S^{(0)}\coloneqq I_B/(d_A+1)$. Under an admissible shift and Eq.~\eqref{eq:support-condition}, Section~\ref{sec:convergence} proves that every normalizer has an ordinary inverse and that the entire sequence converges to an optimum. These support conditions need not be necessary.

The original RW iteration~\cite{ReimpellWerner2005,Reimpell2008} is a special case of the generalized RW iteration in Eq.~\eqref{eq:rw}, obtained by taking $s=0$ when $C\succeq0$ and $\tr_A C\succ0$. Its recursion from a general PSD seed $X^{(0)}$ is
\begin{equation}\label{eq:rw-psd}
\left\{
\begin{aligned}
 Y^{(k)}&\coloneqq\bigl[\tr_A(CX^{(k)}C)\bigr]^{1/2},\\
 X^{(k+1)}&\coloneqq(I_A\otimes Y^{(k)})^{-1}CX^{(k)}C(I_A\otimes Y^{(k)})^{-1},
\end{aligned}
\right.
\end{equation}
at each step for which $Y^{(k)}\succ0$. No slack matrix is needed, and each defined iterate satisfies $\tr_A X^{(k+1)}=I_B$. Its sufficient support condition is $\ker X^{(0)}\cap\supp(C)=\{0\}$, the zero-shift specialization of Eq.~\eqref{eq:support-condition}. Multiplying the PSD cost matrix or its seed by a positive scalar leaves the first normalized iterate unchanged.

\begin{figure*}[t]
 \centering
 \includegraphics[width=\textwidth]{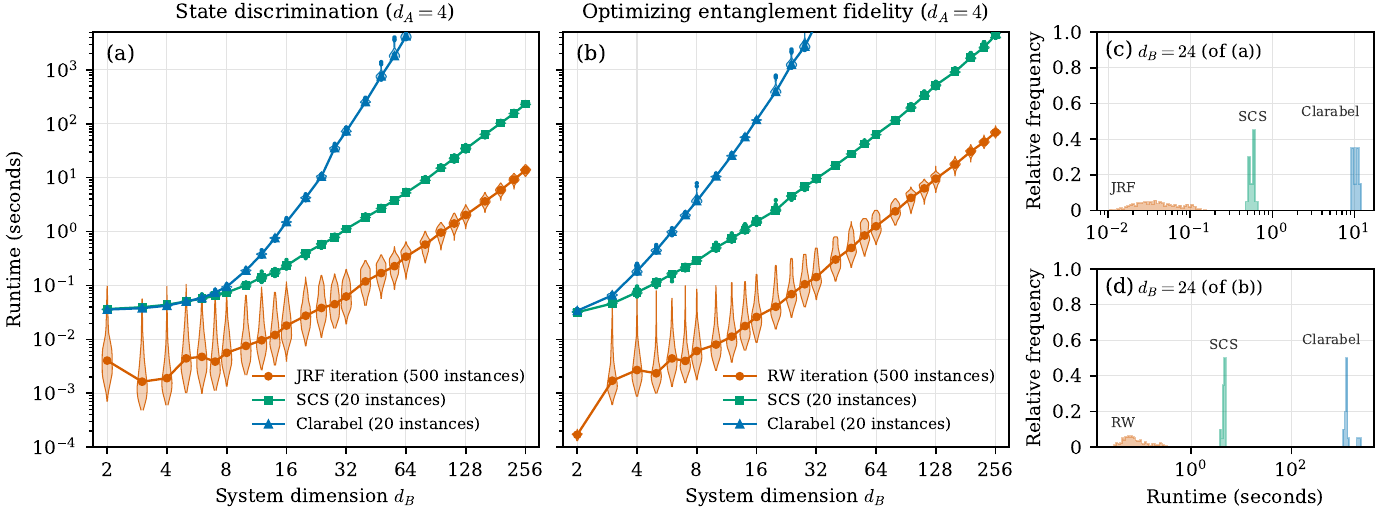}
 \caption{Runtime at fixed $d_A=4$ and objective error at most $10^{-7}$ for (a) state discrimination and (b) entanglement fidelity optimization. Accuracy is assessed against a precomputed numerical dual upper bound whose construction is excluded from the timings. Violins show the density of logarithmic runtime, each scaled to the same maximum width, and connected markers show medians. Each group contains $500$ instances for the original JRF or RW iteration and $20$ for SCS (Splitting Conic Solver) or Clarabel. Individual solver runtimes are also marked. Panels (c) and (d) show the corresponding runtime histograms at $d_B=24$, with relative frequencies normalized separately for each method. Sampling and timing conventions are given in Appendix~\ref{sec:runtime-details}.}
 \label{fig:timing}
\end{figure*}

\begin{figure*}[t]
 \centering
 \includegraphics[width=\textwidth]{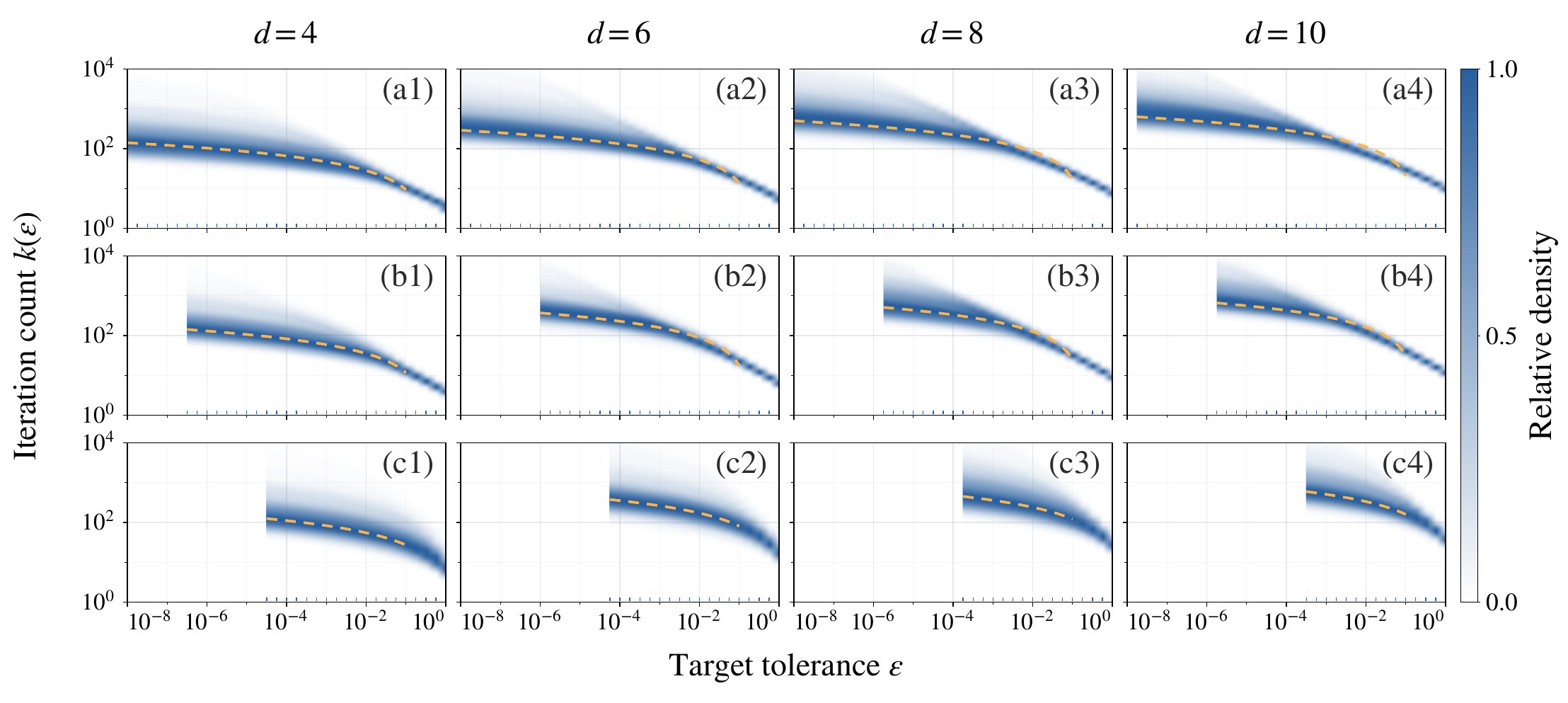}
 \caption{Iteration counts of the generalized RW iteration for Hermitian cost matrices drawn from the GUE, with $d_A=d_B=d$ and $500$ independent instances at each dimension. For a target tolerance $\varepsilon$, $k(\varepsilon)$ is the first iteration at which the error measure of the row, defined in Appendix~\ref{sec:gue-details}, is at most $\varepsilon$. The rows show, from top to bottom, objective discrepancies relative to the stopping value $f^{\mathrm{stop}}$, numerical certificate gaps, and empirical Frobenius distances to refined numerical references. Color shows the kernel density estimate of $\log_{10}k(\varepsilon)$ over the instances, normalized to its peak at each measured target, and ticks along the bottom edge of each panel mark the measured targets. Yellow dashed curves show fits of $c_1\log(1/\varepsilon)+c_2$ to the density peaks, with $c_1$ and $c_2$ fitted separately in each panel.}
 \label{fig:gue}
\end{figure*}

For state discrimination with weighted states $\sigma_i$ as in Eq.~\eqref{eq:cq-cost}, the JRF iteration~\cite{Jezek2002} starts from a seed family $\{M_i^{(0)}\}_{i=1}^m$ with $M_i^{(0)}\succeq0$ and is defined by
\begin{equation}\label{eq:jrf}
\left\{
\begin{aligned}
 Y^{(k)}&\coloneqq\Bigl(\sum_i\sigma_iM_i^{(k)}\sigma_i\Bigr)^{1/2},\\
 M_i^{(k+1)}&\coloneqq \bigl(Y^{(k)}\bigr)^{-1}\sigma_iM_i^{(k)}\sigma_i\bigl(Y^{(k)}\bigr)^{-1},
\end{aligned}
\right.
\end{equation}
as long as $Y^{(k)}\succ0$. Its effects satisfy $M_i^{(k+1)}\succeq0$ and $\sum_iM_i^{(k+1)}=I_B$. The seed family need not be normalized. Seed matrices with $\ker M_i^{(0)}\cap\supp(\sigma_i)=\{0\}$, among them all positive definite ones, together with $\sum_i\sigma_i\succ0$ ensure that every step is defined, by Corollary~\ref{cor:jrf}. Convenient normalized identity seeds for the original algorithms are $X^{(0)}\coloneqq I_{AB}/d_A$ and $M_i^{(0)}\coloneqq I_B/m$.

\begin{theorem}[JRF as a block diagonal RW iteration]\label{thm:jrf-rw}
Let $C_{\mathrm{cq}}$ be the classical-to-quantum cost matrix in Eq.~\eqref{eq:cq-cost}, and let
\begin{equation}\label{eq:jrf-seed}
 X^{(0)}\coloneqq\sum_i\ket i\bra i\otimes M_i^{(0)}.
\end{equation}
Then the original RW iteration in Eq.~\eqref{eq:rw-psd} with cost matrix $C=C_{\mathrm{cq}}$ remains block diagonal,
\begin{equation}\label{eq:jrf-block-diagonal}
 X^{(k)}=\sum_i\ket i\bra i\otimes M_i^{(k)},
\end{equation}
and its blocks evolve exactly by the JRF iteration in Eq.~\eqref{eq:jrf}.
\end{theorem}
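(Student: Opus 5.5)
The plan is induction on $k$. The inductive claim is that $X^{(k)}=\sum_i\ket i\bra i\otimes M_i^{(k)}$, where the blocks $M_i^{(k)}$ are the JRF iterates of Eq.~\eqref{eq:jrf}, and that the two normalizers $Y^{(k)}$ in Eqs.~\eqref{eq:rw-psd} and~\eqref{eq:jrf} coincide. The base case $k=0$ is exactly the seed choice in Eq.~\eqref{eq:jrf-seed}. For the inductive step, assume the block diagonal form at step $k$ and compute each ingredient of the RW update in turn.

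\emph{Step 1: the congruence.} Since $C_{\mathrm{cq}}=\sum_i\ket i\bra i\otimes\sigma_i$, the products of orthogonal projectors $\ket i\bra i\,\ket j\bra j=\delta_{ij}\ket i\bra i$ give
\begin{equation*}
 C_{\mathrm{cq}}X^{(k)}C_{\mathrm{cq}}=\sum_i\ket i\bra i\otimes\sigma_iM_i^{(k)}\sigma_i .
\end{equation*}

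\emph{Step 2: the normalizer.} Taking $\tr_A$ of this expression yields $\sum_i\sigma_iM_i^{(k)}\sigma_i$. The RW normalizer $Y^{(k)}$ is therefore exactly the JRF normalizer, and the invertibility condition $Y^{(k)}\succ0$ is the same in both iterations. Hence one recursion is defined at step $k$ precisely when the other is.

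\emph{Step 3: the update.} The operator $(I_A\otimes Y^{(k)})^{-1}=\sum_j\ket j\bra j\otimes(Y^{(k)})^{-1}$ is block diagonal with identical blocks. Conjugating the expression from Step 1 by it acts blockwise and gives
\begin{equation*}
 X^{(k+1)}=\sum_i\ket i\bra i\otimes(Y^{(k)})^{-1}\sigma_iM_i^{(k)}\sigma_i(Y^{(k)})^{-1},
\end{equation*}
whose blocks are exactly $M_i^{(k+1)}$ from Eq.~\eqref{eq:jrf}. This closes the induction.

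No step is a genuine obstacle; the argument is routine block arithmetic. The only point needing care is the bookkeeping of when the iterations are defined. Both recursions are stated conditionally on $Y^{(k)}\succ0$, so the theorem should be read as holding for every $k$ up to the first step at which the shared normalizer becomes singular. Step 2 guarantees that this first failure occurs at the same $k$ for both iterations, so the correspondence covers their entire common domain.
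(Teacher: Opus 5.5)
Your proposal is correct and follows essentially the same route as the paper. Both compute $C_{\mathrm{cq}}X^{(k)}C_{\mathrm{cq}}$ blockwise, identify its partial trace with the JRF normalizer, and substitute into the RW update. Your explicit induction and your remark that the two recursions cease to be defined at the same step are harmless refinements that the paper leaves implicit.
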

\begin{proof}
This correspondence also appears in the directional iteration framework~\cite[Sec.~3.1 and Appendix~B]{Tyson2010}. Direct block multiplication gives
\begin{equation}\label{eq:jrf-block}
 C_{\mathrm{cq}}X^{(k)}C_{\mathrm{cq}}
 =\sum_i\ket i\bra i\otimes \sigma_iM_i^{(k)}\sigma_i .
\end{equation}
Taking the partial trace over $A$ and then the positive square root gives
\begin{equation}\label{eq:jrf-normalizer}
 Y^{(k)}=\Bigl(\sum_i\sigma_iM_i^{(k)}\sigma_i\Bigr)^{1/2}.
\end{equation}
Substitution into Eq.~\eqref{eq:rw-psd} gives
\begin{equation}\label{eq:jrf-block-update}
 X^{(k+1)}=\sum_i\ket i\bra i\otimes\bigl(Y^{(k)}\bigr)^{-1}\sigma_iM_i^{(k)}\sigma_i\bigl(Y^{(k)}\bigr)^{-1}.
\end{equation}
This is exactly the block form of Eq.~\eqref{eq:jrf}.
\end{proof}

The JRF iteration is also related to the pretty good measurement (PGM)~\cite{HausladenWootters1994,Tyson2009} and to its generalizations~\cite{Mochon2006,TysonWeighted2009,AudenaertMosonyi2014,ZhouChessaChitambarLeditzky2025}. For a real exponent $\alpha$, the power PGM~\cite{AudenaertMosonyi2014,ZhouChessaChitambarLeditzky2025} has effects
\begin{equation}\label{eq:pgm}
 M_{i,\alpha}^{\mathrm{PGM}}\coloneqq\Bigl(\sum_j\sigma_j^{\alpha}\Bigr)^{-\frac{1}{2}}\sigma_i^{\alpha}\Bigl(\sum_j\sigma_j^{\alpha}\Bigr)^{-\frac{1}{2}},
\end{equation}
where a power of a singular $\sigma_i$ is taken on its support, and $\alpha=1$ gives the PGM. One JRF step from the seed $M_i^{(0)}\coloneqq\sigma_i^{\alpha-2}$, that is, $X^{(0)}=C_{\mathrm{cq}}^{\alpha-2}$ in the RW iteration, generates $M_i^{(1)}=M_{i,\alpha}^{\mathrm{PGM}}$, because $\sigma_iM_i^{(0)}\sigma_i=\sigma_i^{\alpha}$. This seed satisfies the zero-shift support condition of Eq.~\eqref{eq:support-condition} for every $\alpha$, since $\ker\sigma_i^{\alpha-2}=\ker\sigma_i$ meets $\supp(\sigma_i)$ only in zero. For $\alpha=2$, the uniform initialization $M_i^{(0)}=I_B/m$ yields the same first iterate, which is the quadratically weighted measurement~\cite{Tyson2009,TysonWeighted2009}.

The complementary cost matrix $C_{\mathrm{excl}}$ in Eq.~\eqref{eq:exclusion-cost} has diagonal blocks $\tau_i\coloneqq(\overline{\rho}-\sigma_i)/(m-1)$. The seed $X^{(0)}\coloneqq C_{\mathrm{excl}}^{-1}$, equivalently $M_i^{(0)}=\tau_i^{-1}$ in the corresponding JRF iteration, gives $Y^{(0)}=\overline{\rho}^{1/2}$ and $M_i^{(1)}=(I_B-M_{i,1}^{\mathrm{PGM}})/(m-1)$, which is the pretty bad measurement~\cite{McIrvinMohanSikora2024}. The inverse convention also covers singular $\tau_i$, since $\ker\tau_i^{-1}=\ker\tau_i$ ensures the required support condition.

With standard dense matrix arithmetic, each generalized RW iteration requires $\OCal(d_A^3d_B^3)$ arithmetic operations and storage for $\OCal(d_A^2d_B^2+d_B^2)$ scalar entries when $X^{(k)}$ and $S^{(k)}$ are stored separately. At zero shift, the original RW iteration omits the slack matrix, retaining the same arithmetic order and requiring $\OCal(d_A^2d_B^2)$ storage. The JRF iteration handles the $m=d_A$ effects separately, requiring $\OCal(d_A d_B^3)$ arithmetic operations and $\OCal(d_A d_B^2)$ storage per step.

These operation counts reflect the structure exploited by the original RW and JRF iterations. Their steps preserve positivity by congruence and restore normalization through a $d_B\times d_B$ matrix, requiring matrix products and a spectral decomposition of the normalizer. This avoids the semidefinite cone projections used by SCS~\cite{ODonoghue2016} and the search direction linear systems solved by interior point methods such as Clarabel~\cite{GoulartChen2026}. Direct implementation also avoids constructing a general conic model. Under the timing conventions of Appendix~\ref{sec:runtime-details}, the tested RW and JRF implementations have lower median runtimes than the CVXPY implementations using these solvers at every shared dimension in Figure~\ref{fig:timing}. The comparison uses a common objective accuracy target assessed against a precomputed dual reference. Since total work also depends on the iteration number, these finite comparisons do not establish an asymptotic exponent. The following section establishes convergence before Section~\ref{sec:complexity} derives iteration bounds.

\section{Global Convergence}
\label{sec:convergence}

We prove convergence of the generalized RW iteration by adjoining the slack matrix to the original variable. Throughout this section, $C=C^\dagger$, the shift $s\ge0$ satisfies Eq.~\eqref{eq:admissible-shift}, and the positive semidefinite seed pair satisfies Eq.~\eqref{eq:support-condition}. We represent this augmented formulation by
\begin{subequations}\label{eq:augmented-data}
\begin{align}
 \HCal_{A_+}&\coloneqq\HCal_A\oplus\Cbb,
 \qquad d_{A_+}\coloneqq d_A+1,\label{eq:augmented-space}\\
 C_+&\coloneqq\widetilde C\oplus sI_B,
 \qquad X_+^{(k)}\coloneqq X^{(k)}\oplus S^{(k)}.\label{eq:augmented-matrices}
\end{align}
\end{subequations}
These matrices act on $\HCal_{A_+}\otimes\HCal_B\cong(\HCal_A\otimes\HCal_B)\oplus\HCal_B$. The shift and support conditions give
\begin{equation}\label{eq:augmented-hypotheses}
\begin{split}
 C_+&\succeq0,\qquad
 \Omega_+\coloneqq\tr_{A_+}C_+\succ0,\\
 \ker X_+^{(0)}&\cap\supp(C_+)=\{0\}.
\end{split}
\end{equation}
Indeed, $\Omega_+=\tr_A\widetilde C+sI_B$, and the last condition separates into the two support conditions in Eq.~\eqref{eq:support-condition}. The proof first establishes convergence on the support of $C_+$, then excludes a singular limiting normalizer, and finally constructs a dual certificate for the original Hermitian problem.

\subsection{Support and polar normalization}

Assuming that the required inverse exists, the recursive Gram factors and their relation to the iterates are given by
\begin{subequations}\label{eq:factor-update}
\begin{align}
 G^{(0)}&\coloneqq \bigl(X_+^{(0)}\bigr)^{1/2},\label{eq:factor-init}\\
 G^{(k+1)}&\coloneqq G^{(k)}C_+(I_{A_+}\otimes \bigl(Y^{(k)}\bigr)^{-1}),\label{eq:factor-recursion}\\
 X_+^{(k)}&= \bigl(G^{(k)}\bigr)^\dagger G^{(k)}.\label{eq:factor-product}
\end{align}
\end{subequations}
By Eq.~\eqref{eq:factor-product}, $X_+^{(k)}$ is the Gram matrix of the columns of $G^{(k)}$, and we accordingly refer to $G^{(k)}$ as a Gram factor of $X_+^{(k)}$.
Let $\KCal\coloneqq\supp(C_+)\subseteq\HCal_{A_+}\otimes\HCal_B$, and let $\Pi_{\KCal}$ be its orthogonal projector.
Define
\begin{subequations}\label{eq:compressed}
\begin{align}
 \Gamma^{(k)}&\coloneqq C_+^{1/2}\bigl(G^{(k)}\bigr)^\dagger,\label{eq:compressed-w}\\
 H^{(k)}&\coloneqq\Pi_{\KCal} C_+^{1/2}(I_{A_+}\otimes \bigl(Y^{(k)}\bigr)^{-1})C_+^{1/2}\Pi_{\KCal}.\label{eq:compressed-h}
\end{align}
\end{subequations}
Here $\Gamma^{(k)}\colon\Cbb^{d_{A_+} d_B}\to \KCal$, and we regard the displayed projector sandwich as an operator $H^{(k)}\colon\KCal\to\KCal$. Positivity and invertibility of $H^{(k)}$ always refer to this support space. Its extension to the full tensor product vanishes on $\KCal^\perp$.
The following induction justifies all ordinary inverses in these formulas.

The last condition in Eq.~\eqref{eq:augmented-hypotheses} makes $\Gamma^{(0)}$ onto $\KCal$. Indeed, $\Gamma^{(0)}$ is onto $\KCal$ if and only if its adjoint $G^{(0)}C_+^{1/2}$ is injective on $\KCal$, that is, if and only if $C_+^{1/2}\ket{v}\notin\ker G^{(0)}=\ker X_+^{(0)}$ for every nonzero $\ket{v}\in\KCal$. Since $C_+^{1/2}$ maps $\KCal$ bijectively onto itself, this is precisely the support condition in Eq.~\eqref{eq:augmented-hypotheses}, which is therefore also equivalent to $\Gamma^{(0)}\bigl(\Gamma^{(0)}\bigr)^\dagger=C_+^{1/2}X_+^{(0)}C_+^{1/2}\succ0$ on $\KCal$.
Suppose $\Gamma^{(k)}$ is onto and all preceding factors are defined. Then $C_+X_+^{(k)}C_+=C_+^{1/2}\Gamma^{(k)}\bigl(\Gamma^{(k)}\bigr)^\dagger C_+^{1/2}$ has kernel $\ker C_+$. Because $\ker Y^{(k)}=\ker\bigl(Y^{(k)}\bigr)^2$, positivity implies that $\ket{b}\in\ker Y^{(k)}$ can hold only if $\HCal_{A_+}\otimes\operatorname{span}\{\ket{b}\}\subseteq\ker(C_+X_+^{(k)}C_+)=\ker C_+$. Taking the partial trace gives $\ket{b}\in\ker\Omega_+=\{0\}$. Hence $Y^{(k)}\succ0$, and $H^{(k)}\succ0$ on $\KCal$.
The next factors are therefore defined, and
$\Gamma^{(k+1)}=H^{(k)}\Gamma^{(k)}$ remains onto.
This induction proves invertibility of every normalizer. Block multiplication in Eq.~\eqref{eq:factor-recursion} preserves $X_+^{(k)}=X^{(k)}\oplus S^{(k)}$ and gives exactly Eq.~\eqref{eq:rw}. In particular,
\begin{equation}\label{eq:augmented-update}
\begin{split}
 Y^{(k)}&=\bigl[\tr_{A_+}(C_+X_+^{(k)}C_+)\bigr]^{1/2},\\
 X_+^{(k+1)}&=(I_{A_+}\otimes Y^{(k)})^{-1}C_+X_+^{(k)}C_+\\
 &\qquad\times(I_{A_+}\otimes Y^{(k)})^{-1}.
\end{split}
\end{equation}
It follows that $\tr_{A_+}X_+^{(k)}=I_B$ for $k\ge1$, which is the pair normalization in Eq.~\eqref{eq:pair-normalization}.

The reshaping $\hat{\Theta}$ is defined by the following Penrose diagram and equivalent algebraic expression.
\begin{equation}
\begin{aligned}
 \hat{\Theta}(G)&=\hat{\Theta}\left(\,
 \begin{tikzpicture}[penrose,baseline={([yshift=-.5ex]G.center)}]
  \node[tensor] (G) {$G$};
  \draw ([yshift=2.1mm]G.west) -- ++(-3mm,0) node[left] {$A_+$};
  \draw ([yshift=-2.1mm]G.west) -- ++(-3mm,0) node[left] {$B$};
  \draw ([yshift=2.1mm]G.east) -- ++(3mm,0) node[right] {$A_+'$};
  \draw ([yshift=-2.1mm]G.east) -- ++(3mm,0) node[right] {$B'$};
 \end{tikzpicture}
 \,\right)\\
 &\coloneqq\;
 \begin{tikzpicture}[penrose,baseline={([yshift=-.5ex]G.center)}]
  \node[tensor] (G) {$G$};
  \draw ([yshift=2.1mm]G.west) -- ++(-3mm,0) node[left] {$A_+$};
  \draw ([yshift=-2.1mm]G.west) -- ++(-3mm,0) node[left] {$B$};
  \draw ([yshift=-2.1mm]G.east) -- ++(3mm,0) node[right] {$B'$};
  \draw[rounded corners=2.4mm] ([yshift=2.1mm]G.east) -- ++(3.5mm,0)
   |- ([xshift=-3mm,yshift=3.6mm]G.north west) node[left] {$A_+''$};
 \end{tikzpicture}\\
 &=(I_{A_+''}\otimes G)\bigl(\dket{I}_{A_+''A_+'}\otimes I_{B'}\bigr).
\end{aligned}
\label{eq:reshaping}
\end{equation}
Here $\dket{I}_{A_+''A_+'}\coloneqq\sum_{a=1}^{d_{A_+}}\ket a\otimes\ket a$ is the unnormalized maximally entangled vector on two copies of $\HCal_{A_+}$.
The matrix $V\coloneqq\hat{\Theta}(G)$ has size $(d_{A_+}^2d_B)\times d_B$ and entries
\begin{equation}\label{eq:reshaping-entries}
 V_{(a',a,b),b'}=G_{(a,b),(a',b')}.
\end{equation}
The reshaping $\hat{\Theta}$ of $G$ gives
\begin{equation}
 V^\dagger V=\tr_{A_+}(G^\dagger G).
 \label{eq:reshaping-normalization}
\end{equation}
Consequently, for every $k\ge1$, the matrix $V^{(k)}\coloneqq\hat{\Theta}(G^{(k)})$ is an isometry and thus belongs to the complex \emph{Stiefel manifold}
\begin{equation}
 \MCal\coloneqq
 \{V\in\Cbb^{(d_{A_+}^2d_B)\times d_B}:V^\dagger V=I_B\}.
 \label{eq:stiefel}
\end{equation}
The proof uses this full manifold, even though the Gram factors of the iteration remain block diagonal. Although the entries of $V$ are complex, the constraint $V^\dagger V=I_B$ is not holomorphic, and we therefore regard $\MCal$ as a compact real analytic submanifold of $\Cbb^{(d_{A_+}^2d_B)\times d_B}\cong\Rbb^{2d_{A_+}^2d_B^2}$ of real dimension $2d_{A_+}^2d_B^2-d_B^2$. We equip $\MCal$ with the Riemannian metric induced by the real part of the Frobenius inner product, $\operatorname{Re}\finner{\cdot}{\cdot}$. The tangent space follows by differentiating the constraint, and the normal space is its orthogonal complement for this metric. Explicitly,
\begin{subequations}\label{eq:tangent-normal}
\begin{align}
 T_V\MCal&=\{U:U^\dagger V+V^\dagger U=0\},\label{eq:tangent-space}\\
 N_V\MCal&=\{VW:W=W^\dagger\}.\label{eq:normal-space}
\end{align}
\end{subequations}
Here $U$ has the size of $V$ and $W$ acts on $\HCal_B$. Indeed, $V^\dagger U$ is skew-Hermitian for every $U\in T_V\MCal$, so that $\finner{VW}{U}=\tr(W V^\dagger U)$ is purely imaginary, while the Hermitian matrices $W$ supply exactly the $d_B^2$ real dimensions missing from $T_V\MCal$.

The cost matrix enters the iteration through right multiplication of $G$ by $C_+$. The reshaping transports this action to the linear map $\hat{\Xi}_{C_+}$ on $\Cbb^{(d_{A_+}^2d_B)\times d_B}$, defined by
\begin{equation}\label{eq:transported-cost}
\begin{aligned}
 \hat{\Xi}_{C_+}(V)&=\hat{\Xi}_{C_+}\left(\,
 \begin{tikzpicture}[penrose,baseline={([yshift=-.5ex]V.center)}]
  \node[tensor,minimum width=8mm,minimum height=12mm] (V) {$V$};
  \draw ([yshift=4mm]V.west) -- ++(-3mm,0) node[left] {$A_+''$};
  \draw (V.west) -- ++(-3mm,0) node[left] {$A_+$};
  \draw ([yshift=-4mm]V.west) -- ++(-3mm,0) node[left] {$B$};
  \draw ([yshift=-4mm]V.east) -- ++(3mm,0) node[right] {$B'$};
 \end{tikzpicture}
 \,\right)\\
 &\coloneqq\;
 \begin{tikzpicture}[penrose,baseline={([yshift=-.5ex]V.center)}]
  \node[tensor,minimum width=8mm,minimum height=12mm] (V) at (0,0) {$V$};
  \node[tensor] (C) at (14mm,-2mm) {$C_+$};
  \draw (V.west) -- ++(-6mm,0) node[left] {$A_+$};
  \draw ([yshift=-4mm]V.west) -- ++(-6mm,0) node[left] {$B$};
  \draw ([yshift=-4mm]V.east) -- ([yshift=-2mm]C.west);
  \draw[rounded corners=1.5mm] ([yshift=4mm]V.west) -- ++(-3mm,0) -- ++(0,4mm)
   -- ++(14mm,0) -- ++(0,-8mm) -- ([yshift=2mm]C.west);
  \draw ([yshift=-2mm]C.east) -- ++(3mm,0) node[right] {$B'$};
  \draw[rounded corners=2mm] ([yshift=2mm]C.east) -- ++(3mm,0) -- ++(0,10.5mm)
   -- ([xshift=-6mm,yshift=4.5mm]V.north west) node[left] {$A_+''$};
  \path ([yshift=1.5mm]current bounding box.north);
 \end{tikzpicture}\\
 &=\hat{\Theta}\bigl(\hat{\Theta}^{-1}(V)C_+\bigr).
\end{aligned}
\end{equation}
The map $\hat{\Theta}$ is a bijective Frobenius isometry, so its inverse $\hat{\Theta}^{-1}$ equals its adjoint. Because $\hat{\Theta}$ is an isometry and $\finner{G}{G'C_+}=\tr(C_+G^\dagger G')$, the map $\hat{\Xi}_{C_+}$ is self-adjoint and positive semidefinite for the Frobenius inner product. Moreover, since $C_+^2\preceq\opnorm{C_+} C_+$ and $\hat{\Xi}_{C_+}^2(V)=\hat{\Theta}\bigl(\hat{\Theta}^{-1}(V)C_+^2\bigr)$, we have
\begin{equation}\label{eq:transported-cost-bound}
 \finner{V}{\hat{\Xi}_{C_+}^2(V)}\le\opnorm{C_+}\finner{V}{\hat{\Xi}_{C_+}(V)}
\end{equation}
for every $V$. The augmented objective on $\MCal$ is the real analytic function
\begin{equation}\label{eq:objective-manifold}
\begin{aligned}
 f_+(V)&\coloneqq\finner{V}{\hat{\Xi}_{C_+}(V)}\\
 &=\;
 \begin{tikzpicture}[penrose,baseline={([yshift=-.5ex]V.center)}]
  \node[tensor,minimum width=8mm,minimum height=12mm] (Vd) at (-15.2mm,0) {$V^\dagger$};
  \node[tensor,minimum width=8mm,minimum height=12mm] (V) at (0,0) {$V$};
  \node[tensor] (C) at (14mm,-2mm) {$C_+$};
  \draw (Vd.east) -- (V.west);
  \draw ([yshift=-4mm]Vd.east) -- ([yshift=-4mm]V.west);
  \draw ([yshift=-4mm]V.east) -- ([yshift=-2mm]C.west);
  \draw[rounded corners=1.5mm] ([yshift=4mm]V.west) -- ++(-3mm,0) -- ++(0,4mm)
   -- ++(14mm,0) -- ++(0,-8mm) -- ([yshift=2mm]C.west);
  \draw[rounded corners=1.5mm] ([yshift=2mm]C.east) -- (21mm,0) -- (21mm,10.5mm)
   -- (-8.74727mm,10.5mm) -- (-8.74727mm,4mm) -- ([yshift=4mm]Vd.east);
  \draw[rounded corners=1.5mm] ([yshift=-2mm]C.east) -- (21mm,-4mm) -- (21mm,-8.8mm)
   -- (-22.2mm,-8.8mm) -- (-22.2mm,-4mm) -- ([yshift=-4mm]Vd.west);
 \end{tikzpicture}.
\end{aligned}
\end{equation}
Along the iterates,
\begin{equation}\label{eq:objective-iterates}
\begin{aligned}
 f_+^{(k)}\coloneqq f_+(V^{(k)})
 &=\finner{G^{(k)}}{G^{(k)}C_+}\\
 &=\tr(C_+X_+^{(k)})=\fnorm{\Gamma^{(k)}}^2,
\end{aligned}
\end{equation}
where the cost matrix stands to the right of $G^{(k)}$ because $X_+^{(k)}=(G^{(k)})^\dagger G^{(k)}$. We retain $f^{(k)}\coloneqq\tr(CX^{(k)})$ for the original objective. At every normalized index,
\begin{equation}\label{eq:objective-offset}
\begin{split}
 f_+^{(k)}
 &=\tr(CX^{(k)})
   +s\tr\bigl(\tr_AX^{(k)}+S^{(k)}\bigr)\\
 &=f^{(k)}+s\,d_B.
\end{split}
\end{equation}
Thus $f^{(k)}$ can be negative even though $f_+^{(k)}\ge0$. Since $\hat{\Theta}(G(I_{A_+}\otimes T))=\hat{\Theta}(G)T$ for every operator $T$ on $\HCal_B$, Eq.~\eqref{eq:factor-recursion} equivalently gives
\begin{equation}
 \hat{\Xi}_{C_+}(V^{(k)})=V^{(k+1)}Y^{(k)}.
 \label{eq:polar-identity}
\end{equation}
Thus $V^{(k+1)}$ is the unique column polar factor of $\hat{\Xi}_{C_+}(V^{(k)})$.

For $k\ge0$, define the matrix increments
\begin{equation}\label{eq:matrix-increments}
\begin{split}
 \delta V^{(k)}&\coloneqq V^{(k+1)}-V^{(k)},\\
 \delta \Gamma^{(k)}&\coloneqq \Gamma^{(k+1)}-\Gamma^{(k)}.
\end{split}
\end{equation}

\begin{lemma}[Monotonicity]
\label{lem:ascent}
Under an admissible shift $s$ (Eq.~\eqref{eq:admissible-shift}) and a seed pair satisfying Eq.~\eqref{eq:support-condition}, the following bounds hold for every $k\ge1$.
\begin{equation}
\begin{split}
 f_+^{(k)}&\le\tr Y^{(k)}\le f_+^{(k+1)},\\
 f_+^{(k+1)}-f_+^{(k)}&\ge\fnorm{\delta \Gamma^{(k)}}^2.
\end{split}
 \label{eq:ascent-bounds}
\end{equation}
For $\tr_{A_+} X_+^{(0)}=I_B$, they also hold at $k=0$. At the same indices,
\begin{equation}\label{eq:original-ascent}
 f^{(k)}\le\tr Y^{(k)}-s\,d_B\le f^{(k+1)}.
\end{equation}
\end{lemma}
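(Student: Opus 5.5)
The plan is to work entirely on the Stiefel manifold $\MCal$ with the transported cost $\hat{\Xi}_{C_+}$, which is self-adjoint and PSD for $\finner{\cdot}{\cdot}$. Everything rests on the polar identity Eq.~\eqref{eq:polar-identity}, $\hat{\Xi}_{C_+}(V^{(k)})=V^{(k+1)}Y^{(k)}$ with $Y^{(k)}\succ0$ (established by the induction preceding the lemma). For $k\ge1$, and also for $k=0$ when $\tr_{A_+}X_+^{(0)}=I_B$, Eq.~\eqref{eq:reshaping-normalization} gives $V^{(k)}\in\MCal$. Throughout we also have $V^{(k+1)}\in\MCal$.

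\emph{Step 1 (lower bound on $\tr Y^{(k)}$).} Multiplying the polar identity on the left by $(V^{(k+1)})^\dagger$ gives
\begin{equation*}
 \tr Y^{(k)}=\finner{V^{(k+1)}}{\hat{\Xi}_{C_+}(V^{(k)})},
\end{equation*}
so this inner product is real and positive. Since $Y^{(k)}$ is the modulus in the polar decomposition of $\hat{\Xi}_{C_+}(V^{(k)})$, its trace is the nuclear norm of that matrix. The nuclear norm equals the maximum of $\operatorname{Re}\finner{W}{\hat{\Xi}_{C_+}(V^{(k)})}$ over isometries $W$. Choosing $W=V^{(k)}\in\MCal$ then gives $\tr Y^{(k)}\ge f_+^{(k)}$.

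\emph{Step 2 (upper bound).} Because $\hat{\Xi}_{C_+}\succeq0$, the form $(a,b)\mapsto\finner{a}{\hat{\Xi}_{C_+}(b)}$ is a positive semidefinite sesquilinear form, so the Cauchy--Schwarz inequality applies:
\begin{equation*}
 \tr Y^{(k)}\le\bigl(f_+^{(k+1)}f_+^{(k)}\bigr)^{1/2}.
\end{equation*}
Combining this with Step 1 gives $(\tr Y^{(k)})^2\le f_+^{(k+1)}f_+^{(k)}\le f_+^{(k+1)}\tr Y^{(k)}$. Dividing by $\tr Y^{(k)}>0$ yields $\tr Y^{(k)}\le f_+^{(k+1)}$.

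\emph{Step 3 (increment bound).} First express $\fnorm{\delta\Gamma^{(k)}}^2$ in $V$-variables. From Eq.~\eqref{eq:compressed-w}, $\fnorm{\Gamma-\Gamma'}^2=\tr\bigl[(G-G')C_+(G-G')^\dagger\bigr]=\finner{G-G'}{(G-G')C_+}$. Since $\hat{\Theta}$ is a linear Frobenius isometry, this equals $\finner{\delta V^{(k)}}{\hat{\Xi}_{C_+}(\delta V^{(k)})}$. Expanding bilinearly and using self-adjointness together with the real value of the cross term from Step 1 gives
\begin{equation*}
 \fnorm{\delta\Gamma^{(k)}}^2=f_+^{(k+1)}+f_+^{(k)}-2\tr Y^{(k)}.
\end{equation*}
By Step 1, this is at most $f_+^{(k+1)}-f_+^{(k)}$.

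\emph{Step 4 (original objective).} Eq.~\eqref{eq:original-ascent} follows by subtracting $s\,d_B$, using the offset Eq.~\eqref{eq:objective-offset}. That offset is valid at every normalized index, which includes $k+1\ge1$ and also $k$ itself under the stated hypotheses.

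The only delicate point I anticipate is the identification in Step 3: checking that $\Gamma^{(k)}$ and $\hat{\Xi}_{C_+}^{1/2}(V^{(k)})$ carry the same quadratic form, with the cost matrix placed to the right of $G$. Once that is settled, the rest is the nuclear-norm variational characterization together with Cauchy--Schwarz.
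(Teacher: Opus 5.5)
Your proof is correct and follows the paper's argument. Your Step 3 identity $\fnorm{\delta\Gamma^{(k)}}^2=f_+^{(k+1)}+f_+^{(k)}-2\tr Y^{(k)}$ is a rearrangement of Eq.~\eqref{eq:ascent-gap}, and the trace-norm bound in Step 1 is the inequality the paper obtains from $\fnorm{\delta V^{(k)}(Y^{(k)})^{1/2}}^2\ge0$ in Eq.~\eqref{eq:gap-trace}; your Cauchy--Schwarz Step 2 is valid (it even gives the slightly stronger $\tr Y^{(k)}\le(f_+^{(k)}f_+^{(k+1)})^{1/2}$) but unnecessary, since Step 3 already yields $f_+^{(k+1)}-\tr Y^{(k)}=\fnorm{\delta\Gamma^{(k)}}^2+\bigl(\tr Y^{(k)}-f_+^{(k)}\bigr)\ge0$, which is how the paper concludes.
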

\begin{proof}
At the stated indices, Eqs.~\eqref{eq:reshaping-normalization} and~\eqref{eq:factor-product} combine with the normalization $\tr_{A_+}X_+^{(k)}=I_B$ to give $(V^{(k)})^\dagger V^{(k)}=\tr_{A_+}X_+^{(k)}=I_B$, and likewise for $V^{(k+1)}$. Both matrices are therefore isometries from $\HCal_B$ into $\Cbb^{d_{A_+}^2d_B}$, which is precisely the statement that their columns are orthonormal. Writing $V^{(k+1)}=V^{(k)}+\delta V^{(k)}$ and expanding the quadratic objective, we obtain
\begin{subequations}\label{eq:ascent}
\begin{align}
 &\;f_+^{(k+1)}-f_+^{(k)}\nonumber\\
 &=2\operatorname{Re}\finner{\delta V^{(k)}}{\hat{\Xi}_{C_+}(V^{(k)})}+\fnorm{\delta \Gamma^{(k)}}^2\label{eq:ascent-expansion}\\
 &=2\operatorname{Re}\finner{\delta V^{(k)}}{V^{(k+1)}Y^{(k)}}+\fnorm{\delta \Gamma^{(k)}}^2\label{eq:ascent-polar}\\
 &=2\bigl(\tr Y^{(k)}-f_+^{(k)}\bigr)+\fnorm{\delta \Gamma^{(k)}}^2.\label{eq:ascent-gap}
\end{align}
\end{subequations}
Equation~\eqref{eq:ascent-expansion} expands $f_+(V^{(k)}+\delta V^{(k)})$ and evaluates the quadratic increment through Eq.~\eqref{eq:transported-cost} and the isometry of $\hat{\Theta}$, which give $\finner{\delta V^{(k)}}{\hat{\Xi}_{C_+}(\delta V^{(k)})}=\fnorm{C_+^{1/2}(G^{(k+1)}-G^{(k)})^\dagger}^2=\fnorm{\delta \Gamma^{(k)}}^2$. Equation~\eqref{eq:ascent-polar} then rewrites the first summand using Eq.~\eqref{eq:polar-identity}. Equation~\eqref{eq:ascent-gap} rests on two identities. Since $V^{(k+1)}$ is an isometry, $(V^{(k+1)})^\dagger V^{(k+1)}=I_B$ and consequently $\tr((V^{(k+1)})^\dagger V^{(k+1)}Y^{(k)})=\tr Y^{(k)}$, while Eq.~\eqref{eq:polar-identity} gives the cross term identity
\begin{equation}\label{eq:cross-term}
\begin{split}
 &\operatorname{Re}\tr\bigl((V^{(k)})^\dagger V^{(k+1)}Y^{(k)}\bigr)\\
 &=\finner{V^{(k)}}{\hat{\Xi}_{C_+}(V^{(k)})}=f_+^{(k)}.
\end{split}
\end{equation}
Subtracting the latter from the former yields $\operatorname{Re}\finner{\delta V^{(k)}}{V^{(k+1)}Y^{(k)}}=\tr Y^{(k)}-f_+^{(k)}$.

Using the isometry identities to expand the squared norm and Eq.~\eqref{eq:cross-term} to combine the two cross terms, we obtain
\begin{equation}\label{eq:gap-trace}
\begin{aligned}
 0&\le\fnorm{\delta V^{(k)}(Y^{(k)})^{1/2}}^2\\
  &=\tr\bigl(Y^{(k)}(\delta V^{(k)})^\dagger\delta V^{(k)}\bigr)\\
  &=2\tr Y^{(k)}-2\operatorname{Re}\tr\bigl((V^{(k)})^\dagger V^{(k+1)}Y^{(k)}\bigr)\\
  &=2\bigl(\tr Y^{(k)}-f_+^{(k)}\bigr).
\end{aligned}
\end{equation}
Hence $f_+^{(k)}\le\tr Y^{(k)}$. Inserting this bound into Eq.~\eqref{eq:ascent-gap} yields $f_+^{(k+1)}-f_+^{(k)}\ge\fnorm{\delta \Gamma^{(k)}}^2$. Rearranging the same equation as $f_+^{(k+1)}-\tr Y^{(k)}=\fnorm{\delta \Gamma^{(k)}}^2+\bigl(\tr Y^{(k)}-f_+^{(k)}\bigr)$ shows, in addition, that $\tr Y^{(k)}\le f_+^{(k+1)}$.

Finally, the objective values are bounded. For $k\ge1$, the normalization $\tr_{A_+}X_+^{(k)}=I_B$ gives $\tr X_+^{(k)}=d_B$, so that Eq.~\eqref{eq:objective-iterates} and $X_+^{(k)}\succeq0$ yield $f_+^{(k)}=\tr(C_+X_+^{(k)})\le\opnorm{C_+}\tr X_+^{(k)}=d_B\opnorm{C_+}$. Being nondecreasing by Eq.~\eqref{eq:ascent-bounds} and bounded above, the sequence has a limit $f_+^{(\infty)}\coloneqq\lim_{k\to\infty}f_+^{(k)}$. Equation~\eqref{eq:objective-offset} then gives the original objective limit $f^{(\infty)}\coloneqq f_+^{(\infty)}-s\,d_B$ and the ascent bounds in Eq.~\eqref{eq:original-ascent}.
\end{proof}

Write $\nabla_{\MCal}f_+(V)$ for the Riemannian gradient of $f_+$, namely the orthogonal projection onto $T_V\MCal$ of the Euclidean gradient, which equals $2\hat{\Xi}_{C_+}(V)$ because $\hat{\Xi}_{C_+}$ is self-adjoint.

\begin{lemma}
\label{lem:gradient-bound}
Under an admissible shift $s$ (Eq.~\eqref{eq:admissible-shift}) and a seed pair satisfying Eq.~\eqref{eq:support-condition}, let $\kappa\coloneqq 2\opnorm{C_+}^{1/2}$. Then the following bound holds for every $k\ge1$.
\begin{equation}
 \fnorm{\nabla_{\MCal}f_+(V^{(k+1)})}\le \kappa\fnorm{\delta \Gamma^{(k)}}.
 \label{eq:gradient-bound}
\end{equation}
\end{lemma}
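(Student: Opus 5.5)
The plan is to exploit the polar identity in Eq.~\eqref{eq:polar-identity}: it says that the Euclidean gradient direction at the \emph{previous} iterate already lies in the normal space at the \emph{current} iterate. The gradient at $V^{(k+1)}$ can therefore be written as a tangent projection of a difference, which is controlled by the increment $\delta V^{(k)}$ and hence by $\delta\Gamma^{(k)}$.

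\emph{Step 1: $V^{(k+1)}$ lies on $\MCal$.} For $k\ge1$ (indeed for any $k\ge0$), the index $k+1\ge1$ is normalized, so $V^{(k+1)}\in\MCal$ and its tangent space is given by Eq.~\eqref{eq:tangent-space}. Write $P_{T}$ for the orthogonal projection onto $T_{V^{(k+1)}}\MCal$ with respect to $\operatorname{Re}\finner{\cdot}{\cdot}$. Explicitly, $P_T(U)=U-V^{(k+1)}\operatorname{sym}\bigl((V^{(k+1)})^\dagger U\bigr)$, where $\operatorname{sym}$ denotes the Hermitian part. Its kernel is the normal space $N_{V^{(k+1)}}\MCal=\{V^{(k+1)}W:W=W^\dagger\}$ of Eq.~\eqref{eq:normal-space}.

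\emph{Step 2: subtract the previous direction.} By Eq.~\eqref{eq:polar-identity}, $\hat{\Xi}_{C_+}(V^{(k)})=V^{(k+1)}Y^{(k)}$ with $Y^{(k)}$ Hermitian, so this vector lies in $N_{V^{(k+1)}}\MCal$ and is annihilated by $P_T$. Using the linearity of $\hat{\Xi}_{C_+}$,
\[
\nabla_{\MCal}f_+(V^{(k+1)})=2P_T\bigl(\hat{\Xi}_{C_+}(V^{(k+1)})\bigr)=2P_T\bigl(\hat{\Xi}_{C_+}(V^{(k+1)})-\hat{\Xi}_{C_+}(V^{(k)})\bigr)=2P_T\bigl(\hat{\Xi}_{C_+}(\delta V^{(k)})\bigr).
\]
An orthogonal projection is nonexpansive in the real Frobenius norm, so $\fnorm{\nabla_{\MCal}f_+(V^{(k+1)})}\le2\fnorm{\hat{\Xi}_{C_+}(\delta V^{(k)})}$.

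\emph{Step 3: bound the image of the increment.} By self-adjointness of $\hat{\Xi}_{C_+}$ and Eq.~\eqref{eq:transported-cost-bound},
\[
\fnorm{\hat{\Xi}_{C_+}(\delta V^{(k)})}^2=\finner{\delta V^{(k)}}{\hat{\Xi}_{C_+}^2(\delta V^{(k)})}\le\opnorm{C_+}\finner{\delta V^{(k)}}{\hat{\Xi}_{C_+}(\delta V^{(k)})}.
\]
In the proof of Lemma~\ref{lem:ascent} we already established $\finner{\delta V^{(k)}}{\hat{\Xi}_{C_+}(\delta V^{(k)})}=\fnorm{\delta\Gamma^{(k)}}^2$, using the isometry $\hat{\Theta}$ and $\delta\Gamma^{(k)}=C_+^{1/2}(G^{(k+1)}-G^{(k)})^\dagger$. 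Combining this with Step 2 gives $\fnorm{\nabla_{\MCal}f_+(V^{(k+1)})}\le2\opnorm{C_+}^{1/2}\fnorm{\delta\Gamma^{(k)}}=\kappa\fnorm{\delta\Gamma^{(k)}}$.

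The only step requiring care is Step 2. There I would verify explicitly that $V^{(k+1)}Y^{(k)}$ with Hermitian $Y^{(k)}$ is orthogonal to $T_{V^{(k+1)}}\MCal$ under the \emph{real} inner product. This follows from the computation already given after Eq.~\eqref{eq:tangent-normal}: $\finner{V W}{U}=\tr\bigl(W V^\dagger U\bigr)$ is purely imaginary when $W=W^\dagger$ and $U\in T_V\MCal$. I would also confirm that the Riemannian gradient is the tangent projection of $2\hat{\Xi}_{C_+}(V)$ for this real metric.
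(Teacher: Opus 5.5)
Your proposal is correct and matches the paper's proof essentially step for step. The polar identity places $\hat{\Xi}_{C_+}(V^{(k)})$ in the normal space at $V^{(k+1)}$, so the Riemannian gradient equals $2P^{(k+1)}\hat{\Xi}_{C_+}(\delta V^{(k)})$; nonexpansiveness of the projection, Eq.~\eqref{eq:transported-cost-bound}, and the identity $\finner{\delta V^{(k)}}{\hat{\Xi}_{C_+}(\delta V^{(k)})}=\fnorm{\delta\Gamma^{(k)}}^2$ then give the bound with $\kappa=2\opnorm{C_+}^{1/2}$. Your extra checks (the explicit tangent projector, the purely imaginary pairing, and the observation that $k=0$ also works) are all valid.
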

\begin{proof}
Let $P^{(k+1)}$ denote the orthogonal projection onto the tangent space $T_{V^{(k+1)}}\MCal$ at $V^{(k+1)}\in\MCal$. Since $Y^{(k)}$ is Hermitian, Eq.~\eqref{eq:polar-identity} places $\hat{\Xi}_{C_+}(V^{(k)})$ in the normal space of Eq.~\eqref{eq:normal-space} at $V^{(k+1)}$, where $P^{(k+1)}$ annihilates it. Linearity and the identity $V^{(k+1)}=V^{(k)}+\delta V^{(k)}$ then yield
\begin{equation}\label{eq:gradient-step-identity}
 \nabla_{\MCal}f_+(V^{(k+1)})=2P^{(k+1)}\bigl(\hat{\Xi}_{C_+}(\delta V^{(k)})\bigr).
\end{equation}
Projection does not increase the norm, so Eq.~\eqref{eq:transported-cost-bound}, together with the identity $\finner{\delta V^{(k)}}{\hat{\Xi}_{C_+}(\delta V^{(k)})}=\fnorm{\delta \Gamma^{(k)}}^2$ established in the proof of Lemma~\ref{lem:ascent}, gives
\begin{equation}\label{eq:gradient-step-estimate}
\begin{aligned}
 \fnorm{\nabla_{\MCal}f_+(V^{(k+1)})}^2
 &\le4\opnorm{C_+}\finner{\delta V^{(k)}}{\hat{\Xi}_{C_+}(\delta V^{(k)})}\\
 &=4\opnorm{C_+}\fnorm{\delta \Gamma^{(k)}}^2.
\end{aligned}
\end{equation}
\end{proof}

\subsection{Convergence before taking an inverse limit}

We now show that the transformed iterates $\Gamma^{(k)}$ converge. This suffices for the normalizers, because Eqs.~\eqref{eq:rw}, \eqref{eq:factor-product}, and~\eqref{eq:compressed-w} give
\begin{equation}\label{eq:normalizer-finite}
 \bigl(Y^{(k)}\bigr)^2=\tr_{A_+}(C_+^{1/2}\Gamma^{(k)}\bigl(\Gamma^{(k)}\bigr)^\dagger C_+^{1/2}),
\end{equation}
which, by continuity of the positive square root on PSD matrices, expresses $Y^{(k)}$ as a continuous function of $\Gamma^{(k)}$ without any inverse. Lemma~\ref{lem:nonsingular} then establishes that the limiting normalizer is nonsingular, and only thereafter does an inverse enter the limit.

\begin{lemma}[Convergence of the normalizer]
\label{lem:finite-length}
Under an admissible shift $s$ (Eq.~\eqref{eq:admissible-shift}) and a seed pair satisfying Eq.~\eqref{eq:support-condition}, $\sum_k\fnorm{\delta \Gamma^{(k)}}<\infty$, so that $\Gamma^{(k)}$ converges to a limit $\Gamma^{(\infty)}$. Consequently, with
\begin{equation}\label{eq:normalizer-convergence}
 Y^{(\infty)}\coloneqq\Bigl[\tr_{A_+}(C_+^{1/2}\Gamma^{(\infty)}\bigl(\Gamma^{(\infty)}\bigr)^\dagger C_+^{1/2})\Bigr]^{1/2},
\end{equation}
the normalizers satisfy
\begin{equation}\label{eq:normalizer-limits}
 \lim_{k\to\infty}\opnorm{Y^{(k)}-Y^{(\infty)}}=0.
\end{equation}
\end{lemma}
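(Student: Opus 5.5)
The plan is the standard Łojasiewicz finite-length argument for descent (here ascent) methods on analytic manifolds. The two lemmas already proved supply its two ingredients. Lemma~\ref{lem:ascent} gives the sufficient increase condition
\begin{equation*}
f_+^{(k+1)}-f_+^{(k)}\ge\fnorm{\delta\Gamma^{(k)}}^2 .
\end{equation*}
Lemma~\ref{lem:gradient-bound} gives the relative error condition
\begin{equation*}
\fnorm{\nabla_{\MCal}f_+(V^{(k)})}\le\kappa\fnorm{\delta\Gamma^{(k-1)}}\qquad(k\ge2).
\end{equation*}
The point to note is that both are phrased in terms of $\delta\Gamma$ rather than $\delta V$. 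The argument therefore never needs $V^{(k)}$ itself to converge. It only needs $V^{(k)}$ to remain close to the set where the Łojasiewicz inequality holds, and then it yields summability of $\fnorm{\delta\Gamma^{(k)}}$ directly.

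\textbf{Step 1: a uniform Łojasiewicz inequality.} Let $\omega\subseteq\MCal$ be the set of accumulation points of $(V^{(k)})_{k\ge1}$. This set is nonempty and compact because $\MCal$ is compact. By continuity and Lemma~\ref{lem:ascent}, $f_+\equiv f_+^{(\infty)}$ on $\omega$.

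Since $f_+$ is real analytic on the compact real analytic manifold $\MCal$, every point $W\in\omega$ has a neighborhood $U_W$, constants $c_W>0$ and $\theta_W\in(0,1/2]$ such that
\begin{equation*}
|f_+(V)-f_+^{(\infty)}|^{1-\theta_W}\le c_W\fnorm{\nabla_{\MCal}f_+(V)}\qquad\text{on }U_W.
\end{equation*}
Covering $\omega$ by finitely many such neighborhoods gives an open set $U\supseteq\omega$ with a single exponent $\theta$ and constant $c$. Shrinking $\theta$ is harmless, because $|f_+-f_+^{(\infty)}|$ is bounded on $\MCal$.

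Compactness gives $\operatorname{dist}(V^{(k)},\omega)\to0$. Hence $V^{(k)}\in U$ for all $k\ge k_0$. This uniformization over the limit set is the step I expect to need the most care, since we cannot assume a priori that $V^{(k)}$ converges to a single point.

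\textbf{Step 2: the telescoping estimate.} If $f_+^{(k)}=f_+^{(\infty)}$ for some $k$, then monotonicity forces $\fnorm{\delta\Gamma^{(j)}}=0$ for all $j\ge k$, and there is nothing to prove. Otherwise set $r_k\coloneqq f_+^{(\infty)}-f_+^{(k)}>0$ and $\varphi(t)\coloneqq t^{\theta}$. Concavity of $\varphi$ and the two conditions give, for $k\ge k_0$,
\begin{equation*}
\varphi(r_k)-\varphi(r_{k+1})
\ge\theta r_k^{\theta-1}\fnorm{\delta\Gamma^{(k)}}^2
\ge\frac{\theta\fnorm{\delta\Gamma^{(k)}}^2}{c\kappa\fnorm{\delta\Gamma^{(k-1)}}}.
\end{equation*}
Write $a_k\coloneqq\fnorm{\delta\Gamma^{(k)}}$. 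Then $a_k^2\le a_{k-1}\cdot M(\varphi(r_k)-\varphi(r_{k+1}))$ with $M\coloneqq c\kappa/\theta$. The inequality $2\sqrt{xy}\le x+y$ yields
\begin{equation*}
2a_k\le a_{k-1}+M\bigl(\varphi(r_k)-\varphi(r_{k+1})\bigr).
\end{equation*}
Summing from $k_0$ to $K$ and telescoping gives $\sum_{k\ge k_0}a_k\le a_{k_0-1}+M\varphi(r_{k_0})<\infty$.

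\textbf{Step 3: conclusion.} Finite length makes $\Gamma^{(k)}$ Cauchy, so it converges to some $\Gamma^{(\infty)}$. By Eq.~\eqref{eq:normalizer-finite}, $(Y^{(k)})^2$ is a continuous, inverse-free function of $\Gamma^{(k)}$. Continuity of the PSD square root then gives $Y^{(k)}\to Y^{(\infty)}$ in operator norm, with $Y^{(\infty)}$ as in Eq.~\eqref{eq:normalizer-convergence}.
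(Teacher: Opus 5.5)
Your proposal is correct and follows essentially the same route as the paper. Both arguments make a \L{}ojasiewicz inequality uniform over the compact set of accumulation points and combine it with the sufficient-increase bound of Lemma~\ref{lem:ascent} and the relative-error bound of Lemma~\ref{lem:gradient-bound}, both phrased in $\fnorm{\delta\Gamma^{(k)}}$, then apply the concavity and telescoping finite-length estimate and the inverse-free continuity of Eq.~\eqref{eq:normalizer-finite}. The differences are cosmetic: your exponent $\theta$ is $1-\theta$ in the paper's convention, and your AM--GM step $2\sqrt{xy}\le x+y$ replaces the paper's $a^2/b\ge 2a-b$, with the small advantage that it does not need $\fnorm{\delta\Gamma^{(k-1)}}>0$.
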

\begin{proof}
A point $V_\star\in\MCal$ is \emph{critical} for $f_+$ if its derivative vanishes in every tangent direction, equivalently if $\nabla_{\MCal}f_+(V_\star)=0$. The required analyticity follows from the polynomial constraint $\Psi(V)\coloneqq V^\dagger V-I_B$. For an arbitrary matrix perturbation $U$ of the same size as $V$ and a real parameter $t$, its derivative is $\left.\frac{\dd}{\dd t}\Psi(V+tU)\right|_{t=0}=U^\dagger V+V^\dagger U$. At each $V\in\MCal$, this derivative attains every Hermitian matrix $W\in\lin(\HCal_B)$ by choosing $U=VW/2$, since $V^\dagger V=I_B$. The real analytic implicit function theorem therefore supplies analytic charts on $\MCal$. The objective in Eq.~\eqref{eq:objective-manifold}, being quadratic in the real and imaginary parts of $V$, is real analytic in these charts.

At every critical point $V_\star$, the \L{}ojasiewicz gradient inequality~\cite{HarauxJendoubi2015} provides constants $\xi>0$ and $\theta\in(0,1)$ such that
\begin{equation}
 |f_+(V)-f_+(V_\star)|^\theta\le\xi\fnorm{\nabla_{\MCal}f_+(V)}
 \label{eq:lojasiewicz}
\end{equation}
for all $V\in\MCal$ near $V_\star$. The Euclidean result~\cite[Lemma~2.1]{AbsilMahonyAndrews2005} extends to $\MCal$ because coordinate and Riemannian gradient norms are locally comparable.

Unless $f_+$ is locally constant, its first nonconstant Taylor term at $V_\star$ in these coordinates has degree $p\ge2$ by criticality. Along a direction where this term is nonzero, the objective difference and gradient norm vanish to orders $p$ and $p-1$, respectively. Equation~\eqref{eq:lojasiewicz} thus requires $p\theta\ge p-1$, giving $\theta\ge1-1/p\ge1/2$. The locally constant case permits $\theta=1/2$, so we may choose $\theta\in[1/2,1)$. By continuity, we also shrink the neighborhood so that $|f_+(V)-f_+(V_\star)|\le1$.

Since convergence of $V^{(k)}$ is not yet known, we work with its accumulation points, namely the limits of convergent subsequences. Their set $\ACal\subseteq\MCal$ is nonempty and compact because $\MCal$ is compact.

Every $V_\star\in\ACal$ is critical with $f_+(V_\star)=f_+^{(\infty)}$, as two separate continuity arguments show along a subsequence $V^{(k_j)}\to V_\star$. First, continuity of $f_+$ gives $f_+(V_\star)=\lim_{j\to\infty}f_+^{(k_j)}=f_+^{(\infty)}$, because the entire sequence $f_+^{(k)}$ converges to $f_+^{(\infty)}$. Second, Eq.~\eqref{eq:ascent-bounds} gives $\fnorm{\delta \Gamma^{(k)}}^2\le f_+^{(k+1)}-f_+^{(k)}\to0$, whence Lemma~\ref{lem:gradient-bound} implies $\nabla_{\MCal}f_+(V^{(k)})\to0$, so that continuity of $\nabla_{\MCal}f_+$ yields $\nabla_{\MCal}f_+(V_\star)=0$. We cover $\ACal$ by finitely many neighborhoods on which Eq.~\eqref{eq:lojasiewicz} holds, and henceforth let $\xi$ and $\theta$ denote the largest of the associated constants and exponents. Because raising the exponent does not increase a quantity bounded by one, Eq.~\eqref{eq:lojasiewicz} then holds with this single pair, and with $f_+(V_\star)$ replaced by $f_+^{(\infty)}$, at every point $V$ of the union of these neighborhoods that satisfies $|f_+(V)-f_+^{(\infty)}|\le1$. Moreover, $V^{(k)}$ lies in this union for every sufficiently large $k$, since a subsequence remaining outside the open union would, by compactness, accumulate at a point outside it, contrary to the definition of $\ACal$.

Define the residual to the limiting objective value
\begin{equation}\label{eq:gap}
 r^{(k)}\coloneqq f_+^{(\infty)}-f_+^{(k)},\qquad k\ge1,
\end{equation}
which is nonnegative, nonincreasing by Lemma~\ref{lem:ascent}, and convergent to zero. At this stage, the limiting value has not yet been identified with the optimum. A degenerate case can be set aside first. If $\delta \Gamma^{(k)}=0$ for some $k\ge1$, then $f_+^{(k+1)}=f_+^{(k)}$, whereupon Eqs.~\eqref{eq:ascent-gap} and~\eqref{eq:gap-trace} together with $Y^{(k)}\succ0$ give $\delta V^{(k)}=0$. The iteration is then stationary from step $k$ onward, and the lemma holds trivially. For the remainder of the proof we therefore assume that $\delta \Gamma^{(k)}\neq0$ for all $k\ge1$, which also forces $r^{(k)}>0$, because Eq.~\eqref{eq:ascent-bounds} gives
\begin{equation}\label{eq:gap-decrement}
 r^{(k)}-r^{(k+1)}=f_+^{(k+1)}-f_+^{(k)}\ge\fnorm{\delta \Gamma^{(k)}}^2.
\end{equation}
Define the comparison function $\varphi(t)\coloneqq\xi t^{1-\theta}/(1-\theta)$ for $t>0$, which is concave and has derivative $\varphi'(t)=\xi t^{-\theta}$. For every sufficiently large $k$, Eq.~\eqref{eq:lojasiewicz} applied at $V=V^{(k)}$ reads $(r^{(k)})^\theta\le\xi\fnorm{\nabla_{\MCal}f_+(V^{(k)})}$, which is equivalent to
\begin{equation}\label{eq:lojasiewicz-derivative}
 \varphi'(r^{(k)})\fnorm{\nabla_{\MCal}f_+(V^{(k)})}\ge1.
\end{equation}
Since Eq.~\eqref{eq:gradient-bound} with index $k-1$ bounds this gradient by $\kappa\fnorm{\delta \Gamma^{(k-1)}}$, we infer that $\varphi'(r^{(k)})\ge1/(\kappa\fnorm{\delta \Gamma^{(k-1)}})$. Concavity of $\varphi$, on the other hand, gives the tangent line inequality
\begin{equation}\label{eq:tangent-line}
\varphi(r^{(k)})-\varphi(r^{(k+1)})\ge
\varphi'(r^{(k)})(r^{(k)}-r^{(k+1)}).
\end{equation}
Combining these two bounds with Eq.~\eqref{eq:gap-decrement}, we obtain
\begin{equation}\label{eq:comparison-decrement}
\begin{split}
 &\varphi(r^{(k)})-\varphi(r^{(k+1)})\ge\frac{\fnorm{\delta \Gamma^{(k)}}^2}{\kappa\fnorm{\delta \Gamma^{(k-1)}}}\\
 &\qquad\ge\frac{2\fnorm{\delta \Gamma^{(k)}}-\fnorm{\delta \Gamma^{(k-1)}}}{\kappa},
\end{split}
\end{equation}
where the last step is the elementary inequality $a^2/b\ge2a-b$ for $a\ge0$ and $b>0$, a rearrangement of $(a-b)^2\ge0$. We now sum Eq.~\eqref{eq:comparison-decrement} over $k=j,\dots,j_{\max}$ for a sufficiently late $j$. The left side telescopes to at most $\varphi(r^{(j)})$, whereas on the right every increment with $j\le k\le j_{\max}-1$ retains the net coefficient $1/\kappa$. Discarding the nonnegative remainder and letting $j_{\max}\to\infty$ yields
\begin{equation}
 \sum_{k=j}^{\infty}\fnorm{\delta \Gamma^{(k)}}
 \le\fnorm{\delta \Gamma^{(j-1)}}+\kappa\varphi(r^{(j)}).
 \label{eq:finite-length}
\end{equation}
The increments are thus absolutely summable, so that $\Gamma^{(k)}$ is a Cauchy sequence and converges to a limit $\Gamma^{(\infty)}$. Finally, the right-hand side of Eq.~\eqref{eq:normalizer-finite} is polynomial in $\Gamma^{(k)}$ and its adjoint, and therefore converges. Continuity of the positive square root on PSD matrices then gives Eq.~\eqref{eq:normalizer-limits}, with the limit defined by Eq.~\eqref{eq:normalizer-convergence}.
\end{proof}

It remains to exclude a singular limit $Y^{(\infty)}$ in Lemma~\ref{lem:finite-length}. Define the auxiliary matrices $Q^{(0)}\coloneqq I_{\KCal}$ and $Q^{(k+1)}\coloneqq H^{(k)}Q^{(k)}$. The following lemma proves nonsingularity before taking an inverse limit and supplies the hypotheses needed for Lemma~\ref{lem:cone}.

\begin{lemma}[Nonsingular limiting normalizer]
\label{lem:nonsingular}
Under an admissible shift $s$ (Eq.~\eqref{eq:admissible-shift}) and a seed pair satisfying Eq.~\eqref{eq:support-condition}, the following limits and bounds hold,
\begin{subequations}\label{eq:nonsingular-conclusions}
\begin{align}
 Y^{(\infty)}&\succ0,\label{eq:z-positive}\\
 \lim_{k\to\infty}\opnorm{H^{(k)}-H^{(\infty)}}&=0,\label{eq:h-convergence}\\
 \sup_k\opnorm{Q^{(k)}}&<\infty,\label{eq:q-bounded}
\end{align}
\end{subequations}
and every $Q^{(k)}$ is invertible on $\KCal$.
\end{lemma}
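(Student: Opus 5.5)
The plan is to prove \eqref{eq:q-bounded} and the invertibility of $Q^{(k)}$ first, because they need no information about $Y^{(\infty)}$. Then I would prove \eqref{eq:z-positive} by contradiction, and finally read off \eqref{eq:h-convergence} by continuity.

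\emph{Step 1 (the auxiliary matrices).} The induction preceding Eq.~\eqref{eq:augmented-update} gives $H^{(k)}\succ0$ on $\KCal$ for every $k$. Hence $Q^{(k)}=H^{(k-1)}\cdots H^{(0)}$ is a product of invertible operators on $\KCal$ and is invertible. Iterating $\Gamma^{(k+1)}=H^{(k)}\Gamma^{(k)}$ gives $\Gamma^{(k)}=Q^{(k)}\Gamma^{(0)}$. The map $\Gamma^{(0)}$ is onto $\KCal$, so it has a right inverse $(\Gamma^{(0)})^{+}$ with $\Gamma^{(0)}(\Gamma^{(0)})^{+}=I_{\KCal}$. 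Therefore $Q^{(k)}=\Gamma^{(k)}(\Gamma^{(0)})^{+}$. By Lemma~\ref{lem:finite-length}, $\Gamma^{(k)}$ converges and is in particular bounded, so $\sup_k\opnorm{Q^{(k)}}\le\sup_k\fnorm{\Gamma^{(k)}}\,\opnorm{(\Gamma^{(0)})^{+}}<\infty$. This yields \eqref{eq:q-bounded}. It also shows that $Q^{(k)}$ converges to $Q^{(\infty)}\coloneqq\Gamma^{(\infty)}(\Gamma^{(0)})^{+}$.

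\emph{Step 2 (nonsingularity of $Y^{(\infty)}$), the main obstacle.} Suppose $\ket b$ is a unit vector with $Y^{(\infty)}\ket b=0$. By Eq.~\eqref{eq:normalizer-finite},
\[
 \bra b(Y^{(k)})^2\ket b=\fnorm{(I_{A_+}\otimes\bra b)C_+^{1/2}\Gamma^{(k)}}^2\eqqcolon\norm{w^{(k)}}^2\to0 .
\]
So the component $w^{(k)}$ of $\Gamma^{(k)}$ seen through the ``$b$-slice'' of $C_+^{1/2}$ must vanish in the limit.

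On the other hand, the update multiplies this slice by $H^{(k)}$, which contains the factor $(Y^{(k)})^{-1}$. Along $\ket b$ that factor is of order $1/\norm{w^{(k)}}$. I would make the heuristic $w^{(k+1)}\approx M_b\,w^{(k)}/\norm{w^{(k)}}$ precise, where $M_b\coloneqq(I\otimes\bra b)C_+(I\otimes\ket b)$ satisfies $\tr M_b=\bra b\Omega_+\ket b>0$. The concrete route I would try first uses the exact normalization $\tr_{A_+}X_+^{(k+1)}=I_B$, together with $C_+X_+^{(k)}C_+=(I\otimes Y^{(k)})X_+^{(k+1)}(I\otimes Y^{(k)})$. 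From these I would derive a lower bound of the form
\[
 \bra b(Y^{(k+1)})^2\ket b\;\ge\;c\,\bra b\tr_{A_+}X_+^{(k+1)}\ket b-\epsilon_k\;=\;c-\epsilon_k ,
\]
with $c>0$ depending only on $\Omega_+$ and $\opnorm{C_+}$, and with $\epsilon_k\to0$ coming from cross terms between $\ket b$ and its complement. The cross terms would be controlled by the summability of the increments $\delta\Gamma^{(k)}$ from Lemma~\ref{lem:finite-length}, and by Step~1.

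If the local argument proves too delicate, my fallback is a determinant bookkeeping argument. One has $\log\det_{\KCal}\bigl(\Gamma^{(k)}(\Gamma^{(k)})^\dagger\bigr)=\log\det_{\KCal}\bigl(\Gamma^{(0)}(\Gamma^{(0)})^\dagger\bigr)+2\sum_{j<k}\log\det_{\KCal}H^{(j)}$, and the left side is bounded above by Step~1. One would then need to show that a vanishing eigenvalue of $Y^{(k)}$ forces $\log\det_{\KCal}H^{(k)}\to+\infty$ while the remaining factors stay bounded below, which contradicts this bound. Either route yields $\bra b(Y^{(\infty)})^2\ket b>0$, contradicting the choice of $\ket b$. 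Hence \eqref{eq:z-positive} holds.

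\emph{Step 3 (convergence of $H^{(k)}$).} Once $Y^{(\infty)}\succ0$ holds, matrix inversion is continuous near $Y^{(\infty)}$. Together with Eq.~\eqref{eq:normalizer-limits}, this gives $(Y^{(k)})^{-1}\to(Y^{(\infty)})^{-1}$. Substituting into Eq.~\eqref{eq:compressed-h}, and defining $H^{(\infty)}$ by the same formula with $Y^{(\infty)}$, yields \eqref{eq:h-convergence}. Moreover, $H^{(\infty)}\succ0$ on $\KCal$ by the same argument that gave $H^{(k)}\succ0$. Passing to the limit in $\Gamma^{(k+1)}=H^{(k)}\Gamma^{(k)}$ gives the fixed-point relation $H^{(\infty)}\Gamma^{(\infty)}=\Gamma^{(\infty)}$, which I expect Lemma~\ref{lem:cone} to use.
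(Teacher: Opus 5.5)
Your Steps 1 and 3 are correct and match the paper. In Step 1 you could avoid Lemma~\ref{lem:finite-length} by using $\fnorm{\Gamma^{(k)}}^2=f_+^{(k)}\le d_B\opnorm{C_+}$ instead.

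\textbf{The gap is Step 2.} Nonsingularity of $Y^{(\infty)}$ is the whole content of the lemma, and you do not prove it. Your primary route needs $\bra b(Y^{(k+1)})^2\ket b\ge c\,\bra b\tr_{A_+}X_+^{(k+1)}\ket b-\epsilon_k$, but nothing in your argument makes this hold. As an inequality in $X_+$ it is false. The $b$-slice mass of $X_+^{(k+1)}$ can sit in directions that $C_+$ annihilates or moves off $\HCal_{A_+}\otimes\ket b$. In the JRF case, for instance, it can sit in a block $i$ with $\sigma_i\ket b=0$. Correspondingly, $M_b$ can be singular, so $M_b w^{(k)}/\norm{w^{(k)}}$ need not stay away from zero, even though $\tr M_b>0$. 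Summability of $\delta\Gamma^{(k)}$ gives no control over where this mass sits. A local argument along the single direction $\ket b$ therefore does not close.

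\textbf{Your fallback has the right skeleton, and it is the paper's.} The quantity $\det_{\KCal}Q^{(k)}$ stays bounded, while a singular $Y^{(\infty)}$ would force $\det_{\KCal}H^{(k)}\to\infty$. However, the two estimates that make this work are exactly the ones you leave as ``one would need to show.''
\begin{itemize}
\item \emph{A uniform spectral floor for $H^{(k)}$.} For $k\ge1$, $\opnorm{Y^{(k)}}^2\le\tr\bigl((Y^{(k)})^2\bigr)=\tr\bigl(C_+^2X_+^{(k)}\bigr)\le d_B\opnorm{C_+}^2$. Hence $(Y^{(k)})^{-1}\succeq I_B/(\sqrt{d_B}\opnorm{C_+})$. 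Together with $\Pi_{\KCal}C_+\Pi_{\KCal}\succeq\sigma_{\min}^{+}(C_+)I_{\KCal}$, this gives $H^{(k)}\succeq hI_{\KCal}$ with $h=\sigma_{\min}^{+}(C_+)/(\sqrt{d_B}\opnorm{C_+})$.
\item \emph{A blow-up that uses all of $\Omega_+$, not a single slice.} The trace identity is $\tr H^{(k)}=\tr\bigl(\Omega_+(Y^{(k)})^{-1}\bigr)\ge\lambda_{\min}(\Omega_+)\tr\bigl((Y^{(k)})^{-1}\bigr)$. The right side tends to $\infty$ whenever $\lambda_{\min}(Y^{(k)})\to0$.
\end{itemize}
Combining the two, $\det_{\KCal}H^{(k)}\ge h^{d_{\KCal}-1}\tr H^{(k)}/d_{\KCal}\to\infty$. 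So $\det_{\KCal}H^{(k)}\ge2$ eventually, and $\det_{\KCal}Q^{(k)}\to\infty$, which contradicts your Step 1.

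What must stay bounded below is the set of \emph{other eigenvalues within each} $H^{(k)}$; that is the floor $h$. The earlier terms $\log\det_{\KCal}H^{(j)}$ do not matter: there are finitely many of them, each finite, and a divergent tail already contradicts boundedness.
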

\begin{proof}
We begin with two bounds that hold uniformly along the iteration and will be used to exclude a singular limiting normalizer. Let $\sigma_{\min}^{+}(C_+)$ be the smallest positive singular value of $C_+$ and put
$h\coloneqq \sigma_{\min}^{+}(C_+)/(\sqrt{d_B}\opnorm{C_+})$.
For $k\ge1$,
\begin{subequations}\label{eq:uniform-bounds}
\begin{align}
 \opnorm{Y^{(k)}}&\le \sqrt{d_B}\opnorm{C_+},\label{eq:uniform-bound-z}\\
 H^{(k)}&\succeq hI_{\KCal}.\label{eq:uniform-bound-h}
\end{align}
\end{subequations}
Equation~\eqref{eq:uniform-bound-z} follows from $\opnorm{Y^{(k)}}^2\le\tr\bigl((Y^{(k)})^2\bigr)=\tr(C_+^2X_+^{(k)})\le d_B\opnorm{C_+}^2$, in which $\tr X_+^{(k)}=d_B$ by normalization. Equation~\eqref{eq:uniform-bound-h} follows from Eq.~\eqref{eq:compressed-h}, because $(Y^{(k)})^{-1}\succeq\opnorm{Y^{(k)}}^{-1}I_B$ and $\Pi_{\KCal}C_+\Pi_{\KCal}\succeq\sigma_{\min}^{+}(C_+)I_{\KCal}$.

We first establish invertibility and boundedness of the auxiliary matrices on $\KCal$, whose dimension is denoted by $d_{\KCal}\coloneqq\dim\KCal<\infty$. For an operator $T$ on this space, $\det_{\KCal}T$ denotes the determinant of its matrix in an orthonormal basis. Since each $H^{(k)}$ is positive definite on $\KCal$, the recursion for $Q^{(k)}$ shows that $Q^{(k)}$ is invertible and $\det_{\KCal}Q^{(k)}>0$. To establish boundedness, we use the relation $\Gamma^{(k+1)}=H^{(k)}\Gamma^{(k)}$ from the preceding subsection, which gives $\Gamma^{(k)}=Q^{(k)}\Gamma^{(0)}$. Since $\Gamma^{(0)}$ is onto $\KCal$, its Moore--Penrose pseudoinverse, which maps $\KCal$ to $\Cbb^{d_{A_+} d_B}$, is
\begin{equation}\label{eq:pseudoinverse-w0}
 \bigl(\Gamma^{(0)}\bigr)^{-1}=\bigl(\Gamma^{(0)}\bigr)^\dagger
 \bigl[\Gamma^{(0)}\bigl(\Gamma^{(0)}\bigr)^\dagger\bigr]^{-1},
\end{equation}
where the inverse is taken on $\KCal$. Thus $\Gamma^{(0)}\bigl(\Gamma^{(0)}\bigr)^{-1}=I_{\KCal}$ and consequently $Q^{(k)}=\Gamma^{(k)}\bigl(\Gamma^{(0)}\bigr)^{-1}$, which is bounded because $\fnorm{\Gamma^{(k)}}^2=f_+^{(k)}\le d_B\opnorm{C_+}$ for $k\ge1$. This proves Eq.~\eqref{eq:q-bounded}.

Suppose now, for the sake of contradiction, that $Y^{(\infty)}$ were singular. Since the smallest eigenvalue of $Y^{(k)}$ would then tend to zero, and since $\Omega_+=\tr_{A_+}C_+\succ0$ has a positive smallest eigenvalue $\lambda_{\min}(\Omega_+)$, we would have
\begin{equation}\label{eq:compressed-trace-divergence}
\begin{split}
 \tr H^{(k)}
 &=\tr(\Omega_+\bigl(Y^{(k)}\bigr)^{-1})\\
 &\ge\lambda_{\min}(\Omega_+)\tr(\bigl(Y^{(k)}\bigr)^{-1})
 \xrightarrow{k\to\infty}+\infty.
\end{split}
\end{equation}
As every eigenvalue of $H^{(k)}$ is at least $h$ by Eq.~\eqref{eq:uniform-bound-h} while the largest one is at least their mean, the determinant would diverge as well,
\begin{equation}\label{eq:compressed-determinant-divergence}
 \det_{\KCal} H^{(k)}\ge\frac{h^{d_{\KCal}-1}}{d_{\KCal}}\tr H^{(k)}
 \xrightarrow{k\to\infty}+\infty.
\end{equation}
It would follow that
$\det_{\KCal} Q^{(k+1)}=\det_{\KCal}H^{(k)}\det_{\KCal}Q^{(k)}\ge2\det_{\KCal} Q^{(k)}>0$ for all sufficiently large $k$, and hence that $\det_{\KCal}Q^{(k)}\to\infty$, which is incompatible with the boundedness of $Q^{(k)}$. Consequently $Y^{(\infty)}\succ0$, which is Eq.~\eqref{eq:z-positive}. We may now define the operator on $\KCal$
\begin{equation}\label{eq:h-limit}
 H^{(\infty)}\coloneqq
 \Pi_{\KCal} C_+^{1/2}(I_{A_+}\otimes\bigl(Y^{(\infty)}\bigr)^{-1})C_+^{1/2}\Pi_{\KCal}.
\end{equation}
Because matrix inversion is continuous at every invertible matrix, $\bigl(Y^{(k)}\bigr)^{-1}$ converges to $\bigl(Y^{(\infty)}\bigr)^{-1}$, and Eq.~\eqref{eq:compressed-h} then yields Eq.~\eqref{eq:h-convergence}.
\end{proof}

Lemmas~\ref{lem:finite-length} and~\ref{lem:nonsingular} establish convergence of the transformed iterates and normalizers, together with nonsingularity of the limiting normalizer. It remains to recover convergence of the original matrices and to prove that $Y^{(\infty)}-sI_B$ is a dual certificate. The next lemma supplies the operator inequality needed for that certificate.

\subsection{Closing the duality gap}

Eigenvalues of a Hermitian operator $H$ are denoted by $\lambda$, and $\lambda_{\max}(H)$ denotes the largest of them. The following lemma bounds the limit $H^{(\infty)}$ by the identity, converting the boundedness of the auxiliary matrices into dual feasibility.

\begin{lemma}[Limiting spectral bound]
\label{lem:cone}
Let $H^{(k)}=\bigl(H^{(k)}\bigr)^\dagger\succ0$ converge to a limit $H^{(\infty)}$.
If an invertible matrix sequence $(Q^{(k)})$ satisfies
$Q^{(k+1)}=H^{(k)}Q^{(k)}$ and $\sup_k\opnorm{Q^{(k)}}<\infty$, then
$H^{(\infty)}\preceq I$.
\end{lemma}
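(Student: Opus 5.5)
The plan is to argue by contradiction with a cone-invariance (dominant eigenvector) argument, which works for the non-commuting products $Q^{(k)}=H^{(k-1)}\cdots H^{(0)}Q^{(0)}$. Assume that $\alpha\coloneqq\lambda_{\max}(H^{(\infty)})>1$. Let $P$ be the orthogonal projector onto the eigenspace of $H^{(\infty)}$ for $\alpha$. Let $\beta<\alpha$ be the largest remaining eigenvalue, or set $\beta\coloneqq0$ if there is none. Since $H^{(\infty)}$ is Hermitian, it commutes with $P$. Hence $\opnorm{PH^{(\infty)}v}=\alpha\opnorm{Pv}$ and $\opnorm{(I-P)H^{(\infty)}v}\le\beta\opnorm{(I-P)v}$ for every vector $v$.

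Next I will introduce the cone
\begin{equation*}
\mathcal{C}\coloneqq\{v:\opnorm{(I-P)v}\le\opnorm{Pv}\},
\end{equation*}
where the norms are Euclidean norms of vectors. On $\mathcal{C}$ we have $\opnorm{v}\le\sqrt2\opnorm{Pv}$. Write $H^{(k)}=H^{(\infty)}+\Delta^{(k)}$ with $\opnorm{\Delta^{(k)}}\le\varepsilon$ for all $k\ge k_0$. This is possible by the assumed convergence, and I will choose $\varepsilon$ so small that $\alpha-\sqrt2\varepsilon>\max\{1,\beta+\sqrt2\varepsilon\}$. Then for $v\in\mathcal{C}$ and $k\ge k_0$ the two estimates are
\begin{equation*}
\opnorm{PH^{(k)}v}\ge(\alpha-\sqrt2\varepsilon)\opnorm{Pv},\qquad
\opnorm{(I-P)H^{(k)}v}\le(\beta+\sqrt2\varepsilon)\opnorm{Pv}.
\end{equation*}
These give two consequences. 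First, $H^{(k)}\mathcal{C}\subseteq\mathcal{C}$. Second, the $P$-component grows by at least the factor $\mu\coloneqq\alpha-\sqrt2\varepsilon>1$ at every step.

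Then I will use the invertibility of $Q^{(k_0)}$. Pick a nonzero $u\in\operatorname{ran}P\subseteq\mathcal{C}$ and set $w\coloneqq(Q^{(k_0)})^{-1}u$. By the recursion, $Q^{(k)}w=H^{(k-1)}\cdots H^{(k_0)}u$ for $k\ge k_0$. Induction with the cone invariance gives $\opnorm{Q^{(k)}w}\ge\opnorm{PQ^{(k)}w}\ge\mu^{k-k_0}\opnorm{u}\to\infty$. This contradicts $\sup_k\opnorm{Q^{(k)}}<\infty$, so $\lambda_{\max}(H^{(\infty)})\le1$, that is, $H^{(\infty)}\preceq I$.

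The main obstacle is that the $H^{(k)}$ need not commute. As a result, neither eigenvalue nor determinant bounds on the individual factors control the norm of their product. The cone step addresses this directly: the spectral gap between $\alpha$ and $\beta$, together with the perturbation size $\varepsilon$, gives invariance of $\mathcal{C}$ under every late factor. The only delicate point is choosing $\varepsilon$ relative to $\alpha-1$ and $\alpha-\beta$, and the condition stated above handles this. Positivity of $H^{(k)}$ is not needed beyond Hermiticity of the limit.
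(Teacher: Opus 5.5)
Your proof is correct, and it takes a genuinely different route from the paper's. The two arguments differ mainly in the choice of invariant cone.

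The paper does not split vectors into spectral components. It fixes a non-eigenvalue $\beta\in(1,\lambda_{\max}(H^{(\infty)}))$ and uses the indefinite quadratic form $\psi(\ket{x})=\bra{x}(H^{(\infty)}-\beta I)\ket{x}$ as a Lyapunov function. The identity $\psi(H^{(\infty)}\ket{x})-\beta^2\psi(\ket{x})=\bra{x}(H^{(\infty)}-\beta I)^2(H^{(\infty)}+\beta I)\ket{x}\ge\beta\eta^2\norm{\ket{x}}_2^2$ holds for every vector. Absorbing $H^{(k)}-H^{(\infty)}$ into the margin $\beta\eta^2$ then gives $\psi(H^{(k)}\ket{x})\ge\beta^2\psi(\ket{x})$ for all late $k$. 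So the quadratic cone $\{\psi>0\}$ is invariant and $\psi$ grows geometrically.

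You instead use the spectral projector of the top eigenvalue and the norm-ratio cone $\opnorm{(I-P)v}\le\opnorm{Pv}$. This relies on the gap to the next eigenvalue and tracks two component estimates rather than one quadratic inequality. The paper's version needs no identification of the second eigenvalue and gives a single inequality on the whole space. Yours is more elementary and shows the growth of $\opnorm{Q^{(k)}w}$ directly. The endgame, pulling a top eigenvector back through $(Q^{(k_0)})^{-1}$, is the same in both.

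Your closing remark that positivity is not needed beyond Hermiticity of the limit is wrong for the argument as written. The bound $\opnorm{(I-P)H^{(\infty)}v}\le\beta\opnorm{(I-P)v}$, with $\beta$ the largest remaining eigenvalue, uses that every remaining eigenvalue has modulus at most $\beta$. The step $\beta\opnorm{(I-P)v}\le\beta\opnorm{Pv}$ also uses $\beta\ge0$. Both fail for, e.g., $H^{(\infty)}=\operatorname{diag}(2,1,-3)$, where cone invariance breaks.

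Here $H^{(\infty)}\succeq0$ as a limit of positive definite matrices, which is the same fact the paper uses for $H^{(\infty)}+\beta I\succeq\beta I$, so your proof stands. For a merely Hermitian limit you would take $P$ to be the spectral projector for the eigenvalues of largest modulus.
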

\begin{proof}
Suppose instead that $\lambda\coloneqq\lambda_{\max}(H^{(\infty)})>1$. Fix $\beta\in(1,\lambda)$ that is not an eigenvalue of $H^{(\infty)}$, let $\eta>0$ be the distance from $\beta$ to the spectrum of $H^{(\infty)}$, and consider the quadratic form $\psi(\ket{x})\coloneqq\bra{x}(H^{(\infty)}-\beta I)\ket{x}$. As a limit of positive definite matrices, $H^{(\infty)}$ is positive semidefinite, so that $H^{(\infty)}+\beta I\succeq\beta I$. Since all functions of $H^{(\infty)}$ commute, we obtain
\begin{equation}\label{eq:form-expansion}
\begin{split}
 &\psi(H^{(\infty)}\ket{x})-\beta^2\psi(\ket{x})\\
 &=\bra{x}(H^{(\infty)}-\beta I)^2(H^{(\infty)}+\beta I)\ket{x}\\
 &\ge\beta\eta^2\norm{\ket{x}}_2^2
\end{split}
\end{equation}
for every $\ket{x}$. Set $\zeta_k\coloneqq\opnorm{H^{(k)}-H^{(\infty)}}$. Adding and subtracting $H^{(\infty)}(H^{(\infty)}-\beta I)H^{(k)}$ gives
\begin{equation}\label{eq:form-perturbation-expansion}
\begin{aligned}
 &H^{(k)}(H^{(\infty)}-\beta I)H^{(k)}\\
 &\quad-H^{(\infty)}(H^{(\infty)}-\beta I)H^{(\infty)}\\
 &=(H^{(k)}-H^{(\infty)})(H^{(\infty)}-\beta I)H^{(k)}\\
 &\quad+H^{(\infty)}(H^{(\infty)}-\beta I)(H^{(k)}-H^{(\infty)}).
\end{aligned}
\end{equation}
Cauchy--Schwarz and submultiplicativity of the operator norm then give
\begin{equation}\label{eq:form-perturbation-bound}
\begin{aligned}
 &\bigl|\psi(H^{(k)}\ket{x})-\psi(H^{(\infty)}\ket{x})\bigr|\\
 &\quad\le\opnorm{H^{(\infty)}-\beta I}
   \bigl(\opnorm{H^{(k)}}+\opnorm{H^{(\infty)}}\bigr)
   \zeta_k\norm{\ket{x}}_2^2\\
 &\quad\le\opnorm{H^{(\infty)}-\beta I}
   \bigl(2\opnorm{H^{(\infty)}}+\zeta_k\bigr)
   \zeta_k\norm{\ket{x}}_2^2.
\end{aligned}
\end{equation}
The last step uses the triangle inequality $\opnorm{H^{(k)}}\le\opnorm{H^{(\infty)}}+\zeta_k$. Since $\zeta_k\to0$, the coefficient of $\norm{\ket{x}}_2^2$ is at most $\beta\eta^2$ for all $k\ge j$, with $j$ independent of $\ket{x}$. Combining this estimate with Eq.~\eqref{eq:form-expansion} yields $\psi(H^{(k)}\ket{x})\ge\beta^2\psi(\ket{x})$. Each such step therefore preserves positivity of $\psi$ and multiplies its positive value by at least $\beta^2$.

Let $\ket{\mu}$ be a unit eigenvector of $H^{(\infty)}$ for $\lambda$, so that $\psi(\ket{\mu})=\lambda-\beta>0$, and set $\ket{y}\coloneqq\bigl(Q^{(j)}\bigr)^{-1}\ket{\mu}$. Then $\psi(Q^{(k)}\ket{y})\ge\beta^{2(k-j)}(\lambda-\beta)$ grows exponentially for $k\ge j$, whereas $\psi(Q^{(k)}\ket{y})\le\opnorm{H^{(\infty)}-\beta I}\opnorm{Q^{(k)}}^2\norm{\ket{y}}_2^2$ is bounded. This contradiction proves $H^{(\infty)}\preceq I$.
\end{proof}

\begin{theorem}[Global convergence]
\label{thm:global}
Let $C=C^\dagger$, let $s\ge0$ be an admissible shift (Eq.~\eqref{eq:admissible-shift}), and let $X^{(0)},S^{(0)}\succeq0$ satisfy Eq.~\eqref{eq:support-condition}. Then the generalized RW iteration in Eq.~\eqref{eq:rw} is well defined for all $k\ge0$, with $Y^{(k)}\succ0$ and the normalization identity in Eq.~\eqref{eq:pair-normalization} holding at every step. The sequences $X^{(k)}$, $S^{(k)}$, and $Y^{(k)}$ converge to limits $X^{(\infty)},S^{(\infty)}\succeq0$ and $Y^{(\infty)}\succ0$. The matrix $\Zopt\coloneqq Y^{(\infty)}-sI_B$ is the unique optimizer of the normalized dual problem in Eq.~\eqref{eq:dual}, and
\begin{subequations}\label{eq:limit-certificate}
\begin{align}
 \tr_A X^{(\infty)}+S^{(\infty)}&=I_B,\label{eq:limit-normalized}\\
 I_A\otimes\Zopt&\succeq C,\quad\Zopt\succeq0,\label{eq:limit-dual-feasibility}\\
 \tr(CX^{(\infty)})&=\fopt,\label{eq:limit-primal-value}\\
 \tr\Zopt&=\fopt.\label{eq:limit-dual-value}
\end{align}
\end{subequations}
The limiting pair satisfies complementary slackness,
\begin{equation}\label{eq:limit-complementarity}
\begin{split}
 (I_A\otimes\Zopt-C)X^{(\infty)}&=0,\\
 \Zopt S^{(\infty)}&=0.
\end{split}
\end{equation}
\end{theorem}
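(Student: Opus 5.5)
We prove the theorem by passing to the limit in the augmented formulation of Eq.~\eqref{eq:augmented-data}. First we recover convergence of the primal matrices from Lemmas~\ref{lem:finite-length} and~\ref{lem:nonsingular}. Next we convert Lemma~\ref{lem:cone} into dual feasibility of $\Zopt$. Finally we close the duality gap using the sandwich bounds of Lemma~\ref{lem:ascent}. Well-definedness is already settled by the induction in the subsection on support and polar normalization: every $Y^{(k)}\succ0$, $\Gamma^{(k+1)}=H^{(k)}\Gamma^{(k)}$ stays onto $\KCal$, and Eq.~\eqref{eq:pair-normalization} holds at every step.

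\emph{Step 1: convergence.} Combining Eqs.~\eqref{eq:factor-recursion}, \eqref{eq:factor-product}, and~\eqref{eq:compressed-w} gives
\[
X_+^{(k+1)}=(I_{A_+}\otimes (Y^{(k)})^{-1})\,C_+^{1/2}\Gamma^{(k)}(\Gamma^{(k)})^\dagger C_+^{1/2}\,(I_{A_+}\otimes (Y^{(k)})^{-1}).
\]
Lemma~\ref{lem:finite-length} gives $\Gamma^{(k)}\to\Gamma^{(\infty)}$. Lemma~\ref{lem:nonsingular} gives $Y^{(k)}\to Y^{(\infty)}\succ0$, and inversion is continuous at $Y^{(\infty)}$. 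Hence $X_+^{(k)}$ converges to some $X_+^{(\infty)}\succeq0$. The block diagonal form is closed, so the limit is $X^{(\infty)}\oplus S^{(\infty)}$. Continuity of $\tr_{A_+}$ turns Eq.~\eqref{eq:pair-normalization} into Eq.~\eqref{eq:limit-normalized}. Continuity also gives the fixed point identity
\[
(Y^{(\infty)})^2=\tr_{A_+}(C_+X_+^{(\infty)}C_+),
\]
which I record for the uniqueness argument.

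\emph{Step 2: dual feasibility.} This is the step I expect to be the main obstacle, because the bound must be transferred off the support space $\KCal$. Lemma~\ref{lem:nonsingular} supplies the hypotheses of Lemma~\ref{lem:cone}: $H^{(k)}\succ0$ on $\KCal$, $H^{(k)}\to H^{(\infty)}$, every $Q^{(k)}$ is invertible, and the $Q^{(k)}$ are bounded. Lemma~\ref{lem:cone} then gives $H^{(\infty)}\preceq I_{\KCal}$. Set $A\coloneqq(I_{A_+}\otimes (Y^{(\infty)})^{-1/2})C_+^{1/2}$.
\begin{itemize}
\item $A^\dagger A$ vanishes on $\KCal^\perp$ and equals $H^{(\infty)}$ on $\KCal$, so $A^\dagger A\preceq I$ on the whole space.
\item $AA^\dagger$ has the same nonzero spectrum as $A^\dagger A$, hence $AA^\dagger\preceq I$.
\item Conjugating $AA^\dagger\preceq I$ by $I_{A_+}\otimes (Y^{(\infty)})^{1/2}$ gives $C_+\preceq I_{A_+}\otimes Y^{(\infty)}$.
\end{itemize}
Reading this inequality block by block on $(\HCal_A\otimes\HCal_B)\oplus\HCal_B$ gives two facts. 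The first block is $C+sI_{AB}\preceq I_A\otimes Y^{(\infty)}$, that is, $I_A\otimes\Zopt\succeq C$. The second block is $sI_B\preceq Y^{(\infty)}$, that is, $\Zopt\succeq0$. Together these are Eq.~\eqref{eq:limit-dual-feasibility}.

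\emph{Step 3: values and complementary slackness.} Lemma~\ref{lem:ascent} sandwiches $\tr Y^{(k)}$ between $f_+^{(k)}$ and $f_+^{(k+1)}$. Letting $k\to\infty$ gives $\tr Y^{(\infty)}=f_+^{(\infty)}$. By Eq.~\eqref{eq:objective-offset},
\[
\tr\Zopt=\tr Y^{(\infty)}-s\,d_B=f^{(\infty)}=\tr(CX^{(\infty)}).
\]
The pair $X^{(\infty)}$ is primal feasible by Eq.~\eqref{eq:limit-normalized}, and $\Zopt$ is dual feasible by Step 2. Weak duality, Eq.~\eqref{eq:weak-duality}, therefore forces both values to equal $\fopt$, which gives Eqs.~\eqref{eq:limit-primal-value} and~\eqref{eq:limit-dual-value}.

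In the same identity, with $D=I_B$ and slack $S^{(\infty)}=I_B-\tr_AX^{(\infty)}$, the gap is a sum of two traces of products of PSD matrices. Both traces must therefore vanish. For PSD $P,R$, the condition $\tr(PR)=0$ implies $PR=0$. This yields Eq.~\eqref{eq:limit-complementarity}.

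\emph{Step 4: uniqueness.} Let $Z'$ be any normalized dual optimizer and set $Y'\coloneqq Z'+sI_B\succeq0$. The primal pair $X_+^{(\infty)}$ is optimal, so the argument of Step 3 gives complementary slackness for $Z'$ as well. In augmented form this reads
\[
(I_{A_+}\otimes Y')X_+^{(\infty)}=C_+X_+^{(\infty)}.
\]
Taking adjoints gives the same relation with $X_+^{(\infty)}$ multiplied on the right. Therefore
\[
C_+X_+^{(\infty)}C_+=(I_{A_+}\otimes Y')X_+^{(\infty)}(I_{A_+}\otimes Y').
\]
Apply $\tr_{A_+}$ and use $\tr_{A_+}X_+^{(\infty)}=I_B$ from Eq.~\eqref{eq:limit-normalized}. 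The fixed point identity of Step 1 then gives $(Y')^2=(Y^{(\infty)})^2$. Both $Y'$ and $Y^{(\infty)}$ are PSD, and the positive square root is unique, so $Y'=Y^{(\infty)}$. Hence $Z'=\Zopt$.
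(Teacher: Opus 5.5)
Your proof is correct and follows the paper's argument in Steps 1--3. You reconstruct $X_+^{(k)}$ from $\Gamma^{(k)}$ and $(Y^{(k)})^{-1}$ without inverting $C_+$, pass Lemma~\ref{lem:cone} through $A^\dagger A\preceq I\Rightarrow AA^\dagger\preceq I$ and a congruence to obtain $I_{A_+}\otimes Y^{(\infty)}\succeq C_+$, and close the gap by weak duality. The only real departure is uniqueness: you compare squares through the fixed-point identity $(Y^{(\infty)})^2=\tr_{A_+}(C_+X_+^{(\infty)}C_+)$ and use uniqueness of the positive square root, whereas the paper subtracts the two complementary slackness relations, applies $\tr_A$, and adds them using $\tr_AX^{(\infty)}+S^{(\infty)}=I_B$; both arguments are valid.
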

\begin{proof}
The induction preceding Eq.~\eqref{eq:augmented-update} gives positive definite normalizers and pair normalization at every step. Lemmas~\ref{lem:finite-length} and~\ref{lem:nonsingular} prove that $\Gamma^{(k)}$ and $Y^{(k)}$ converge, with $Y^{(\infty)}\succ0$. Rewriting Eq.~\eqref{eq:factor-recursion} using Eq.~\eqref{eq:compressed-w} gives
\begin{equation}\label{eq:factor-reconstruction}
 G^{(k+1)}=\bigl(\Gamma^{(k)}\bigr)^\dagger C_+^{1/2}
 (I_{A_+}\otimes\bigl(Y^{(k)}\bigr)^{-1}).
\end{equation}
Because $\Gamma^{(k)}$ converges and $Y^{(\infty)}\succ0$ ensures convergence of $(Y^{(k)})^{-1}$, the right-hand side of Eq.~\eqref{eq:factor-reconstruction} has a limit. It follows that $G^{(k)}$, $V^{(k)}=\hat\Theta(G^{(k)})$, and $X_+^{(k)}=(G^{(k)})^\dagger G^{(k)}$ converge. In particular, the diagonal blocks $X^{(k)}$ and $S^{(k)}$ have positive semidefinite limits. Continuity gives Eq.~\eqref{eq:limit-normalized} and $\tr(CX^{(\infty)})=f^{(\infty)}$. This reconstruction does not invert $C_+$ and is therefore valid even when it is singular.

By Eqs.~\eqref{eq:h-convergence} and~\eqref{eq:q-bounded} of Lemma~\ref{lem:nonsingular}, the operators $H^{(k)}$ and the invertible auxiliary matrices $Q^{(k)}$ satisfy the hypotheses of Lemma~\ref{lem:cone} on $\KCal$. Thus $H^{(\infty)}\preceq I_{\KCal}$. Since $C_+^{1/2}=\Pi_{\KCal}C_+^{1/2}\Pi_{\KCal}$, extension by zero on $\KCal^\perp$ gives
\begin{equation}\label{eq:limit-contraction}
 C_+^{1/2}(I_{A_+}\otimes\bigl(Y^{(\infty)}\bigr)^{-1})C_+^{1/2}
 \preceq I_{A_+}\otimes I_B.
\end{equation}
For $L\coloneqq(I_{A_+}\otimes\bigl(Y^{(\infty)}\bigr)^{-1/2})C_+^{1/2}$, this is $L^\dagger L\preceq I_{A_+}\otimes I_B$. The equivalent inequality $LL^\dagger\preceq I_{A_+}\otimes I_B$, after congruence with $I_{A_+}\otimes\bigl(Y^{(\infty)}\bigr)^{1/2}$, yields
\begin{equation}\label{eq:augmented-dual-feasible}
 I_{A_+}\otimes Y^{(\infty)}\succeq C_+.
\end{equation}
With $\Zopt\coloneqq Y^{(\infty)}-sI_B$, the block identity
\begin{equation}\label{eq:dual-block-identity}
\begin{split}
 I_{A_+}\otimes Y^{(\infty)}-C_+
 &=(I_A\otimes\Zopt-C)\oplus\Zopt\succeq0
\end{split}
\end{equation}
proves both constraints of the normalized dual problem in Eq.~\eqref{eq:dual}. Weak duality for the normalized problems, the limit of Eq.~\eqref{eq:original-ascent}, and primal feasibility give
\begin{equation}\label{eq:sandwich}
 \fopt\le\tr\Zopt=\tr Y^{(\infty)}-s\,d_B\le f^{(\infty)}\le\fopt.
\end{equation}
The chain in Eq.~\eqref{eq:sandwich} begins and ends at $\fopt$, so every inequality is an equality. This establishes Eqs.~\eqref{eq:limit-primal-value} and~\eqref{eq:limit-dual-value} and identifies the residual in Eq.~\eqref{eq:gap} as
\begin{equation}\label{eq:original-objective-gap}
 r^{(k)}=f_+^{(\infty)}-f_+^{(k)}=\fopt-f^{(k)}
 \qquad(k\ge1).
\end{equation}
The same identity holds at $k=0$ for a normalized initial pair.

Equality in Eq.~\eqref{eq:weak-duality} with $D=I_B$ makes both of its nonnegative terms vanish, so that
\begin{subequations}\label{eq:limit-trace-complementarity}
\begin{align}
 \tr\bigl[(I_A\otimes\Zopt-C)X^{(\infty)}\bigr]&=0,\label{eq:limit-trace-complementarity-primal}\\
 \tr\bigl[\Zopt(I_B-\tr_A X^{(\infty)})\bigr]=\tr(\Zopt S^{(\infty)})&=0.\label{eq:limit-trace-complementarity-slack}
\end{align}
\end{subequations}
Since the factors in these zero-trace products are positive semidefinite, Eq.~\eqref{eq:limit-trace-complementarity} implies Eq.~\eqref{eq:limit-complementarity}.

For uniqueness, let $Z$ be any other dual optimizer. The same equality in weak duality gives $(I_A\otimes Z-C)X^{(\infty)}=0$ and $ZS^{(\infty)}=0$. Subtracting Eq.~\eqref{eq:limit-complementarity} and taking the partial trace in the first relation yields
\begin{equation}\label{eq:dual-uniqueness-kernels}
\begin{split}
 (Z-\Zopt)\tr_AX^{(\infty)}&=0,\\
 (Z-\Zopt)S^{(\infty)}&=0.
\end{split}
\end{equation}
Adding these relations and using Eq.~\eqref{eq:limit-normalized} gives $Z=\Zopt$.
\end{proof}

The primal optimizer need not be unique, and the original dual optimizer $\Zopt$ can be singular even though $Y^{(\infty)}\succ0$. For $C\succeq0$ with $\tr_AC\succ0$, take $s=0$ and $S^{(0)}=0$. The support condition then reduces to $\ker X^{(0)}\cap\supp(C)=\{0\}$, and the theorem gives the original RW convergence result in Eq.~\eqref{eq:rw-psd}, including singular cost matrices and admissible singular seeds. No slack matrix needs to be stored in this specialization.

The channel and measurement applications use this zero-shift specialization. Their original normalization is recovered because $S^{(k)}=0$.

\begin{corollary}[RW iteration for entanglement fidelity]
\label{cor:rw}
Let $\rho$ be an input state and $\ECal$ a noise channel with $\ECal(\rho)\succ0$, and let $X^{(0)}\coloneqq J((\RCal^{(0)})^\dagger)\succeq0$ satisfy $\ker X^{(0)}\cap\supp(C_{\rho,\ECal})=\{0\}$.
Then the RW iteration of Eq.~\eqref{eq:rw-psd} converges to a channel maximizing $\Fe(\rho,\RCal\circ\ECal)$.
\end{corollary}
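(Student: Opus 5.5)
The corollary will follow by specializing Theorem~\ref{thm:global} to the zero shift. I take $C\coloneqq C_{\rho,\ECal}$ from Eq.~\eqref{eq:fidelity-cost}, $D=I_B$, $s=0$, and $S^{(0)}=0$, and check the hypotheses in order.

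First, $C_{\rho,\ECal}\succeq0$, because it is a congruence of the Choi matrix $J(\ECal)\succeq0$. So $\widetilde C=C\succeq0$. By Eq.~\eqref{eq:fidelity-marginal}, $\tr_A C=\ECal(\rho^2)$. The discussion after that equation shows that $\ker\ECal(\rho^2)=\ker\ECal(\rho)$, which is trivial by assumption, so $\tr_AC\succ0$. Hence $s=0$ is admissible in the sense of Eq.~\eqref{eq:admissible-shift}. For the support condition in Eq.~\eqref{eq:support-condition}, the first line is exactly the stated hypothesis on $X^{(0)}$. The second line is vacuous because $\supp(0\cdot I_B)=\{0\}$.

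With $s=0$ and $S^{(0)}=0$, every slack iterate $S^{(k)}$ stays zero. The generalized recursion in Eq.~\eqref{eq:rw} therefore reduces to Eq.~\eqref{eq:rw-psd}. Theorem~\ref{thm:global} then gives the following:
\begin{itemize}
\item every step is well defined;
\item $X^{(k)}\to X^{(\infty)}\succeq0$ with $\tr_AX^{(\infty)}=I_B$ (since $S^{(\infty)}=0$);
\item $\tr(CX^{(\infty)})=\fopt(C,I_B)$.
\end{itemize}

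Finally, I translate back to channels. Any $X\succeq0$ with $\tr_AX=I_B$ is the Choi matrix $J(\RCal^\dagger)$ of a unique map whose adjoint $\RCal^\dagger$ is completely positive and unital. Equivalently, $\RCal$ is CPTP. The identity $\Fe(\rho,\RCal\circ\ECal)=\tr(C_{\rho,\ECal}X_{\RCal})$ holds over all CPTNI recovery maps, so $\fopt$ equals the maximum in Eq.~\eqref{eq:recovery-optimum}. The limit $\RCal^{(\infty)}$ is therefore an optimal channel. Since $J$ is a linear bijection, the iterated maps $\RCal^{(k)}$ also converge to $\RCal^{(\infty)}$.

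The only step needing care is this last identification. Convergence of $X^{(k)}$ means convergence of the maps, and the limit is trace preserving, not merely trace nonincreasing; both facts rest on $S^{(k)}\equiv0$ and linearity of the Choi isomorphism. Nothing else requires new estimates.
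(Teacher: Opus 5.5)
Your proposal is correct and follows the paper's proof. You specialize Theorem~\ref{thm:global} to $s=0$ and $S^{(0)}=0$, verify admissibility through $\tr_A C_{\rho,\ECal}=\ECal(\rho^2)\succ0$, and take the normalized optimal limit; your remarks on $S^{(k)}\equiv0$ and on identifying the limit, via the Choi matrix, with a CPTP recovery map only make explicit what the paper states briefly after the theorem.
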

\begin{proof}
The fidelity is the linear objective in Eq.~\eqref{eq:primal} with the cost matrix $C_{\rho,\ECal}$ of Eq.~\eqref{eq:fidelity-cost}. Its partial trace $\ECal(\rho^2)$ is positive definite because $\ker\ECal(\rho^2)=\ker\ECal(\rho)$, as established after Eq.~\eqref{eq:fidelity-marginal}. Taking $s=0$ and $S^{(0)}=0$, the hypothesis on $X^{(0)}$ gives Eq.~\eqref{eq:support-condition} for this cost matrix.
Theorem~\ref{thm:global} therefore applies and gives a normalized globally optimal limit.
\end{proof}

\begin{corollary}[JRF iteration for mixed state ensembles]
\label{cor:jrf}
Let $\{(p_i,\rho_i)\}_{i=1}^m$ be an ensemble with $\sum_i p_i\rho_i\succ0$, and let the initial operators $\{M_i^{(0)}\}_{i=1}^m$ be positive semidefinite with $\ker M_i^{(0)}\cap\supp(\rho_i)=\{0\}$ for every $i$.
Then the JRF iteration of Eq.~\eqref{eq:jrf} converges to a POVM maximizing $\Psucc(\{(p_i,\rho_i)\}_i)$.
\end{corollary}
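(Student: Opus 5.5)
The plan is to reduce Corollary~\ref{cor:jrf} to the zero-shift case of Theorem~\ref{thm:global} through the block-diagonal correspondence of Theorem~\ref{thm:jrf-rw}. Take $C=C_{\mathrm{cq}}=\sum_i\ket i\bra i\otimes\sigma_i$ with $\sigma_i=p_i\rho_i$, the seed $X^{(0)}=\sum_i\ket i\bra i\otimes M_i^{(0)}$ as in Eq.~\eqref{eq:jrf-seed}, the shift $s=0$, and $S^{(0)}=0$.

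\textbf{Admissibility.} The cost matrix $C_{\mathrm{cq}}$ is PSD, and $\tr_A C_{\mathrm{cq}}=\overline{\rho}\succ0$ by hypothesis. Hence $s=0$ satisfies Eq.~\eqref{eq:admissible-shift}. The second support condition in Eq.~\eqref{eq:support-condition} is vacuous at $s=0$.

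\textbf{First support condition.} Both $C_{\mathrm{cq}}$ and $X^{(0)}$ are block diagonal in the classical basis. Therefore
\begin{equation*}
 \supp(C_{\mathrm{cq}})=\bigoplus_i\ket i\otimes\supp(\sigma_i),
 \qquad
 \ker X^{(0)}=\bigoplus_i\ket i\otimes\ker M_i^{(0)}.
\end{equation*}
Their intersection is $\bigoplus_i\ket i\otimes\bigl(\ker M_i^{(0)}\cap\supp(\sigma_i)\bigr)$. Since $p_i>0$ gives $\supp(\sigma_i)=\supp(\rho_i)$, the hypothesis makes each summand zero. This is the only point requiring care: the direct-sum decomposition of kernels and supports must hold for block-diagonal PSD matrices, which it does because both decompose along the orthogonal subspaces $\ket i\otimes\HCal_B$.

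\textbf{Convergence.} Theorem~\ref{thm:global} now applies. Every normalizer $Y^{(k)}$ is positive definite, so every JRF step is defined. The RW iterates converge to some $X^{(\infty)}$ with $\tr_A X^{(\infty)}=I_B$ (since $S^{(k)}=0$) and $\tr(C_{\mathrm{cq}}X^{(\infty)})=\fopt$. By Theorem~\ref{thm:jrf-rw}, $X^{(k)}=\sum_i\ket i\bra i\otimes M_i^{(k)}$ with $M_i^{(k)}$ the JRF iterates. Block-diagonal matrices form a closed set, so the limit is block diagonal with blocks $M_i^{(\infty)}=\lim_k M_i^{(k)}\succeq0$, and $\sum_i M_i^{(\infty)}=I_B$. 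Thus $\{M_i^{(\infty)}\}$ is a POVM with success probability $\tr(C_{\mathrm{cq}}X^{(\infty)})=\fopt$.

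\textbf{Optimality.} This value is optimal over all POVMs because every POVM $\{M_i\}$ gives a feasible $X_M$ with $\tr(C_{\mathrm{cq}}X_M)=\Psucc$, so $\max\Psucc\le\fopt$.
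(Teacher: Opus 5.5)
Your proposal is correct and follows the paper's proof essentially verbatim: zero shift with $S^{(0)}=0$, the blockwise decomposition of $\ker X^{(0)}$ and $\supp(C_{\mathrm{cq}})$ to verify Eq.~\eqref{eq:support-condition}, and then Theorem~\ref{thm:jrf-rw} combined with Theorem~\ref{thm:global}. Your final step, that every POVM gives a feasible SDP point and therefore the limiting POVM attaining $\fopt$ maximizes $\Psucc$, only spells out what the paper leaves implicit from Section~\ref{sec:optimization}.
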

\begin{proof}
Use the cost matrix in Eq.~\eqref{eq:cq-cost}, whose partial trace is positive definite by Eq.~\eqref{eq:ensemble-marginal}, and the block diagonal seed
$X^{(0)}\coloneqq \sum_i|i\rangle\langle i|\otimes M_i^{(0)}$.
Since $p_i>0$ gives $\supp(\sigma_i)=\supp(\rho_i)$, we have $\ker X^{(0)}=\bigoplus_i\ker M_i^{(0)}$ and $\supp(C_{\mathrm{cq}})=\bigoplus_i\supp(\rho_i)$, so that the hypothesis on the family gives Eq.~\eqref{eq:support-condition} at $s=0$ with $S^{(0)}=0$.
The RW iteration preserves these blocks and acts on them as
Eq.~\eqref{eq:jrf}, and Theorem~\ref{thm:global} applies.
\end{proof}

The uniform identity POVM satisfies the initialization hypothesis, and so do the singular seeds $M_i^{(0)}=\sigma_i^{\alpha-2}$ of Section~\ref{sec:iterations}.
The JRF iteration started at the pretty good measurement, or at any power PGM of Eq.~\eqref{eq:pgm}, therefore converges as well,
since its iterates form the tail of the sequence generated from such a seed.
If the ensemble average is singular, the corollary applies on its
support. Any normalized extension of the limiting measurement to the
orthogonal complement preserves its success probability.

\section{Iteration Bounds}
\label{sec:complexity}

This section bounds the number of iterations as $\varepsilon\to0$, with the dimensions $d_A,d_B$, the cost matrix $C$, the admissible shift $s$, and the seed pair $X^{(0)},S^{(0)}$ held fixed. We assume the hypotheses of Theorem~\ref{thm:global} and use the notation of Section~\ref{sec:convergence}. As shown in the proof of Lemma~\ref{lem:finite-length}, there exist constants $\xi>0$ and $\theta\in[1/2,1)$ such that the \L{}ojasiewicz gradient inequality for $f_+$ holds along all sufficiently late iterates. We fix one such pair, whose values are not determined explicitly by the general argument. By global optimality and the objective offset, the residual in Eq.~\eqref{eq:gap} satisfies
\begin{equation}\label{eq:objective-gap-optimality}
 r^{(k)}=f_+^{(\infty)}-f_+^{(k)}=\fopt-f^{(k)}\qquad(k\ge1).
\end{equation}
The constants and the accuracy range of the following bounds may depend on the dimensions, cost matrix, shift, and initialization. For a fixed general $D\ne0$, the inverse substitutions in Eq.~\eqref{eq:constraint-transformation} preserve objective and reduced dual certificate gaps, while bounding the original Frobenius matrix error by $\opnorm{D}$ times its normalized counterpart. The same accuracy exponents therefore apply, with constants that may also depend on $D$.

\begin{table}[t]
 \centering
 \caption{Iteration numbers sufficient for an absolute accuracy $\varepsilon$ in the generalized RW iteration, as $\varepsilon\to0$ with the dimensions, cost matrix, shift, and initialization fixed. Lemma~\ref{lem:iteration-number} proves the three rows using Eq.~\eqref{eq:rate-recurrence}, Lemma~\ref{lem:certificate}, and Eq.~\eqref{eq:matrix-error}, respectively.}
 \label{tab:complexity}
 \small
 \begin{tabular}{@{}lcc@{}}
  \toprule
  Target & $\theta=\frac{1}{2}$ & $\frac{1}{2}<\theta<1$\\
  \midrule
  $r^{(k)}\le\varepsilon$ & $\OCal\bigl(\log\frac{1}{\varepsilon}\bigr)$ & $\OCal\bigl(\varepsilon^{-(2\theta-1)}\bigr)$\\[2pt]
  $r_{\mathrm{cert}}^{(k)}\le\varepsilon$ & $\OCal\bigl(\log\frac{1}{\varepsilon}\bigr)$ & $\OCal\bigl(\varepsilon^{-2(2\theta-1)}\bigr)$\\[2pt]
  $\fnorm{X^{(k)}-X^{(\infty)}}\le\varepsilon$ & $\OCal\bigl(\log\frac{1}{\varepsilon}\bigr)$ & $\OCal\bigl(\varepsilon^{-\frac{2\theta-1}{1-\theta}}\bigr)$\\
  \bottomrule
 \end{tabular}
\end{table}

We consider the objective gap $r^{(k)}$, a computable certificate gap, and the matrix error $\fnorm{X^{(k)}-X^{(\infty)}}$. The matrix $Y^{(k)}-sI_B$ need not satisfy either constraint of the normalized dual problem in Eq.~\eqref{eq:dual}. We therefore define
\begin{subequations}\label{eq:certificate}
\begin{align}
 \tau^{(k)}&\coloneqq\max\left\{\begin{aligned}
 &0,\ \lambda_{\max}(\widetilde C-I_A\otimes Y^{(k)}),\\[-2pt]
 &s-\lambda_{\min}(Y^{(k)})
 \end{aligned}\right\},\label{eq:certificate-shift}\\
 \widetilde{Y}^{(k)}&\coloneqq Y^{(k)}+\tau^{(k)}I_B,\label{eq:certificate-dual}\\
 Z^{(k)}&\coloneqq\widetilde{Y}^{(k)}-sI_B,\label{eq:certificate-original-dual}\\
 r_{\mathrm{cert}}^{(k)}&\coloneqq\tr Z^{(k)}-f^{(k)}.\label{eq:certificate-gap}
\end{align}
\end{subequations}
The two spectral terms enforce $I_A\otimes Z^{(k)}\succeq C$ and $Z^{(k)}\succeq0$, respectively, so $Z^{(k)}$ is feasible for the normalized dual problem in Eq.~\eqref{eq:dual} for every $k\ge1$. Weak duality therefore ensures that $r_{\mathrm{cert}}^{(k)}\le\varepsilon$ certifies an objective error of at most $\varepsilon$ in exact arithmetic.

\begin{lemma}[Certificate]
\label{lem:certificate}
For every $k\ge1$,
\begin{equation}\label{eq:certificate-bounds}
 r^{(k)}\le r_{\mathrm{cert}}^{(k)}\le r^{(k)}+c_{\mathrm{cert}}\sqrt{r^{(k)}},
\end{equation}
where $c_{\mathrm{cert}}$ depends only on the dimensions, $C$, and $s$.
\end{lemma}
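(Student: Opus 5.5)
The lower bound follows from weak duality. For $k\ge1$, the matrix $Z^{(k)}$ is feasible for the normalized dual in Eq.~\eqref{eq:dual}, so $\tr Z^{(k)}\ge\fopt$. Combined with Eq.~\eqref{eq:objective-gap-optimality}, this gives $r_{\mathrm{cert}}^{(k)}=\tr Z^{(k)}-f^{(k)}\ge\fopt-f^{(k)}=r^{(k)}$.

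For the upper bound I will write
\[
r_{\mathrm{cert}}^{(k)}=\bigl(\tr Y^{(k)}-s\,d_B-f^{(k)}\bigr)+d_B\tau^{(k)}
\]
and bound the two pieces separately. For the first piece, Lemma~\ref{lem:ascent} gives $\tr Y^{(k)}-s\,d_B\le f^{(k+1)}\le\fopt$, so this piece is at most $r^{(k)}$. It therefore suffices to show $\tau^{(k)}\le c\sqrt{r^{(k)}}$, with $c$ depending only on the dimensions, $C$, and $s$.

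The bound on $\tau^{(k)}$ is the main step. Both spectral terms in Eq.~\eqref{eq:certificate-shift} are controlled by the augmented quantity $\lambda_{\max}(C_+-I_{A_+}\otimes Y^{(k)})$. Indeed, $C_+-I_{A_+}\otimes Y^{(k)}=(\widetilde C-I_A\otimes Y^{(k)})\oplus(sI_B-Y^{(k)})$, and the largest eigenvalue of the second block is $s-\lambda_{\min}(Y^{(k)})$. So I need
\[
\lambda_{\max}(C_+-I_{A_+}\otimes Y^{(k)})_+\lesssim\sqrt{r^{(k)}}.
\]
My plan is to use the Stiefel picture together with the polar identity~\eqref{eq:polar-identity}. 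Take a unit vector $\ket{u}$ achieving the top eigenvalue, with $\ket{u}=\sum_a\ket{a}\otimes\ket{u_a}$ written in $A_+\otimes B$ form. This produces a tangent or rank-one perturbation direction of $V^{(k+1)}$ on $\MCal$, namely replacing a column component along the direction encoded by $\ket{u}$. The second-order increment of $f_+$ along that direction at $V^{(k+1)}$ equals a multiple of $\bra{u}(C_+-I\otimes Y^{(k+1)})\ket{u}$ plus first-order terms. The first-order terms are bounded by the Riemannian gradient, which by Lemma~\ref{lem:gradient-bound} and Eq.~\eqref{eq:gap-decrement} is at most $\kappa\sqrt{r^{(k-1)}-r^{(k)}}$.

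A cleaner route, which I would try first, avoids second-order analysis entirely. The iteration's objective cannot exceed $\fopt$, so the step from any normalized point increases $f_+$ by at most $r$. The argument would go as follows:
\begin{itemize}
\item Consider the feasible augmented point $X_+'=X_+^{(k)}+t\,(\ket{u}\bra{u}$ suitably corrected to keep $\tr_{A_+}X_+'=I_B)$.
\item Obtain the correction by a congruence $(I\otimes N)(\cdot)(I\otimes N)$ with $N=(I_B+t\,\tr_{A_+}\ket{u}\bra{u})^{-1/2}$.
\item Expand $\tr(C_+X_+')$ to first order in $t$. The leading term is $t\bigl(\bra{u}C_+\ket{u}-\tr(\tr_{A_+}(\ket{u}\bra{u})\,\cdot\,\tfrac12\{\cdot\})\bigr)$, where the bracket should be $\tr_{A_+}(C_+X_+^{(k)})$ symmetrized. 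This is where $Y^{(k)}$, as the polar factor, enters: on the normalized iterate, $\tr_{A_+}(X_+^{(k)}C_+)$ relates to $Y^{(k-1)}$ through Eq.~\eqref{eq:polar-identity} up to $\OCal(\fnorm{\delta V})$ errors.
\item Since $\tr(C_+X_+')\le f_+^{(\infty)}=f_+^{(k)}+r^{(k)}$, and the second-order term is $\OCal(t^2)$ with constants depending on $\opnorm{C_+}$ and $d_B$, the first-order coefficient, roughly $\lambda_{\max}(C_+-I\otimes Y)$ minus $\OCal(\fnorm{\delta\Gamma})$, is at most $r/t+\OCal(t)$.
\item Optimizing with $t=\sqrt{r^{(k)}}$ gives $\OCal(\sqrt{r^{(k)}})$. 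The $\fnorm{\delta\Gamma^{(k)}}\le\sqrt{r^{(k)}}$ errors have the same order.
\end{itemize}

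The hard part is identifying the first-order coefficient exactly with the eigenvalue gap involving $Y^{(k)}$, not some other symmetrization. This requires checking how $\tr_{A_+}(C_+X_+^{(k)})$ compares to $Y^{(k)}$. I expect the discrepancy to be bounded by $\fnorm{\delta V^{(k)}(Y^{(k)})^{1/2}}^2=2(\tr Y^{(k)}-f_+^{(k)})\le2r^{(k)}$ from Eq.~\eqref{eq:gap-trace}, together with Cauchy--Schwarz, which costs only $\OCal(\sqrt{r^{(k)}})$. The uniform bound $\opnorm{Y^{(k)}}\le\sqrt{d_B}\opnorm{C_+}$ from Eq.~\eqref{eq:uniform-bound-z} keeps all constants independent of $k$ and of the seed.
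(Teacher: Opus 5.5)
Your lower bound and the reduction $r_{\mathrm{cert}}^{(k)}\le r^{(k)}+d_B\tau^{(k)}$, with $\tau^{(k)}=\max\{0,\lambda_{\max}(C_+-I_{A_+}\otimes Y^{(k)})\}$, match the paper. For the key estimate on $\tau^{(k)}$, your second route is correct but genuinely different.

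The paper compares $Y^{(k)}$ with the limit. It takes $\Delta_+=I_{A_+}\otimes Y^{(\infty)}-C_+\succeq0$ from Theorem~\ref{thm:global} and uses $\tr(\Delta_+X_+^{(k)})=r^{(k)}$ to get $\fnorm{X_+^{(k)}\Delta_+}=\OCal(\sqrt{r^{(k)}})$. It then expands $(Y^{(k)})^2-(Y^{(\infty)})^2$ through $\Delta_+$, removes the square using $\lambda_{\min}(Y^{(\infty)})>0$, and concludes from $C_+-I_{A_+}\otimes Y^{(k)}\preceq I_{A_+}\otimes(Y^{(\infty)}-Y^{(k)})$.

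You instead run a first-order optimality test at the current iterate, using a renormalized rank-one bump. The crux you flagged resolves as you expected, but with $Y^{(k)}$, not $Y^{(k-1)}$. Since $\hat\Theta(G)^\dagger\hat\Theta(G')=\tr_{A_+}(G^\dagger G')$, Eq.~\eqref{eq:polar-identity} gives $\tr_{A_+}(X_+^{(k)}C_+)=(V^{(k)})^\dagger V^{(k+1)}Y^{(k)}$. Its Hermitian part therefore differs from $Y^{(k)}$ in norm by at most $\fnorm{\delta V^{(k)}(Y^{(k)})^{1/2}}\opnorm{Y^{(k)}}^{1/2}\le\sqrt{2r^{(k)}}\,d_B^{1/4}\opnorm{C_+}^{1/2}$, by Eqs.~\eqref{eq:gap-trace} and~\eqref{eq:uniform-bound-z}.

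Two details should be stated explicitly. First, the bump $X_+'$ is not block diagonal, but $\tr(C_+X_+')\le\fopt+s\,d_B$ still holds: $C_+$ is block diagonal, and the diagonal blocks of $X_+'$ form a feasible pair. Second, the Taylor remainder is uniform for $t\in[0,1]$, with constants depending only on $d_B$ and $\opnorm{C_+}$. Taking $t=\min\{1,\sqrt{r^{(k)}}\}$, together with the trivial bound $\tau^{(k)}\le\opnorm{C_+}$ when $r^{(k)}>1$, finishes the argument.

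What each route buys: yours gives a constant depending only on $d_B$ and $\opnorm{C_+}$. It involves no $1/\lambda_{\min}(Y^{(\infty)})$ or $\opnorm{\Delta_+}$ and no appeal to dual uniqueness. It even proves the bound with $\fopt-f^{(k)}$ in place of $r^{(k)}$ without using convergence. The paper's route avoids any expansion and yields the byproduct $\fnorm{Y^{(k)}-Y^{(\infty)}}=\OCal(\sqrt{r^{(k)}})$.

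Drop your first sketch, the second variation at $V^{(k+1)}$ combined with the gradient bound. It is unnecessary, and as written it would control $Y^{(k+1)}$ and $r^{(k-1)}$ rather than the quantities in the statement.
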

\begin{proof}
Weak duality gives the first bound. To prove the second, set
\begin{equation}\label{eq:augmented-dual-slack}
 \Delta_+\coloneqq I_{A_+}\otimes Y^{(\infty)}-C_+\succeq0
\end{equation}
and abbreviate $X_+\coloneqq X_+^{(k)}$ and $Y\coloneqq Y^{(k)}$. Pair normalization gives $\tr_{A_+}X_+=I_B$ and $\tr X_+=d_B$. Since $\tr Y^{(\infty)}=\fopt+s\,d_B$, the objective offset yields $\tr(\Delta_+X_+)=r^{(k)}$. The bounds $X_+^2\preceq d_BX_+$ and $\Delta_+^2\preceq\opnorm{\Delta_+}\Delta_+$ therefore imply
\begin{equation}\label{eq:slack-bound}
 \fnorm{X_+\Delta_+}=\sqrt{\tr(\Delta_+X_+^2\Delta_+)}
 \le\sqrt{d_B\opnorm{\Delta_+}\,r^{(k)}}.
\end{equation}
Inserting $C_+=I_{A_+}\otimes Y^{(\infty)}-\Delta_+$ into $Y^2=\tr_{A_+}(C_+X_+C_+)$ gives
\begin{equation}\label{eq:normalizer-squared-difference}
\begin{aligned}
 Y^2-\bigl(Y^{(\infty)}\bigr)^2
 &=-\tr_{A_+}(\Delta_+X_+C_+)\\
 &\quad-\tr_{A_+}\bigl[(I_{A_+}\otimes Y^{(\infty)})X_+\Delta_+\bigr].
\end{aligned}
\end{equation}
A partial trace over $A_+$ increases the Frobenius norm by at most $\sqrt{d_{A_+}}$. Consequently,
\begin{equation}\label{eq:normalizer-squared-bound}
\begin{aligned}
 &\fnorm{Y^2-\bigl(Y^{(\infty)}\bigr)^2}\\
 &\quad\le\sqrt{d_{A_+}}\bigl(\opnorm{C_+}+\opnorm{Y^{(\infty)}}\bigr)
       \fnorm{X_+\Delta_+}.
\end{aligned}
\end{equation}
Moreover,
\begin{equation}\label{eq:normalizer-sylvester-identity}
 Y^2-\bigl(Y^{(\infty)}\bigr)^2=Y(Y-Y^{(\infty)})+(Y-Y^{(\infty)})Y^{(\infty)}.
\end{equation}
Taking its real Frobenius inner product with $Y-Y^{(\infty)}$, using $Y\succeq0$ and $Y^{(\infty)}\succ0$, yields
\begin{equation}\label{eq:normalizer-root-bound}
 \lambda_{\min}(Y^{(\infty)})\fnorm{Y-Y^{(\infty)}}
 \le\fnorm{Y^2-\bigl(Y^{(\infty)}\bigr)^2}.
\end{equation}
The correction in Eq.~\eqref{eq:certificate-shift} is precisely
\begin{equation}\label{eq:certificate-shift-augmented}
 \tau^{(k)}=\max\{0,\lambda_{\max}(C_+-I_{A_+}\otimes Y)\}.
\end{equation}
Since $C_+-I_{A_+}\otimes Y\preceq I_{A_+}\otimes(Y^{(\infty)}-Y)$, it follows that $\tau^{(k)}\le\opnorm{Y-Y^{(\infty)}}$. Lemma~\ref{lem:ascent} gives $\tr Y\le f_+^{(k+1)}\le\fopt+s\,d_B$, so $r_{\mathrm{cert}}^{(k)}\le r^{(k)}+d_B\tau^{(k)}$. Combining these estimates with Eqs.~\eqref{eq:normalizer-squared-bound} and~\eqref{eq:normalizer-root-bound} and the explicit $\sqrt{r^{(k)}}$ bound in Eq.~\eqref{eq:slack-bound}, proves Eq.~\eqref{eq:certificate-bounds} with
\begin{equation}\label{eq:certificate-constant}
 c_{\mathrm{cert}}\coloneqq\sqrt{d_{A_+}d_B^3\opnorm{\Delta_+}}\frac{\opnorm{C_+}+\opnorm{Y^{(\infty)}}}{\lambda_{\min}(Y^{(\infty)})}.
\end{equation}
The denominator concerns the positive definite limiting normalizer, not the possibly singular original dual optimizer $\Zopt=Y^{(\infty)}-sI_B$. Uniqueness of the dual optimizer makes the constant independent of the initialization.
\end{proof}

\begin{lemma}[Iteration number]
\label{lem:iteration-number}
With the dimensions, cost matrix, shift, and seed pair held fixed, the iteration numbers in Table~\ref{tab:complexity} suffice for the respective targets as $\varepsilon\to0$.
\end{lemma}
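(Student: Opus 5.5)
The plan is to derive a one-step recurrence for the residual $r^{(k)}$ from the Łojasiewicz inequality, solve it in the two regimes of $\theta$, and then transfer the resulting rates to the certificate gap through Lemma~\ref{lem:certificate} and to the matrix error through the finite-length bound of Lemma~\ref{lem:finite-length}.

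\textbf{Step 1: residual recurrence.} For all sufficiently late $k$, the Łojasiewicz inequality at $V^{(k+1)}$ holds with $f_+(V_\star)$ replaced by $f_+^{(\infty)}$, by the proof of Lemma~\ref{lem:finite-length}. Combining it with Lemma~\ref{lem:gradient-bound} and the decrement bound in Eq.~\eqref{eq:gap-decrement} gives
\begin{equation}\label{eq:rate-recurrence}
 \bigl(r^{(k+1)}\bigr)^{2\theta}\le\xi^2\kappa^2\fnorm{\delta\Gamma^{(k)}}^2\le\xi^2\kappa^2\bigl(r^{(k)}-r^{(k+1)}\bigr).
\end{equation}
By Eq.~\eqref{eq:original-objective-gap}, $r^{(k)}=\fopt-f^{(k)}$, so this recurrence directly controls the objective gap.

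\textbf{Step 2: solving the recurrence.} If $\theta=1/2$, Eq.~\eqref{eq:rate-recurrence} gives $r^{(k+1)}\le\frac{\xi^2\kappa^2}{1+\xi^2\kappa^2}r^{(k)}$, which is geometric decay and hence $\OCal(\log(1/\varepsilon))$ iterations. If $\theta\in(1/2,1)$, I would use the standard argument with $\phi(t)=t^{1-2\theta}$. When $r^{(k+1)}\ge r^{(k)}/2$, the mean value theorem yields
\begin{equation*}
 \phi(r^{(k+1)})-\phi(r^{(k)})\ge(2\theta-1)\,2^{-2\theta}/(\xi^2\kappa^2).
\end{equation*}
When $r^{(k+1)}<r^{(k)}/2$, the increment of $\phi$ is at least $(2^{2\theta-1}-1)\phi(r^{(k)})$, which is bounded below by a positive constant because $r^{(k)}\to0$. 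Either way $\phi(r^{(k)})$ grows at least linearly in $k$, so $r^{(k)}=\OCal(k^{-1/(2\theta-1)})$. This gives the first row of Table~\ref{tab:complexity}.

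\textbf{Step 3: certificate gap.} By Lemma~\ref{lem:certificate}, $r_{\mathrm{cert}}^{(k)}\le r^{(k)}+c_{\mathrm{cert}}\sqrt{r^{(k)}}\le 2c_{\mathrm{cert}}\sqrt{r^{(k)}}$ once $r^{(k)}$ is small. It therefore suffices to reach $r^{(k)}\le\varepsilon^2/(4c_{\mathrm{cert}}^2)$. Substituting $\varepsilon^2$ into the first row gives $\OCal(\log(1/\varepsilon))$ and $\OCal(\varepsilon^{-2(2\theta-1)})$, the second row.

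\textbf{Step 4: matrix error.} Summing the tail of the finite-length bound in Eq.~\eqref{eq:finite-length} gives
\begin{equation*}
 \fnorm{\Gamma^{(j)}-\Gamma^{(\infty)}}\le\fnorm{\delta\Gamma^{(j-1)}}+\kappa\varphi(r^{(j)}).
\end{equation*}
Here $\fnorm{\delta\Gamma^{(j-1)}}\le\sqrt{r^{(j-1)}}$ and $\varphi(r^{(j)})\propto(r^{(j)})^{1-\theta}\ge\sqrt{r^{(j)}}$ for small residuals, since $1-\theta\le1/2$. So the $\Gamma$ error is $\OCal\bigl((r^{(j-1)})^{1-\theta}\bigr)$.

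To pass to $X^{(k)}$, I would use the reconstruction in Eq.~\eqref{eq:factor-reconstruction}. The map $(\Gamma,Y)\mapsto\Gamma^\dagger C_+^{1/2}(I\otimes Y^{-1})$ is locally Lipschitz near $(\Gamma^{(\infty)},Y^{(\infty)})$ because $Y^{(\infty)}\succ0$. The normalizer $Y$ is itself Lipschitz in $\Gamma$ by Eqs.~\eqref{eq:normalizer-finite} and~\eqref{eq:normalizer-root-bound}, the square root being Lipschitz near a positive definite limit. The Gram map $G\mapsto G^\dagger G$ is Lipschitz on bounded sets. Hence
\begin{equation}\label{eq:matrix-error}
 \fnorm{X^{(k)}-X^{(\infty)}}\le c_X\bigl(r^{(k-2)}\bigr)^{1-\theta}
\end{equation}
for large $k$. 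Requiring $r^{(k-2)}\le(\varepsilon/c_X)^{1/(1-\theta)}$ and applying the first row gives $\OCal\bigl(\varepsilon^{-(2\theta-1)/(1-\theta)}\bigr)$, which is the third row. For $\theta=1/2$ it gives $\OCal(\log(1/\varepsilon))$, and the index shifts are absorbed into constants.

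\textbf{Main obstacle.} I expect the main difficulty to be making Step 4 rigorous: the Lipschitz constants must be uniform along the tail, which needs $Y^{(k)}$ bounded away from singular. Since $Y^{(k)}\to Y^{(\infty)}\succ0$ by Lemma~\ref{lem:nonsingular}, eventually $\lambda_{\min}(Y^{(k)})\ge\lambda_{\min}(Y^{(\infty)})/2$, and this settles it. A smaller point is that the Łojasiewicz constants hold only for late iterates; this affects only the additive constants, which the $\varepsilon\to0$ statement allows.
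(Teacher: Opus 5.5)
Your proposal is correct and follows essentially the same route as the paper. It uses the same recurrence $(r^{(k+1)})^{2\theta}\le\xi^2\kappa^2(r^{(k)}-r^{(k+1)})$, obtained from Lemmas~\ref{lem:ascent} and~\ref{lem:gradient-bound} together with the \L{}ojasiewicz inequality at $V^{(k+1)}$, the same geometric contraction for $\theta=1/2$, the same $\varepsilon^2$ substitution through Lemma~\ref{lem:certificate}, and the same tail of Eq.~\eqref{eq:finite-length} combined with local Lipschitz reconstruction for the matrix error. The only difference is that for $\theta>1/2$ you solve the recurrence with the potential $t^{1-2\theta}$ instead of the paper's dyadic level counting; your potential grows at least linearly and yields the explicit decay $r^{(k)}=\OCal(k^{-1/(2\theta-1)})$, provided $r^{(k)}>0$, which holds in the nonstationary case, and both arguments give the same $\OCal(\varepsilon^{-(2\theta-1)})$ bound.
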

\begin{proof}
A sequence that becomes stationary reaches all three targets in finitely many steps. Otherwise, for every sufficiently large $k$, we have $r^{(k)}\le1$. Lemmas~\ref{lem:ascent} and~\ref{lem:gradient-bound}, together with Eq.~\eqref{eq:lojasiewicz} for $f_+$ at $V=V^{(k+1)}$, give
\begin{equation}\label{eq:rate-recurrence}
 r^{(k)}-r^{(k+1)}\ge\fnorm{\delta\Gamma^{(k)}}^2
 \ge\frac{\bigl(r^{(k+1)}\bigr)^{2\theta}}{\xi^2\kappa^2}.
\end{equation}
If $\theta=1/2$, this reads $r^{(k+1)}\le r^{(k)}/(1+\xi^{-2}\kappa^{-2})$, so the gap contracts exponentially. If $\theta>1/2$, fix a level $\omega\in(0,1]$ and start with a gap of at most $2\omega$. By Eq.~\eqref{eq:rate-recurrence}, each step ending above $\omega$ decreases the gap by at least $\omega^{2\theta}/(\xi^2\kappa^2)$. The sum of these decreases cannot exceed $2\omega$, so there are at most $2\xi^2\kappa^2\omega^{1-2\theta}$ such steps. Including the final step that reaches or crosses $\omega$ gives the bound $1+2\xi^2\kappa^2\omega^{1-2\theta}$.

Now let $j_\varepsilon\coloneqq\lceil\log_2(1/\varepsilon)\rceil$ and successively apply this bound at levels $\omega=2^{-j}$, for $j=1,\ldots,j_\varepsilon$. Beyond a fixed initial segment where the recurrence need not hold, the number of steps needed to reach $r^{(k)}\le\varepsilon$ is at most
\begin{equation}\label{eq:dyadic-iteration-count}
\begin{aligned}
 &\sum_{j=1}^{j_\varepsilon}\bigl(1+2\xi^2\kappa^2 2^{j(2\theta-1)}\bigr)\\
 &\qquad=\OCal\bigl(2^{j_\varepsilon(2\theta-1)}\bigr)
 =\OCal\bigl(\varepsilon^{-(2\theta-1)}\bigr).
\end{aligned}
\end{equation}
Here $2\theta-1>0$ makes the geometric sum comparable to its last term, which also dominates $j_\varepsilon$. The last equality follows from $\varepsilon/2<2^{-j_\varepsilon}\le\varepsilon$.

For the matrix error, Eq.~\eqref{eq:finite-length} and $\fnorm{\delta\Gamma^{(k)}}^2\le r^{(k)}$ give
\begin{equation}\label{eq:rate-tail}
 \fnorm{\Gamma^{(\infty)}-\Gamma^{(k+1)}}\le\sqrt{r^{(k)}}+\frac{\xi\kappa}{1-\theta}\bigl(r^{(k+1)}\bigr)^{1-\theta}.
\end{equation}
This is $\OCal((r^{(k)})^{1-\theta})$, because the gap is nonincreasing and $\sqrt r\le r^{1-\theta}$ for $0\le r\le1$. By Eq.~\eqref{eq:z-positive}, the limiting normalizer is positive definite. Eqs.~\eqref{eq:factor-update}, \eqref{eq:compressed-w}, and~\eqref{eq:normalizer-finite} therefore express $X_+^{(k+2)}$ as a smooth function of $\Gamma^{(k+1)}$ near its limit. This function maps $\Gamma^{(\infty)}$ to $X_+^{(\infty)}$ and is locally Lipschitz continuous. Restricting to the first diagonal block gives the following bound for some constant $c_{\mathrm{Lips}}>0$ and all sufficiently large $k$.
\begin{equation}\label{eq:matrix-error}
\begin{aligned}
 \fnorm{X^{(k+2)}-X^{(\infty)}}
 &\le\fnorm{X_+^{(k+2)}-X_+^{(\infty)}}\\
 &\le c_{\mathrm{Lips}}\fnorm{\Gamma^{(\infty)}-\Gamma^{(k+1)}}\\
 &=\OCal\bigl((r^{(k)})^{1-\theta}\bigr).
\end{aligned}
\end{equation}
Lemma~\ref{lem:certificate} makes the certificate gap at most $\varepsilon$ once the objective gap reaches a sufficiently small constant multiple of $\varepsilon^2$. Likewise, Eq.~\eqref{eq:matrix-error} makes the matrix error at most $\varepsilon$ two steps after the gap reaches a sufficiently small constant multiple of $\varepsilon^{1/(1-\theta)}$. Substituting these tolerances into the first row multiplies the exponent $2\theta-1$ by $2$ and $1/(1-\theta)$, respectively, while preserving a logarithmic bound.
\end{proof}

Since $\theta<1$, Lemma~\ref{lem:iteration-number} yields the coarser bounds of $\OCal(\varepsilon^{-1})$ iterations for reaching $r^{(k)}\le\varepsilon$ and $\OCal(\varepsilon^{-2})$ for reaching $r_{\mathrm{cert}}^{(k)}\le\varepsilon$. These estimates apply as $\varepsilon\to0$ with the instance, shift, and seed pair fixed.

The dense arithmetic costs per iteration, including certificate evaluation, are $\OCal(d_A^3d_B^3)$ for the generalized RW iteration and $\OCal(d_Ad_B^3)$ for the JRF iteration. The additional slack block requires only $\OCal(d_B^3)$ operations. The constants and onset of the iteration bounds may depend on the dimensions and shift, so Table~\ref{tab:complexity} does not establish total complexity as the dimensions vary.

Diagonal cost matrices provide a representative case of exponential convergence. For a seed pair diagonal in the corresponding product and slack bases and satisfying Eq.~\eqref{eq:support-condition}, Appendix~\ref{sec:diagonal-rates} proves that the local gradient inequality holds with $\theta=1/2$ on the full augmented manifold. The corresponding column of Table~\ref{tab:complexity} therefore gives $\OCal(\log(1/\varepsilon))$ iteration bounds for the objective gap, certificate gap, and matrix error. These errors admit geometric decay bounds, including the possibility of convergence in finitely many steps.

The density peaks in Figure~\ref{fig:gue} are also consistent with $\OCal(\log(1/\varepsilon))$ scaling over the displayed tolerance range. These finite samples do not determine the gradient exponent of an individual instance. Objective accuracy is generally reached sooner than certificate or matrix accuracy, in accordance with Lemma~\ref{lem:certificate} and Eq.~\eqref{eq:matrix-error}.

Polynomial convergence can nevertheless occur even when the seed satisfies the support condition. Appendix~\ref{sec:slow-trajectories} constructs a family indexed by $n\ge2$ for which the objective and certificate iteration counts are $\Theta(\varepsilon^{-(1-2^{1-n})})$, while the matrix iteration count is $\Theta(\varepsilon^{-(2^n-2)})$. As $n$ increases, the objective power approaches one and the exponent governing the matrix iteration count grows without bound. Thus no smaller fixed power can replace the coarse $\OCal(\varepsilon^{-1})$ objective bound across all members of this family. This slow convergence arises because the objective becomes extremely flat along these iterations near the optimum, leading to very small updates.

\section{Discussion}
\label{sec:discussion}
We have developed a generalized RW iteration for maximizing $\tr(CX)$ over positive semidefinite matrices subject to $\tr_A X\preceq D$, with arbitrary Hermitian $C$ and positive semidefinite $D$. Under the stated shift and support conditions, the iterates converge to a global optimum, and our analysis quantifies convergence in the objective, certificate, and matrix errors. Figure~\ref{fig:timing} compares the time required to reach a common objective accuracy assessed against a precomputed dual reference. Under the timing conventions of Appendix~\ref{sec:runtime-details}, the original RW and JRF implementations achieve much lower median runtimes than the other representative approaches tested.

Extending the construction to general linear constraints is a natural direction, but the tensor structure of the partial trace is essential to our normalization. A general linear map need not admit an analogous congruence that restores feasibility while preserving ascent. Such an extension would require new ingredients in both the update and its convergence proof, and lies beyond the scope of this work.

Within the present constraint class, understanding how the cost matrix, spectral shift $s$, and initialization affect convergence could guide practical algorithm design. Admissible shifts preserve the optimization problem but can change convergence speed. A suitable feasible seed pair satisfying the support condition may likewise help the iteration approach the optimum more rapidly. Studying these choices together is a natural direction, while our analysis treats both the shift and initialization as fixed.

\section*{Note added}
We acknowledge an independent proof by Yuxuan Zhang~\cite{Zhang2026AgenticProofs} that the JRF iteration initialized by the uniform POVM converges to an optimal measurement. This result is also a special case of Corollary~\ref{cor:jrf}.

\begin{acknowledgments}
B.L. thanks Yat Wong and Guo Zheng for helpful discussions and comments. B.L. and L.J. acknowledge support from the ARO (W911NF-23-1-0077), ARO MURI (W911NF-21-1-0325), AFOSR MURI (FA9550-21-1-0209, FA9550-23-1-0338), ONR MURI (N000142612102), DARPA (HR0011-24-9-0361), NSF (OSI-2326767, OSI-2426975, EEC-2550064, OSI-2553578, OSI-2553619). This material is based upon work supported by the U.S. Department of Energy, Office of Science, National Quantum Information Science Research Centers and Advanced Scientific Computing Research (ASCR) program under contract number DE-AC02-06CH11357 as part of the InterQnet quantum networking project.
S.R. and W.G. acknowledge support by ASTAR (M24M8b0004), Singapore National Research Foundation (NRF-CRP30-2023-0003, NRF-CRP31-0001, NRF2023-ITC004-001 and NRF-MSG-2023-0002) and Singapore Ministry of Education Tier 2 Grant (MOE-T2EP50222-0018).
We acknowledge the use of generative AI tools to support exploration of ideas and the typesetting of this work.
The numerical computational work was completed with the resources provided by the University of Chicago's Research Computing Center.
\end{acknowledgments}

\appendix
\section{The half gradient exponent for diagonal cost matrices}
\label{sec:diagonal-rates}
Fix an orthonormal product basis $\ket{ab}\coloneqq\ket a_A\otimes\ket b_B$. Suppose that the Hermitian cost $C$ and the seed $X^{(0)}\succeq0$ are diagonal in this basis, while $S^{(0)}\succeq0$ is diagonal in $\{\ket b_B\}$. Choose a fixed admissible shift satisfying Eq.~\eqref{eq:admissible-shift} and a seed pair satisfying Eq.~\eqref{eq:support-condition}. We prove that the local gradient inequality at the limiting factor holds with
\begin{equation}\label{eq:diagonal-half-exponent}
 \theta=\frac12.
\end{equation}
The inequality holds on the full augmented Stiefel manifold $\MCal$, including nondiagonal perturbations. The corresponding column of Table~\ref{tab:complexity} then gives logarithmic iteration bounds for the objective gap, certificate gap, and matrix error. The argument allows singular or indefinite costs and seeds that are not initially normalized.

\subsection{The limiting diagonal factor}
Write $C=\sum_{a,b}c_{ab}\ket{ab}\bra{ab}$, where $1\le a\le d_A$ and $1\le b\le d_B$. In the augmentation of Eq.~\eqref{eq:augmented-data}, label the slack direction by $0$ and let $i,j\in\{0,\ldots,d_A\}$, retaining the slack as the last direct summand. Denote the diagonal entries of $C_+$ by the cost weights $w_{ib}$ and those of $X_+^{(k)}$ by $p_{ib}^{(k)}$. Thus
\begin{equation}\label{eq:diagonal-entries}
\begin{aligned}
 w_{0b}&=s,&p_{0b}^{(k)}&=S_{bb}^{(k)},\\
 w_{ab}&=c_{ab}+s,&p_{ab}^{(k)}&=\bra{ab}X^{(k)}\ket{ab}.
\end{aligned}
\end{equation}
Admissibility makes every weight nonnegative and supplies at least one positive weight for each $b$. The support condition requires $p_{ib}^{(0)}>0$ whenever $w_{ib}>0$.

Equation~\eqref{eq:rw} preserves diagonality. Its scalar recursion and explicit solution are
\begin{equation}\label{eq:diagonal-iterates}
\begin{aligned}
 p_{ib}^{(k+1)}&=\frac{w_{ib}^2p_{ib}^{(k)}}{\sum_jw_{jb}^2p_{jb}^{(k)}}\qquad(k\ge0),\\
 p_{ib}^{(k)}&=\frac{w_{ib}^{2k}p_{ib}^{(0)}}{\sum_jw_{jb}^{2k}p_{jb}^{(0)}}\qquad(k\ge1).
\end{aligned}
\end{equation}
The denominators remain positive because a positive weight with a positive seed entry retains a positive iterate entry. The first update gives $\sum_i p_{ib}^{(k)}=1$ for $k\ge1$, and the explicit formula follows by cancellation of successive normalization factors.

For each $b$, let $m_b$ be the largest cost weight, $\mathcal I_b$ the set of indices attaining it, and $p_{\max,b}$ the total initial weight on that set. Explicitly,
\begin{equation}\label{eq:diagonal-maxima}
\begin{aligned}
 m_b&\coloneqq\max_iw_{ib}=s+\max\{0,c_{1b},\ldots,c_{d_A b}\}>0,\\
 \mathcal I_b&\coloneqq\{i:w_{ib}=m_b\},\qquad
 p_{\max,b}\coloneqq\sum_{i\in\mathcal I_b}p_{ib}^{(0)}>0.
\end{aligned}
\end{equation}
Dividing the scalar solution by $m_b^{2k}$ yields
\begin{equation}\label{eq:diagonal-limit}
 p_{ib}^{(\infty)}=
 \begin{cases}
 p_{ib}^{(0)}/p_{\max,b}>0,&i\in\mathcal I_b,\\
 0,&i\notin\mathcal I_b.
 \end{cases}
\end{equation}
In particular, every maximizing index has a positive limiting entry, even when several weights attain $m_b$. This consequence of the support condition will ensure that the optimal factors form a smooth set locally.

The Gram factors of Eq.~\eqref{eq:factor-update} remain diagonal with nonnegative entries, so $G^{(k)}=(X_+^{(k)})^{1/2}$ and
\begin{equation}\label{eq:diagonal-factor-limit}
 G^{(\infty)}=\operatorname{diag}_{(i,b)}\!\left(\sqrt{p_{ib}^{(\infty)}}\right),
 \qquad V^{(\infty)}=\hat\Theta(G^{(\infty)}).
\end{equation}
We call the matrix column $G\ket{ib}$ maximizing if $i\in\mathcal I_b$ and suboptimal otherwise. The maximizing columns of $G^{(\infty)}$ are nonzero and mutually orthogonal, while its suboptimal columns vanish.

\subsection{The local set of optimal factors}
To control all nearby perturbations, consider the normalized factor manifold
\begin{equation}\label{eq:diagonal-factor-manifold}
 \mathcal G\coloneqq
 \bigl\{G\in\Cbb^{(d_{A_+}d_B)\times(d_{A_+}d_B)}:
 \tr_{A_+}(G^\dagger G)=I_B\bigr\}.
\end{equation}
Equations~\eqref{eq:reshaping}--\eqref{eq:stiefel} identify $\mathcal G$ with $\MCal$ through the Frobenius isometry $\hat\Theta$. We equip both real manifolds with the metric induced by $\operatorname{Re}\finner{\cdot}{\cdot}$, so this identification preserves gradient norms.

For $G\in\mathcal G$, normalization gives $\sum_i\norm{G\ket{ib}}_2^2=1$ for each $b$. The augmented objective is therefore at most $\sum_bm_b$, and $G^{(\infty)}$ attains this value. Thus $\fopt=\sum_b(m_b-s)$. Define $\delta_{ib}\coloneqq m_b-w_{ib}\ge0$ and the factor objective gap
\begin{equation}\label{eq:diagonal-factor-gap}
\begin{aligned}
 r_{\mathrm{fac}}(G)
 &\coloneqq\fopt+sd_B-\tr(C_+G^\dagger G)\\
 &=\sum_{b,i}\delta_{ib}\norm{G\ket{ib}}_2^2.
\end{aligned}
\end{equation}
The gap vanishes exactly when every suboptimal column vanishes. Let $\mathcal L_{\mathrm{diag}}$ be the linear space of factors with these columns zero, and let $\mathcal Z_{\mathrm{diag}}$ be the optimal set in $\mathcal G$,
\begin{equation}\label{eq:diagonal-optimal-set}
\begin{aligned}
 \mathcal L_{\mathrm{diag}}
 &\coloneqq\{G:G\ket{ib}=0\text{ whenever }i\notin\mathcal I_b\},\\
 \mathcal Z_{\mathrm{diag}}&\coloneqq\mathcal G\cap\mathcal L_{\mathrm{diag}}.
\end{aligned}
\end{equation}
Here the factors have the size specified in Eq.~\eqref{eq:diagonal-factor-manifold}. Their remaining columns are arbitrary vectors, so neither set is restricted to diagonal factors.

We next show that $\mathcal Z_{\mathrm{diag}}$ is smooth near $G^{(\infty)}$. On $\mathcal L_{\mathrm{diag}}$, the normalization entries are
\begin{equation}\label{eq:diagonal-normalization-entries}
 \bigl[\tr_{A_+}(G^\dagger G)\bigr]_{bb'}
 =\sum_i(G\ket{ib})^\dagger(G\ket{ib'}).
\end{equation}
When $b\ne b'$ and $\mathcal I_b\cap\mathcal I_{b'}$ is empty, this entry vanishes identically. Retain the diagonal equations and the real and imaginary parts of the off-diagonal equations with $b<b'$ and a shared maximizing index. At $G^{(\infty)}$, scaling one maximizing column by a real factor varies its diagonal normalization entry independently. For a retained off-diagonal entry, choose $i\in\mathcal I_b\cap\mathcal I_{b'}$ and perturb $G^{(\infty)}\ket{ib'}$ by a complex multiple of $G^{(\infty)}\ket{ib}$. Orthogonality makes this vary only the $(b,b')$ entry and its conjugate, with arbitrary first derivatives. These variations stay in $\mathcal L_{\mathrm{diag}}$, and their coefficients are nonzero by Eq.~\eqref{eq:diagonal-limit}.

The real analytic implicit function theorem therefore applies to the remaining normalization equations. It gives a smooth optimal set with tangent space
\begin{equation}\label{eq:diagonal-optimal-tangent}
 T_{G^{(\infty)}}\mathcal Z_{\mathrm{diag}}
 =T_{G^{(\infty)}}\mathcal G\cap\mathcal L_{\mathrm{diag}}.
\end{equation}
This establishes the local geometry even when the maximizing weights are not unique.

\subsection{The gradient inequality and its rate consequences}
Suppose first that $C\ne0$. At least one $\delta_{ib}$ is then positive, since otherwise comparison with the slack weight $w_{0b}=s$ would force every $c_{ab}$ to vanish. For a smooth curve $G(t)\in\mathcal G$ with $G(0)=G^{(\infty)}$ and velocity $U\coloneqq G'(0)$, Eq.~\eqref{eq:diagonal-factor-gap} gives
\begin{equation}\label{eq:diagonal-second-variation}
 \left.\frac{\dd^2}{\dd t^2}r_{\mathrm{fac}}(G(t))\right|_{t=0}
 =2\sum_{b,i}\delta_{ib}\norm{U\ket{ib}}_2^2.
\end{equation}
Terms containing the acceleration vanish because every column with a positive coefficient is zero at the limit. The quadratic form vanishes precisely on the tangent space in Eq.~\eqref{eq:diagonal-optimal-tangent}. It is consequently positive in every nonzero direction transverse to the optimal set.

Choose real local coordinates $(u,v)$ centered at $G^{(\infty)}$, with $u$ describing motion along the optimal set and $v$ describing transverse motion, so that this set is $v=0$. Write the gap in these coordinates as $r_{\mathrm{loc}}(u,v)$. It satisfies $r_{\mathrm{loc}}(u,0)=0$ and $\partial_vr_{\mathrm{loc}}(u,0)=0$, where $\partial_vr_{\mathrm{loc}}$ is the column vector of partial derivatives with respect to the components of $v$. Define the transverse Hessian by
\begin{equation}\label{eq:diagonal-transverse-hessian}
 [H_\perp(u,v)]_{\ell\ell'}
 \coloneqq\frac{\partial^2r_{\mathrm{loc}}}{\partial v_\ell\partial v_{\ell'}}(u,v).
\end{equation}
The second variation makes $H_\perp(0,0)$ positive definite. After shrinking the coordinate neighborhood, continuity gives constants $0<\lambda_-\le\lambda_+$ such that $\lambda_-I_\perp\preceq H_\perp(u,v)\preceq\lambda_+I_\perp$, where $I_\perp$ is the identity on the transverse coordinate space.

Integrating along the coordinate segment from $(u,0)$ to $(u,v)$ yields
\begin{equation}\label{eq:diagonal-coordinate-estimates}
\begin{aligned}
 r_{\mathrm{loc}}(u,v)
 &=\int_0^1(1-t)v^{\mathsf T}H_\perp(u,tv)v\,\dd t\\
 &\le\frac{\lambda_+}{2}\norm{v}_2^2,\\
 v^{\mathsf T}\partial_vr_{\mathrm{loc}}(u,v)
 &=\int_0^1v^{\mathsf T}H_\perp(u,tv)v\,\dd t\\
 &\ge\lambda_-\norm{v}_2^2.
\end{aligned}
\end{equation}
Cauchy--Schwarz gives $\norm{\partial_vr_{\mathrm{loc}}}_2\ge\lambda_-\norm{v}_2$. The coordinate map and its inverse have bounded derivatives locally, so there is $c_{\mathrm{chart}}>0$ such that
\begin{equation}\label{eq:diagonal-gradient-lower-bound}
 \fnorm{\nabla_{\mathcal G}r_{\mathrm{fac}}(G)}
 \ge c_{\mathrm{chart}}\norm{\partial_vr_{\mathrm{loc}}}_2
 \ge c_{\mathrm{chart}}\lambda_-\norm{v}_2.
\end{equation}
Combining this lower bound with the quadratic gap estimate and using the isometry $\hat\Theta$ proves, for every $V\in\MCal$ sufficiently near $V^{(\infty)}$,
\begin{equation}\label{eq:diagonal-gradient-inequality}
\begin{aligned}
 \bigl(\fopt+sd_B-f_+(V)\bigr)^{1/2}
 &\le\xi\fnorm{\nabla_{\MCal}f_+(V)},\\
 \xi&\coloneqq\frac{\sqrt{\lambda_+/2}}{c_{\mathrm{chart}}\lambda_-}.
\end{aligned}
\end{equation}
If $C=0$, admissibility requires $s>0$ and $C_+=s(I_{A_+}\otimes I_B)$. Normalization then makes $f_+$ identically $sd_B$ on $\MCal$, so both sides vanish and $\theta=1/2$ is also permissible. Thus Eq.~\eqref{eq:diagonal-half-exponent} applies in every case under consideration.

Since $V^{(k)}\to V^{(\infty)}$, the half-exponent inequality holds along all sufficiently late iterates. Equation~\eqref{eq:rate-recurrence} consequently gives $r^{(k+1)}\le q_{\mathrm{loc}}r^{(k)}$ for some fixed $0<q_{\mathrm{loc}}<1$. Lemma~\ref{lem:certificate} and Eq.~\eqref{eq:matrix-error} then imply
\begin{equation}\label{eq:diagonal-geometric-bounds}
\begin{aligned}
 r^{(k)}&=\OCal(q_{\mathrm{loc}}^k),\\
 r_{\mathrm{cert}}^{(k)},\ \fnorm{X^{(k)}-X^{(\infty)}}
 &=\OCal(q_{\mathrm{loc}}^{k/2}).
\end{aligned}
\end{equation}
All three errors therefore admit geometric decay bounds. Let $k_{\mathrm{obj}}(\varepsilon)$, $k_{\mathrm{cert}}(\varepsilon)$, and $k_X(\varepsilon)$ denote the first indices $k\ge1$ at which the respective errors are at most $\varepsilon$. The $\theta=1/2$ column of Table~\ref{tab:complexity} gives
\begin{equation}\label{eq:diagonal-hitting-times}
 k_{\mathrm{obj}}(\varepsilon),\ k_{\mathrm{cert}}(\varepsilon),\ k_X(\varepsilon)
 =\OCal\bigl(\log(1/\varepsilon)\bigr).
\end{equation}
These bounds hold as $\varepsilon\to0$ with the dimensions, cost, shift, and seed pair fixed, and also cover iterations that reach their limit in finitely many steps. Diagonality is required in the specified product basis, since an arbitrary unitary change of basis need not preserve the partial trace constraint.

\section{Slow convergence from seeds satisfying the support condition}
\label{sec:slow-trajectories}
We construct a family satisfying Eq.~\eqref{eq:support-condition} whose convergence becomes progressively slower as its dimension increases. The key feature is that the objective becomes extremely flat along the iteration near the optimum, leading to very small updates. For each integer $n\ge2$, define
\begin{equation}\label{eq:slow-family-order}
 m_n\coloneqq2^{n-1}.
\end{equation}
\begin{proposition}\label{prop:slow-family-rates}
Fix $n\ge2$ and use the cost matrix in Eq.~\eqref{eq:slow-family-cost} and the seed in Eq.~\eqref{eq:slow-family-seed}, with $s=0$, $S^{(0)}=0$, and $\chi_{\mathrm{corr}}=1/2$. There exists $t_*(n)>0$ such that every $0<t_0<t_*(n)$ gives a seed satisfying Eq.~\eqref{eq:support-condition}. The resulting RW iteration converges to the matrix in Eq.~\eqref{eq:slow-matrix-limit}, and its first hitting times satisfy Eq.~\eqref{eq:slow-family-iteration-count}. In particular,
\begin{equation}\label{eq:circle-exponent-comparison}
\begin{aligned}
 k_{\mathrm{obj},n}(\varepsilon),\ k_{\mathrm{cert},n}(\varepsilon)
 &=\Theta\!\left(\varepsilon^{-(m_n-1)/m_n}\right),\\
 k_{X,n}(\varepsilon)
 &=\Theta\!\left(\varepsilon^{-(2m_n-2)}\right).
\end{aligned}
\end{equation}
as $\varepsilon\to0^+$ with $n$ and $t_0$ fixed.
\end{proposition}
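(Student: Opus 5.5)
The plan is to reduce the RW iteration on this family to a scalar recursion and read off all three rates from a one-parameter asymptotic analysis. The exponents in Eq.~\eqref{eq:circle-exponent-comparison} are consistent with the following mechanism. Along the trajectory, a scalar parameter $t_k\to0$ controls the iterate. The matrix error is of order $t_k$, while the objective and certificate gaps are of order $t_k^{2m_n}$. The parameter itself obeys
\begin{equation*}
 t_{k+1}=t_k\bigl(1-a_n t_k^{2m_n-2}+\OCal(t_k^{2m_n-1})\bigr)
\end{equation*}
for some $a_n>0$. Granting this, $t_k^{-(2m_n-2)}=(2m_n-2)a_nk+o(k)$. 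Hence $t_k=\Theta(k^{-1/(2m_n-2)})$ and $r^{(k)}=\Theta(k^{-m_n/(m_n-1)})$. Inverting these gives exactly the hitting times $\Theta(\varepsilon^{-(m_n-1)/m_n})$ and $\Theta(\varepsilon^{-(2m_n-2)})$.

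\textbf{Step 1: support condition and invariant family.} First I would check Eq.~\eqref{eq:support-condition} at $s=0$, $S^{(0)}=0$. For small $t_0$, the seed is a perturbation of a matrix whose range meets $\supp(C)$ injectively, so the kernel condition holds by continuity of the relevant minor; this fixes $t_*(n)$. Next, I would identify the symmetry that keeps the iterates inside an explicit family $X(t)$. The choice $\chi_{\mathrm{corr}}=1/2$ presumably fixes the correlated part of the seed. The congruence update in Eq.~\eqref{eq:rw-psd} should then map $X(t)$ to $X(g_n(t))$ for an explicit analytic map $g_n$. Theorem~\ref{thm:global} guarantees convergence to an optimum, and the limit must be $X(0)$, the matrix in Eq.~\eqref{eq:slow-matrix-limit}. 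It then remains only to Taylor expand $g_n$ at $t=0$.

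\textbf{Step 2: flatness of $g_n$ (the main obstacle).} I expect the key difficulty to be showing that $g_n(t)-t$ vanishes to order exactly $2m_n-1=2^n-1$ with a nonzero coefficient. The dyadic value $m_n=2^{n-1}$ suggests an inductive structure, with each added level of the construction squaring the order of flatness. I would first compute $n=2$ explicitly. I would then try to prove by induction on $n$ that the normalizer $Y(t)$ agrees with $Y^{(\infty)}$ up to order $t^{2m_n-2}$, tracking the leading coefficient $a_n$ and confirming its sign.

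\textbf{Step 3: gap expansions and hitting times.} Using the explicit family, I would expand three quantities in $t$: $f^{(\infty)}-\tr(CX(t))=b_nt^{2m_n}+\OCal(t^{2m_n+1})$, the matrix error $\fnorm{X(t)-X(0)}=c_nt+\OCal(t^2)$, and the certificate gap from Eq.~\eqref{eq:certificate}. Lemma~\ref{lem:certificate} gives only $r_{\mathrm{cert}}=\OCal(\sqrt r)$, which is too weak, so the certificate gap needs a direct computation. Since $Y(t)-Y^{(\infty)}=\OCal(t^{2m_n-2})$ by the flatness from Step~2, I expect the correction $\tau$ to contribute at order $t^{2m_n}$ as well, so that $r_{\mathrm{cert}}=\Theta(t^{2m_n})$. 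Finally, I would turn the scalar recursion into two-sided bounds on $t_k$ by a discrete comparison with the ODE $\dot t=-a_nt^{2m_n-1}$, summing the increments of $t_k^{-(2m_n-2)}$. The hitting-time bounds in Eq.~\eqref{eq:slow-family-iteration-count} then follow by monotone inversion, since the gaps are eventually nonincreasing by Lemma~\ref{lem:ascent}.
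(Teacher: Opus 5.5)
\textbf{The missing idea: the iterates do not lie on any invariant one-parameter family.} Your whole plan assumes the RW map preserves an explicit analytic family $X(t)$, so that everything reduces to Taylor-expanding some $g_n$. That is false for this seed, and replacing it is most of the actual proof.

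Even the natural invariant family, $X_{\mathrm{br}}(\alpha)$ in Eq.~\eqref{eq:slow-boundary-family}, has $n$ angles. The branch map fixes the first $n-1$ angles on the stationary curve $\bar\alpha(t)=(h_n(t),t)$ but moves $t$, so each step pushes the iterate off that curve by $\OCal(t^{2m_n})$. One must show that the fast contraction keeps the deviation at $e\le Kt^{2m_n}$.

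More seriously, $X_{\mathrm{br}}(\alpha)$ is rank one on the $(2n-1)$-dimensional block $\mathcal W$, so it violates Eq.~\eqref{eq:support-condition}. That is why the seed adds $\delta_{\mathrm{seed}}\Pi_{\mathcal W}$. Your continuity-of-a-minor argument in Step 1 is therefore backwards: the base matrix is degenerate, and the added term is what gives $\supp X^{(0)}=\supp C$.

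The iterates consequently carry a Schur covariance $\Sigma_{\mathrm{cov}}^{(k)}\succ0$, which evolves by its own recursion, Eq.~\eqref{eq:slow-cov-update}. It is positive at every step yet $o((t^{(k)})^q)$ for every $q$. If the orbit lay on an analytic curve through the limit, $\tr\Sigma_{\mathrm{cov}}$ would be analytic, nonzero at every iterate, and flat to infinite order along them, hence identically zero. So no such $g_n$ exists. For $n\ge3$ the covariance contraction factor $1-c_{\mathrm{ctr}}t^{2m_n-4}$ even tends to one, so this direction cannot be slaved away by a cheap uniform-contraction argument.

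What you need is the bootstrap of Sec.~\ref{sec:slow-covariance}:
\begin{itemize}
\item The completed-squares identity, Eq.~\eqref{eq:slow-cov-completed-squares}, shows that the comparison slack is positive semidefinite with a one-dimensional kernel and spectral gap $\gtrsim t^{2m_n-4}$.
\item This gives covariance contraction in the metric $M_{\mathrm{met}}(t)$, with a margin that dominates the $\OCal(t^{2m_n-1})$ perturbation from the slow drift.
\item One then checks that the seed lies in the invariant region of Eq.~\eqref{eq:slow-cov-neighborhood}. This, not the support condition, is what fixes $t_*(n)$.
\end{itemize}
Only then does $t^{(k+1)}=t^{(k)}-m_nc_n(t^{(k)})^{2m_n-1}(1+o(1))$ hold for the true iterates. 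Relatedly, the flatness belongs to the reduced gap $\gamma_ng(\bar\alpha(t))\sim c_nt^{2m_n}$ along the stationary curve. It is transferred to the drift through the exact gradient relation, Eq.~\eqref{eq:slow-angle-map}, not through an induction on the normalizer.

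\textbf{The certificate step also fails as stated.} Your premise $Y-Y^{(\infty)}=\OCal(t^{2m_n-2})$ is false for $n\ge3$. Here $Y^{(\infty)}=I_B$, while Eq.~\eqref{eq:slow-cov-stationary-normalizers} gives $Y_0(\bar\alpha(t))-1=-\gamma_n\sum_j\ell_j$ with $\ell_1=\sin\bar\alpha_1\sim2^{-m_n}t^{m_n}$. So the normalizer moves at order $t^{m_n}$, which is larger than $t^{2m_n-2}$ once $m_n\ge4$. Even a norm bound of your claimed size would only give $\tau^{(k)}\le\opnorm{Y^{(k)}-Y^{(\infty)}}=\OCal(t^{2m_n-2})$, not $t^{2m_n}$.

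The correct order comes from sign structure. On $\mathcal W$, the slack $I_A\otimes Y^{(k)}-C$ equals the positive semidefinite comparison slack, minus the defect $\delta_n(t)\ket{p_n}\bra{p_n}$ with $\delta_n\sim m_nc_nt^{2m_n}$, plus a perturbation of norm $\OCal(t^{2m_n})$. That perturbation's compression onto the kernel vector cancels to $o(t^{2m_n})$ because $\partial_\alpha F_0(0)=0$. A Schur-complement eigenvalue estimate then yields $\tau^{(k)}\sim(m_nc_n/n)t^{2m_n}$. This produces the constant $(n+1)m_n/n$ in Eq.~\eqref{eq:slow-family-iteration-count}, which the statement asserts but a $\Theta$-level plan does not deliver.

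\textbf{Inverting the hitting times.} Lemma~\ref{lem:ascent} gives monotonicity only of the objective gap. The certificate gap and matrix error need not be monotone, so their hitting times must be read off from two-sided asymptotic equivalents valid at all sufficiently late iterates, not by monotone inversion.
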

The three targets are the objective gap, certificate gap, and Frobenius distance to the limiting matrix. The following four subsections prove the proposition by specifying the construction and deriving the scalar dynamics and error estimates along the iteration.

\subsection{Cost matrix and stationary curve}

Take $\HCal_A=\Cbb^{2n+1}$ and $\HCal_B=\Cbb^{n+1}$, with orthonormal bases $\{\ket{a_0}\}\cup\{\ket{a_{j,x}},\ket{a_{j,y}}\}_{j=1}^n$ and $\{\ket{b_j}\}_{j=0}^n$. Define $\ket{p_0}\coloneqq\ket{a_0}\otimes\ket{b_0}$, $\ket{p_j}\coloneqq\ket{a_{j,x}}\otimes\ket{b_j}$, and $\ket{q_j}\coloneqq\ket{a_{j,y}}\otimes\ket{b_j}$. The following projector $P$, vectors $\ket{w_j}$, and coefficient $\gamma_n$ specify the cost matrix $C$.
\begin{equation}\label{eq:slow-family-cost}
\begin{aligned}
 P&\coloneqq\ket{p_0}\bra{p_0}
       +\sum_{j=1}^n(\ket{p_j}\bra{p_j}+\ket{q_j}\bra{q_j}),\\
 \ket{w_0}&\coloneqq\ket{q_1},\qquad
 \gamma_n\coloneqq\frac1{2(3n-2)},\\
 \ket{w_j}&\coloneqq\ket{q_j}+\ket{p_{j+1}}-\ket{p_0}\quad(1\le j<n),\\
 C&\coloneqq P-\gamma_n\sum_{j=0}^{n-1}\ket{w_j}\bra{w_j}.
\end{aligned}
\end{equation}
Since $\sum_j\norm{\ket{w_j}}_2^2=3n-2$, we have $P/2\preceq C\preceq P$. Thus $\supp C=\supp P$, $\rank C=2n+1$, and $\tr_A C\succ0$. We use the admissible shift $s=0$ and slack seed $S^{(0)}=0$, so Eq.~\eqref{eq:rw-psd} applies.

Two invariant subspaces make the construction tractable. Let $\mathcal W$ be spanned by $\ket{p_0}$ and the pairs $(\ket{q_j},\ket{p_{j+1}})$ for $1\le j<n$. Its orthogonal complement within $\supp P$ is $\mathcal U\coloneqq\operatorname{span}\{\ket{p_1},\ket{q_n}\}$, on which $C$ is the identity. Let $\Pi_{\mathcal W}$ denote the orthogonal projector onto $\mathcal W$ and fix $\chi_{\mathrm{corr}}=1/2$. For $\alpha=(\alpha_1,\ldots,\alpha_n)\in\Rbb^n$, introduce a normalized boundary family
\begin{equation}\label{eq:slow-boundary-family}
\begin{aligned}
 \ket{\omega(\alpha)}&\coloneqq\ket{p_0}
 +\sum_{j=1}^{n-1}(\sin\alpha_j\ket{q_j}+\cos\alpha_{j+1}\ket{p_{j+1}}),\\
 B_{\mathrm{iso}}(\alpha)&\coloneqq
 \begin{pmatrix}
 \cos^2\alpha_1&\chi_{\mathrm{corr}}\cos\alpha_1\sin\alpha_n\\
 \chi_{\mathrm{corr}}\cos\alpha_1\sin\alpha_n&\sin^2\alpha_n
 \end{pmatrix},\\
 X_{\mathrm{br}}(\alpha)&\coloneqq
 \ket{\omega(\alpha)}\bra{\omega(\alpha)}\oplus B_{\mathrm{iso}}(\alpha).
\end{aligned}
\end{equation}
The direct sum is taken over $\mathcal W\oplus\mathcal U$, with zero entries outside $\supp P$. Distinct active vectors use distinct $A$ basis labels, so their off-diagonal entries vanish under $\tr_A$. Consequently $\tr_A X_{\mathrm{br}}(\alpha)=I_B$.

Set $\ell_j(\alpha)\coloneqq1-\cos\alpha_{j+1}-\sin\alpha_j$ for $1\le j<n$ and define
\begin{equation}\label{eq:slow-family-gap-function}
 g(\alpha)\coloneqq\sin^2\alpha_1+\sum_{j=1}^{n-1}\ell_j(\alpha)^2.
\end{equation}
The overlaps with $\ket{w_j}$ give $\tr(CX_{\mathrm{br}}(\alpha))=n+1-\gamma_ng(\alpha)$. Since $C\preceq I_{AB}$ bounds every feasible objective by $n+1$, and $X_{\mathrm{br}}(0)$ attains this value, we have $\fopt=n+1$. Thus $\gamma_ng(\alpha)$ is precisely the objective gap on this boundary family.

Write $\alpha=(u,t)$, where $u\coloneqq(\alpha_1,\ldots,\alpha_{n-1})$ and $t\coloneqq\alpha_n$. These are coordinates of the same angle vector, and $\partial_\alpha$ denotes differentiation with respect to all its entries. The Hessian in the first $n-1$ coordinates satisfies
\begin{equation}\label{eq:slow-family-hessian}
 \left[\frac{\partial^2g}{\partial\alpha_i\,\partial\alpha_j}(0)\right]_{i,j=1}^{n-1}
 =\operatorname{diag}(4,2,\ldots,2)\succ0.
\end{equation}
The implicit function theorem therefore defines an analytic curve $\bar\alpha(t)\coloneqq(h_n(t),t)$, with $h_n(0)=h_n'(0)=0$, on which $\partial_{\alpha_j}g=0$ for every $j<n$. These stationary equations read
\begin{equation}\label{eq:slow-family-critical}
 \ell_1=\sin\alpha_1,\qquad
 \ell_j=\ell_{j-1}\tan\alpha_j\quad(2\le j<n).
\end{equation}
Working backwards from $t$, the identities $1-\cos\alpha_{j+1}=\sin\alpha_j+\ell_j$ give $\bar\alpha_j=\Theta(t^{2^{n-j}})$ and $\bar\alpha_1\sim2^{-m_n}t^{m_n}$. Only $\sin^2\bar\alpha_1$ and $\ell_1^2$ contribute at leading order, yielding
\begin{equation}\label{eq:slow-family-stationary-gap}
 \gamma_ng(\bar\alpha(t))\sim c_nt^{2m_n},\qquad
 c_n\coloneqq\gamma_n2^{1-2m_n}>0.
\end{equation}

\subsection{Seed and scalar dynamics}

The boundary family alone does not satisfy the support requirement. We therefore choose a small $t_0>0$, put $\hat{\alpha}\coloneqq\bar\alpha(t_0)$ and $\delta_{\mathrm{seed}}\coloneqq t_0^{4m_n}$, and add a positive covariance on $\mathcal W$ before normalizing. Explicitly, set
\begin{subequations}\label{eq:slow-family-seed}
\begin{align}
 \overline X_{\mathrm{seed}}
 &\coloneqq\bigl(\ket{\omega(\hat{\alpha})}\bra{\omega(\hat{\alpha})}
     +\delta_{\mathrm{seed}}\Pi_{\mathcal W}\bigr)
        \oplus B_{\mathrm{iso}}(\hat{\alpha}),\label{eq:slow-family-raw-seed}\\
 R_{\mathrm{seed}}&\coloneqq\tr_A\overline X_{\mathrm{seed}},\label{eq:slow-family-seed-marginal}\\
 X^{(0)}&\coloneqq(I_A\otimes R_{\mathrm{seed}}^{-1/2})
       \overline X_{\mathrm{seed}}(I_A\otimes R_{\mathrm{seed}}^{-1/2}).
 \label{eq:slow-family-normalized-seed}
\end{align}
\end{subequations}
The marginal $R_{\mathrm{seed}}$ is diagonal, with entries $1+\delta_{\mathrm{seed}}$ at $b_0,b_1,b_n$ and $1+2\delta_{\mathrm{seed}}$ at the remaining basis vectors. The coupled block is positive definite, as is the isolated block because its determinant is $(1-\chi_{\mathrm{corr}}^2)\cos^2\hat{\alpha}_1\sin^2\hat{\alpha}_n>0$. Diagonal normalization preserves these supports. Hence $\tr_A X^{(0)}=I_B$ and $\supp X^{(0)}=\supp C$, which verifies Eq.~\eqref{eq:support-condition}. Every finite iterate remains positive definite on this support.

The iteration preserves $\mathcal W\oplus\mathcal U$ and has a diagonal normalizer. To describe its limiting scalar dynamics, first restrict it to $X_{\mathrm{br}}(\alpha)$. The transformed coordinates in each circle block are
\begin{subequations}\label{eq:slow-branch-coordinates}
\begin{align}
 z_0&=1-\gamma_n\sum_{j=1}^{n-1}\ell_j,\label{eq:slow-branch-anchor}\\
 b_{1,x}&=\cos\alpha_1,\label{eq:slow-branch-first-x}\\
 b_{j,x}&=\cos\alpha_j+\gamma_n\ell_{j-1}\quad(2\le j\le n),\label{eq:slow-branch-x}\\
 b_{1,y}&=\sin\alpha_1-\gamma_n(\sin\alpha_1-\ell_1),\label{eq:slow-branch-first-y}\\
 b_{j,y}&=\sin\alpha_j+\gamma_n\ell_j\quad(2\le j<n),\label{eq:slow-branch-y}\\
 b_{n,y}&=\sin\alpha_n.\label{eq:slow-branch-last-y}
\end{align}
\end{subequations}
Here $b_{j,x},b_{j,y}$ are scalar coordinates. In a neighborhood of zero, $z_0$ and every $b_{j,x}$ are positive. The branch normalizer $Y(\alpha)$ has entries $Y_0(\alpha)=z_0$ and $Y_j(\alpha)=(b_{j,x}^2+b_{j,y}^2)^{1/2}$. The updated angles are $\mathcal T_j(\alpha)\coloneqq\arctan(b_{j,y}/b_{j,x})$, and the RW map sends $X_{\mathrm{br}}(\alpha)$ to $X_{\mathrm{br}}(\mathcal T(\alpha))$. These branch formulas will approximate the actual iterates once their positive covariance has been controlled.

Let $d_j(\alpha)\coloneqq b_{j,x}\cos\alpha_j+b_{j,y}\sin\alpha_j$. Taking the component tangent to the circle gives the exact relation
\begin{equation}\label{eq:slow-angle-map}
 \tan(\mathcal T_j(\alpha)-\alpha_j)
 =-\frac{\gamma_n\partial_{\alpha_j}g(\alpha)}{2d_j(\alpha)},
\end{equation}
where $d_j(0)=1$. Thus the first $n-1$ angles are fixed on $\bar\alpha(t)$, while Eqs.~\eqref{eq:slow-family-stationary-gap} and \eqref{eq:slow-angle-map} imply
\begin{equation}\label{eq:slow-branch-drift}
 \mathcal T_n(\bar\alpha(t))-t
 =-m_nc_nt^{2m_n-1}(1+o(1)).
\end{equation}
At zero, the Jacobian of the first $n-1$ components with respect to $u$ is $\operatorname{diag}(1-2\gamma_n,1-\gamma_n,\ldots,1-\gamma_n)$, so these components contract locally for fixed $t$. Also $\partial_{\alpha_j}\mathcal T_n=\OCal(t)$ for $j<n$, which suppresses the effect of deviations from the stationary curve on the terminal angle.

\subsection{Control of the positive covariance}
\label{sec:slow-covariance}

To transfer the scalar dynamics to the chosen seed, we control the positive covariance that ensures sufficient support. Throughout this subsection, write $m\coloneqq m_n$. Constants may depend on fixed $n$ but are uniform in the iteration index and the coordinates in the neighborhoods specified below. Quantities without an iteration superscript refer to the current step. All iterates are real because the cost matrix and seed are real.

Normalization fixes the anchor diagonal entry at one, so the coupled block has the unique representation
\begin{equation}\label{eq:slow-cov-coordinates}
 X_{\mathcal W}=
 \begin{pmatrix}1&x^{\mathsf T}\\x&xx^{\mathsf T}+\Sigma_{\mathrm{cov}}\end{pmatrix},
\end{equation}
where $\Sigma_{\mathrm{cov}}\succ0$ and the entries of $x$ are ordered as $q_1,p_2,\ldots,q_{n-1},p_n$. Partition the fixed cost matrix as $C_{\mathcal W}=\left(\begin{smallmatrix}c_{00}&b_{\mathcal W}^{\mathsf T}\\b_{\mathcal W}&K_{\mathcal W}\end{smallmatrix}\right)$ and define
\begin{subequations}\label{eq:slow-cov-schur-data}
 \begin{align}
 p_{\mathrm{anc}}&\coloneqq c_{00}+b_{\mathcal W}^{\mathsf T}x,
 \label{eq:slow-cov-anchor}\\
 v_{\mathrm{tail}}&\coloneqq b_{\mathcal W}+K_{\mathcal W}x,
 \label{eq:slow-cov-tail}\\
 F_{\mathrm{cov}}&\coloneqq K_{\mathcal W}
 -\frac{v_{\mathrm{tail}}b_{\mathcal W}^{\mathsf T}}{p_{\mathrm{anc}}},
 \label{eq:slow-cov-schur-factor}\\
 \sigma_{\mathrm{anc}}&\coloneqq p_{\mathrm{anc}}^2+
 b_{\mathcal W}^{\mathsf T}\Sigma_{\mathrm{cov}}b_{\mathcal W}.
 \label{eq:slow-cov-anchor-variance}
 \end{align}
\end{subequations}
Here $p_{\mathrm{anc}}$ is positive near the limiting branch. Let $D_{\mathrm{tail}}$ be the diagonal restriction of $I_A\otimes Y^{(k)}$ to the nonanchor coordinates of $\mathcal W$. Taking the Schur complement after the congruence update gives the following covariance update and upper bound
\begin{equation}\label{eq:slow-cov-update}
 \begin{aligned}
 \Sigma_{\mathrm{cov}}^{(k+1)}
 &=T_{\mathrm{cov}}\left(\Sigma_{\mathrm{cov}}
 -\frac{\Sigma_{\mathrm{cov}}b_{\mathcal W}b_{\mathcal W}^{\mathsf T}
 \Sigma_{\mathrm{cov}}}{\sigma_{\mathrm{anc}}}\right)T_{\mathrm{cov}}^{\mathsf T},\\
 T_{\mathrm{cov}}&\coloneqq D_{\mathrm{tail}}^{-1}F_{\mathrm{cov}},\qquad
 0\preceq\Sigma_{\mathrm{cov}}^{(k+1)}
 \preceq T_{\mathrm{cov}}\Sigma_{\mathrm{cov}}T_{\mathrm{cov}}^{\mathsf T}.
 \end{aligned}
\end{equation}
Indeed, the middle matrix equals $(\Sigma_{\mathrm{cov}}^{-1}+b_{\mathcal W}b_{\mathcal W}^{\mathsf T}/p_{\mathrm{anc}}^2)^{-1}$, which is positive and bounded above by $\Sigma_{\mathrm{cov}}$.

For the actual iterates, define $x_{q_j}=\sin\alpha_j$ for $j<n$ and write the isolated $q_n$ population as $b=\sin^2t$. If $a$ denotes the isolated $p_1$ population, normalization determines the remaining positive coordinates by
\begin{equation}\label{eq:slow-cov-normalization}
 \begin{aligned}
 a&=\cos^2\alpha_1-[\Sigma_{\mathrm{cov}}]_{q_1q_1},\\
 x_{p_j}^2&=\cos^2\alpha_j-[\Sigma_{\mathrm{cov}}]_{p_jp_j}
 -[\Sigma_{\mathrm{cov}}]_{q_jq_j}\quad(2\le j<n),\\
 x_{p_n}^2&=\cos^2t-[\Sigma_{\mathrm{cov}}]_{p_np_n}.
 \end{aligned}
\end{equation}
The positive square roots determine $x$. Direct multiplication gives $x^{(k+1)}=D_{\mathrm{tail}}^{-1}(p_{\mathrm{anc}}v_{\mathrm{tail}}+K_{\mathcal W}\Sigma_{\mathrm{cov}}b_{\mathcal W})/\sqrt{\sigma_{\mathrm{anc}}}$. The updated angles are $\alpha_j^{(k+1)}=\arcsin x_{q_j}^{(k+1)}$ and $t^{(k+1)}=\arcsin(\sin t/Y_n^{(k)})$. Choose a closed ball $\norm{u}_2+|t|+\opnorm{\Sigma_{\mathrm{cov}}}\le\rho$ inside an open neighborhood where all radicands and $p_{\mathrm{anc}},Y_j,d_j$ exceed $1/2$ and the inverse sine arguments have absolute value below $1/2$. These formulas extend analytically through $t=0$, with bounded derivatives on the ball. They agree with the branch map at zero covariance. Writing $\mathcal T_u\coloneqq(\mathcal T_1,\ldots,\mathcal T_{n-1})$, the mean value theorem gives
\begin{equation}\label{eq:slow-cov-angle-error}
 \begin{aligned}
 u^{(k+1)}&=\mathcal T_u(u,t)+\OCal(\opnorm{\Sigma_{\mathrm{cov}}}),\\
 t^{(k+1)}&=\mathcal T_n(u,t)+\OCal(t\opnorm{\Sigma_{\mathrm{cov}}}).
 \end{aligned}
\end{equation}
The covariance derivative of the terminal formula contains the factor $\sin t$, giving the additional factor $t$ in the second estimate.

To estimate covariance contraction, first evaluate the branch normalizers on $\bar\alpha(t)=(h_n(t),t)$. Abbreviate $a_j\coloneqq Y_j(\bar\alpha(t))-1$ and $\ell_j\coloneqq\ell_j(\bar\alpha(t))$. Fast stationarity gives the following relations, where we also define $\widehat a_j$ for later use.
\begin{equation}\label{eq:slow-cov-stationary-normalizers}
 \begin{aligned}
 a_0&=-\gamma_n\sum_{j=1}^{n-1}\ell_j,\qquad a_1=0,\\
 \widehat a_j&\coloneqq\frac{\gamma_n\ell_{j-1}}{\cos\bar\alpha_j}
 \quad(2\le j\le n),\qquad a_j=\widehat a_j\quad(2\le j<n).
 \end{aligned}
\end{equation}
Since the derivative of the reduced gap is $2\gamma_n\ell_{n-1}\sin t$, its leading term implies $\gamma_n\ell_{n-1}\sim mc_nt^{2m-2}$. The terminal branch vector is $(\cos t+\gamma_n\ell_{n-1},\sin t)$, which yields the positive correction
\begin{equation}\label{eq:slow-cov-terminal-correction}
 \begin{aligned}
 \delta_n(t)&\coloneqq\widehat a_n-a_n\\
 &=\frac{\gamma_n\ell_{n-1}\sin^2t}{\cos t}
 +\OCal(\ell_{n-1}^2\sin^2t)
 \sim mc_nt^{2m}.
 \end{aligned}
\end{equation}
Let $D_{\mathrm{core}}(t)$ be the restriction of $I_A\otimes Y(\bar\alpha(t))$ to $\mathcal W$. Replacing its $p_n$ entry by $1+\widehat a_n$ defines $\widehat D_{\mathrm{core}}(t)$ and the comparison slack
\begin{equation}\label{eq:slow-comparison-slack}
 \begin{aligned}
 \Delta_{\mathrm{cmp}}(t)
 &\coloneqq\widehat D_{\mathrm{core}}(t)-C_{\mathcal W}\\
 &=D_{\mathrm{core}}(t)-C_{\mathcal W}
 +\delta_n(t)\ket{p_n}\bra{p_n}.
 \end{aligned}
\end{equation}
Its positivity follows from completed squares. For $1\le j<n$, we use the following pair matrices and vectors, which satisfy the displayed identity by stationarity.
\begin{equation}\label{eq:slow-cov-pair-blocks}
 \begin{aligned}
 B_j&\coloneqq\begin{pmatrix}
 \gamma_n(1+\delta_{j1})+a_j&\gamma_n\\
 \gamma_n&\gamma_n+\widehat a_{j+1}
 \end{pmatrix},\\
 \upsilon_j&\coloneqq\begin{pmatrix}\sin\bar\alpha_j\\\cos\bar\alpha_{j+1}\end{pmatrix},
 \qquad B_j\upsilon_j=\gamma_n\begin{pmatrix}1\\1\end{pmatrix},
 \end{aligned}
\end{equation}
where $\delta_{j1}$ is the Kronecker delta. A vector $\zeta\in\mathcal W$, with anchor coordinate $\zeta_0$ and pair coordinates $\zeta_j$, then satisfies
\begin{equation}\label{eq:slow-cov-completed-squares}
 \bra{\zeta}\Delta_{\mathrm{cmp}}(t)\ket{\zeta}
 =\sum_{j=1}^{n-1}(\zeta_j-\zeta_0\upsilon_j)^\dagger
 B_j(\zeta_j-\zeta_0\upsilon_j).
\end{equation}
The first pair matrix has a positive definite limit. For $j\ge2$, its determinant is $\gamma_n(\widehat a_j+\widehat a_{j+1})+\widehat a_j\widehat a_{j+1}>0$. The stationary relations give $\ell_j=\Theta(t^{2m-2^{n-j}})$, so every pair has least eigenvalue at least a positive multiple of $t^{2m-4}$. The pair traces remain bounded. Since the distance to the kernel is bounded by $(\sum_j\norm{\zeta_j-\zeta_0\upsilon_j}_2^2)^{1/2}$, Eq.~\eqref{eq:slow-cov-completed-squares} gives
\begin{equation}\label{eq:slow-comparison-gap}
 \begin{aligned}
 \Delta_{\mathrm{cmp}}(t)&\succeq0,\qquad
 \ker\Delta_{\mathrm{cmp}}(t)=\operatorname{span}\{\ket{\omega(\bar\alpha(t))}\},\\
 \lambda_{\mathrm{gap}}(t)&\ge c_{\mathrm{gap}}t^{2m-4},\qquad c_{\mathrm{gap}}>0.
 \end{aligned}
\end{equation}
Here $\lambda_{\mathrm{gap}}$ is the least positive eigenvalue. When $n=2$, only $B_1$ occurs and the bound is uniform.

This gap controls the covariance in a metric that stays regular as $t\to0$. Let $\omega_{\mathrm{act}}(t)\coloneqq(1,x_{\mathrm{stat}}(t)^{\mathsf T})^{\mathsf T}$ be the coordinate vector of $\ket{\omega(\bar\alpha(t))}$, and put
\begin{equation}\label{eq:slow-cov-metric}
 \begin{aligned}
 \Pi_t&\coloneqq I-\frac{\omega_{\mathrm{act}}\omega_{\mathrm{act}}^{\mathsf T}
 \widehat D_{\mathrm{core}}}
 {\omega_{\mathrm{act}}^{\mathsf T}\widehat D_{\mathrm{core}}\omega_{\mathrm{act}}},
 \qquad E\coloneqq\begin{pmatrix}0\\I\end{pmatrix},\\
 M_{\mathrm{met}}(t)&\coloneqq E^{\mathsf T}\Pi_t^{\mathsf T}
 \widehat D_{\mathrm{core}}(t)\Pi_t E.
 \end{aligned}
\end{equation}
The projection $\Pi_t$ removes the kernel direction in the $\widehat D_{\mathrm{core}}$ metric. At zero, $M_{\mathrm{met}}(0)=I-x_{\mathrm{stat}}(0)x_{\mathrm{stat}}(0)^{\mathsf T}/n$ and $\norm{x_{\mathrm{stat}}(0)}_2^2=n-1$. Thus, after shrinking the interval, $(2n)^{-1}I\preceq M_{\mathrm{met}}(t)\preceq2I$, and its derivative is bounded by analyticity.

The operator $\widehat A\coloneqq\widehat D_{\mathrm{core}}^{-1}C_{\mathcal W}$ satisfies $\widehat A^{\mathsf T}\widehat D_{\mathrm{core}}=\widehat D_{\mathrm{core}}\widehat A=C_{\mathcal W}$ and fixes $\omega_{\mathrm{act}}$. Its self-adjointness therefore gives $\Pi_t\widehat A=\widehat A\Pi_t$. Since $I-\widehat D_{\mathrm{core}}^{-1/2}C_{\mathcal W}\widehat D_{\mathrm{core}}^{-1/2}=\widehat D_{\mathrm{core}}^{-1/2}\Delta_{\mathrm{cmp}}\widehat D_{\mathrm{core}}^{-1/2}$, Eq.~\eqref{eq:slow-comparison-gap} and uniform bounds on $\widehat D_{\mathrm{core}}$ place the remaining eigenvalues in $(0,1-ct^{2m-4}]$. Let $\widehat T_{\mathrm{cov}}$ be $T_{\mathrm{cov}}$ evaluated at $x_{\mathrm{stat}}(t)$ with the tail of $\widehat D_{\mathrm{core}}$ as normalizer. Here $p_{\mathrm{anc}}=(\widehat D_{\mathrm{core}})_{00}$ and $v_{\mathrm{tail}}=\widehat D_{\mathrm{tail}}x_{\mathrm{stat}}$, where $\widehat D_{\mathrm{tail}}$ denotes that tail. Substitution in $F_{\mathrm{cov}}$ gives the first identity below. Applying $\Pi_t$ and the spectral bound gives the second.
\begin{equation}\label{eq:slow-cov-quotient}
 \begin{aligned}
 E\widehat T_{\mathrm{cov}}\zeta
 &=\widehat A E\zeta-\omega_{\mathrm{act}}(\widehat A E\zeta)_0,\\
 \widehat T_{\mathrm{cov}}^{\mathsf T}M_{\mathrm{met}}
 \widehat T_{\mathrm{cov}}
 &\preceq(1-c_{\mathrm{ctr}}t^{2m-4})M_{\mathrm{met}},
 \end{aligned}
\end{equation}
Here $c_{\mathrm{ctr}}>0$ is independent of $t$.

For an actual iterate, define the fast error $e\coloneqq\norm{u-h_n(t)}_2$ and the weighted covariance size $h_{\mathrm{cov}}\coloneqq\tr(M_{\mathrm{met}}(t)\Sigma_{\mathrm{cov}})$. For constants $K>0$ and $t_*>0$ to be chosen below, consider the region
\begin{equation}\label{eq:slow-cov-neighborhood}
 0<t\le t_*,\qquad e\le Kt^{2m},\qquad h_{\mathrm{cov}}\le t^{2m+2}.
\end{equation}
These bounds keep the iterates close to the stationary curve while controlling their covariance. We first derive estimates within this region, then verify that the initial seed lies in it and that the iteration preserves it. The metric bounds give $h_{\mathrm{cov}}\asymp\opnorm{\Sigma_{\mathrm{cov}}}$. The uniform derivative bounds and $\widehat D_{\mathrm{core}}-D_{\mathrm{core}}=\OCal(t^{2m})$ imply
\begin{equation}\label{eq:slow-cov-perturbation}
 \begin{aligned}
 \opnorm{T_{\mathrm{cov}}-\widehat T_{\mathrm{cov}}}
 &=\OCal(e+h_{\mathrm{cov}}+t^{2m}),\\
 \opnorm{M_{\mathrm{met}}(t^{(k+1)})-M_{\mathrm{met}}(t)}
 &=\OCal(|t^{(k+1)}-t|).
 \end{aligned}
\end{equation}
The stationary scalar law and $\partial_{\alpha_j}\mathcal T_n=\OCal(t)$ first give $|t^{(k+1)}-t|=\OCal(t^{2m-1}+te+th_{\mathrm{cov}})$ with a constant independent of $K$. Under Eq.~\eqref{eq:slow-cov-neighborhood}, this becomes $t^{(k+1)}=t-mc_nt^{2m-1}(1+o(1))$, uniformly for fixed $K$. Consequently the perturbation of the covariance contraction is $\OCal(t^{2m-1})$, smaller than its margin $t^{2m-4}$ by $\OCal(t^3)$. Fix $q_{\mathrm{ctr}}\in(1-\gamma_n,1)$. The limiting fast Jacobian has norm at most $1-\gamma_n$, so continuity and $h_n'(t)=\OCal(t)$ give, on a smaller interval,
\begin{equation}\label{eq:slow-cov-bootstrap}
 \begin{aligned}
 h_{\mathrm{cov}}^{(k+1)}&\le(1-c_*t^{2m-4})h_{\mathrm{cov}},\\
 e^{(k+1)}&\le q_{\mathrm{ctr}}e
 +C_*t^{2m}+C_*h_{\mathrm{cov}},
 \end{aligned}
\end{equation}
Here one may take $c_*=c_{\mathrm{ctr}}/2$. The forcing $C_*t^{2m}$ comes from the curve displacement $\norm{h_n(t^{(k+1)})-h_n(t)}_2=\OCal(t|t^{(k+1)}-t|)$. Absorbing its additional $\OCal(t^2e)$ term into $q_{\mathrm{ctr}}$ fixes $C_*>0$ independently of $K$. Choose $K>4C_* /(1-q_{\mathrm{ctr}})$ and then restrict $0<t\le t_*$. For sufficiently small $t_*$, the scalar decrement lies between $(mc_n/2)t^{2m-1}$ and $2mc_nt^{2m-1}$, with $0<t^{(k+1)}<t$. The fast error is at most $[K-(1-q_{\mathrm{ctr}})K/2]t^{2m}$, whereas $K(t^{(k+1)})^{2m}=Kt^{2m}[1-\OCal(t^{2m-2})]$. Also $(t^{(k+1)}/t)^{2m+2}=1-\OCal(t^{2m-2})$, whose loss is smaller than $c_*t^{2m-4}$. Shrinking $t_*$ once more therefore preserves the region in Eq.~\eqref{eq:slow-cov-neighborhood}.

The normalized seed lies in this region. To check this directly, let $D_{\mathrm{seed}}$ be the diagonal square root of its marginal restricted to the core tail, and let $\omega_{\mathrm{seed}}$ denote the tail of the raw ket $\omega(\bar\alpha(t_0))$. Its Schur covariance is
\begin{equation}\label{eq:slow-cov-seed}
 \Sigma_{\mathrm{cov}}^{(0)}
 =\delta D_{\mathrm{seed}}^{-2}
 +\frac{\delta}{1+\delta}D_{\mathrm{seed}}^{-1}
 \omega_{\mathrm{seed}}\omega_{\mathrm{seed}}^{\mathsf T}
 D_{\mathrm{seed}}^{-1},
\end{equation}
where $\delta\coloneqq t_0^{4m}$. Normalization changes the fast coordinates by $\OCal(\delta)$ and the terminal angle by $\OCal(\delta t_0)$. Thus $t^{(0)}\sim t_0$, $e^{(0)}=\OCal(t_0^{4m})$, and $h_{\mathrm{cov}}^{(0)}=\OCal(t_0^{4m})$. Since $4m>2m+2$ for $m\ge2$, the seed satisfies the invariant bounds for every sufficiently small $t_0>0$.

The terminal angle remains positive, decreases, and tends to zero. Inverting its scalar decrement gives
\begin{equation}\label{eq:slow-family-trajectory-rate}
 \begin{aligned}
 t^{(k+1)}&=t^{(k)}-m_nc_n(t^{(k)})^{2m_n-1}(1+o(1)),\\
 t^{(k)}&\sim(c_{\mathrm{time},n}k)^{-1/(2m_n-2)},\\
 c_{\mathrm{time},n}&\coloneqq2m_n(m_n-1)c_n.
 \end{aligned}
\end{equation}
Summing the covariance contraction in Eq.~\eqref{eq:slow-cov-bootstrap} now gives
\begin{equation}\label{eq:slow-cov-decay}
 h_{\mathrm{cov}}^{(k)}\le C
 \begin{cases}
 \exp(-ck),&n=2,\\
 \exp(-ck^{1/(m_n-1)}),&n>2,
 \end{cases}
 \qquad c,C>0.
\end{equation}
In particular, $\opnorm{\Sigma_{\mathrm{cov}}^{(k)}}=o((t^{(k)})^q)$ for every fixed $q>0$. The covariance remains positive at every finite step, but it does not affect the algebraic orders of the slow iterative update.

\subsection{Objective, certificate, and matrix rates}
\label{sec:circle-certificate-matrix}

The scalar asymptotic and covariance estimate now determine the three accuracy measures. For a normalized iterate supported on $P$, the objective gap is exactly $r^{(k)}=\gamma_n\sum_{j=0}^{n-1}\bra{w_j}X^{(k)}\ket{w_j}$. Its difference from the boundary gap at the same angles is $\OCal(\opnorm{\Sigma_{\mathrm{cov}}^{(k)}})$, while stationarity makes the error from the fast coordinates quadratic in $e^{(k)}$. Thus Eqs.~\eqref{eq:slow-family-stationary-gap} and \eqref{eq:slow-family-trajectory-rate} give
\begin{equation}\label{eq:slow-family-objective-rate}
 r^{(k)}\sim c_n(t^{(k)})^{2m_n}
 \sim c_n(c_{\mathrm{time},n}k)^{-m_n/(m_n-1)}.
\end{equation}

For the certificate, let $\Delta^{(k)}\coloneqq I_A\otimes Y^{(k)}-C$. Since $s=0$ and $Y^{(k)}\succ0$, Eq.~\eqref{eq:certificate} reduces to
\begin{equation}\label{eq:circle-certificate-reduction}
\begin{aligned}
 \tau^{(k)}&=\max\{0,-\lambda_{\min}(\Delta^{(k)})\},\\
 r_{\mathrm{cert}}^{(k)}&=\tr Y^{(k)}-f^{(k)}+(n+1)\tau^{(k)}.
\end{aligned}
\end{equation}
The comparison slack in Eq.~\eqref{eq:slow-comparison-slack} identifies the leading eigenvalue. Write $t\coloneqq t^{(k)}$ and define the perturbation on $\mathcal W$ by
\begin{equation}\label{eq:slow-normalizer-perturbation}
E_{\mathcal W}^{(k)}\coloneqq
 \Pi_{\mathcal W}(I_A\otimes[Y^{(k)}-Y(\bar\alpha(t))])\Pi_{\mathcal W}.
\end{equation}
Its norm is $\OCal(t^{2m_n})$, which is the order of the eigenvalue sought, so a norm bound alone does not suffice. The needed cancellation follows from $F_0(\alpha)\coloneqq Y_0(\alpha)+\sum_{j=2}^nY_j(\alpha)$. In Eq.~\eqref{eq:slow-branch-coordinates}, the linear terms in $Y_0$ and $Y_j$ are $\gamma_n\sum_{j<n}\alpha_j$ and $-\gamma_n\alpha_{j-1}$, respectively. Hence $\partial_\alpha F_0(0)=0$.

Put $\ket{\omega_0}\coloneqq\ket{p_0}+\sum_{j=2}^n\ket{p_j}$ and $\alpha^{(k)}\coloneqq(u^{(k)},t)$. Bounded second derivatives and $\partial_\alpha F_0(0)=0$ bound the derivative along the segment from $\bar\alpha(t)$ to $\alpha^{(k)}$ by $\OCal(t+e^{(k)})$. The mean value theorem therefore gives
\begin{equation}\label{eq:slow-kernel-cancellation}
\begin{aligned}
 \bra{\omega_0}E_{\mathcal W}^{(k)}\ket{\omega_0}
 &=F_0(\alpha^{(k)})-F_0(\bar\alpha(t))
   +\OCal(\opnorm{\Sigma_{\mathrm{cov}}^{(k)}})\\
 &=\OCal\!\left(te^{(k)}+(e^{(k)})^2
           +\opnorm{\Sigma_{\mathrm{cov}}^{(k)}}\right)\\
 &=o(t^{2m_n}).
\end{aligned}
\end{equation}
Replacing $\ket{\omega_0}$ by $\ket{\omega(\bar\alpha(t))}=\ket{\omega_0}+\OCal(t^2)$ changes the compression by $\OCal(t^2\opnorm{E_{\mathcal W}^{(k)}})$, preserving this estimate. On $\mathcal W$, write the actual slack as $\Delta_{\mathrm{cmp}}+R_{\mathrm{pert}}$, where $R_{\mathrm{pert}}\coloneqq E_{\mathcal W}^{(k)}-\delta_n(t)\ket{p_n}\bra{p_n}$. Its norm divided by $\lambda_{\mathrm{gap}}$ is $\OCal(t^4)$. Weyl's inequality isolates one eigenvalue $\lambda$ near zero with $|\lambda|\le\opnorm{R_{\mathrm{pert}}}$, while the complementary block minus $\lambda I$ is bounded below by $\lambda_{\mathrm{gap}}I/2$. For the unit kernel vector $\widehat\omega\coloneqq\ket{\omega(\bar\alpha(t))}/\norm{\ket{\omega(\bar\alpha(t))}}_2$, its Schur complement gives
\begin{equation}\label{eq:slow-certificate-eigenvalue}
\begin{aligned}
 \lambda
 &=\inner{\widehat\omega}{R_{\mathrm{pert}}\widehat\omega}
   +\OCal\!\left(\frac{\opnorm{R_{\mathrm{pert}}}^2}{\lambda_{\mathrm{gap}}}\right)\\
 &=-\frac{\delta_n(t)\cos^2t}
         {\norm{\ket{\omega(\bar\alpha(t))}}_2^2}+o(t^{2m_n})\\
 &\sim-\frac{m_nc_n}{n}t^{2m_n}.
\end{aligned}
\end{equation}
The remainder is $\OCal(t^{2m_n+4})$ and the denominator tends to $n$. In particular, $\lambda$ is the least eigenvalue of the actual slack on $\mathcal W$.

The remaining blocks do not alter this coefficient. At $p_1$, the slack eigenvalue is $Y_1^{(k)}-1=o(t^{2m_n})$, because $Y_1(\bar\alpha(t))=1$ and $\partial_\alpha Y_1(0)=0$. At $q_n$, it is asymptotic to $m_nc_nt^{2m_n-2}>0$. Outside $\supp P$, the cost matrix vanishes and the slack is positive. Thus $\tau^{(k)}\sim(m_nc_n/n)t^{2m_n}$.

The uncorrected trace discrepancy is of smaller order. On the boundary family, taking the inner product of each unit circle vector with its transformed vector gives
\begin{equation}\label{eq:slow-trace-discrepancy}
\begin{aligned}
 &\tr Y(\alpha)-\tr(CX_{\mathrm{br}}(\alpha))\\
 &\qquad=\sum_{j=1}^nY_j(\alpha)
       [1-\cos(\mathcal T_j(\alpha)-\alpha_j)].
\end{aligned}
\end{equation}
The fast angle changes are $\OCal(e^{(k)})$, and the terminal change is $\OCal(t^{2m_n-1})$. The covariance contributes less than every fixed power of $t$, so the actual trace discrepancy is $\OCal(t^{4m_n-2})=o(t^{2m_n})$. Substitution in Eq.~\eqref{eq:circle-certificate-reduction} proves
\begin{equation}\label{eq:circle-certificate-rate}
 r_{\mathrm{cert}}^{(k)}
 \sim\frac{(n+1)m_nc_n}{n}(t^{(k)})^{2m_n}
 \sim\frac{(n+1)m_n}{n}\,r^{(k)}.
\end{equation}

To determine the matrix rate, write the isolated block of $X^{(k)}$ as $\left(\begin{smallmatrix}a^{(k)}&z^{(k)}\\z^{(k)}&b^{(k)}\end{smallmatrix}\right)$. Its entries are divided by $(Y_1^{(k)})^2$, $Y_1^{(k)}Y_n^{(k)}$, and $(Y_n^{(k)})^2$, respectively, so $z^{(k)}/\sqrt{a^{(k)}b^{(k)}}=\chi_{\mathrm{corr}}$ at every step. All fast angles are $\OCal((t^{(k)})^2)$, while $b^{(k)}=\sin^2t^{(k)}$. It follows that
\begin{equation}\label{eq:slow-matrix-limit}
 \lim_{k\to\infty}X^{(k)}=X^{(\infty)}
 =\ket{\omega_0}\bra{\omega_0}+\ket{p_1}\bra{p_1}.
\end{equation}
The coupled block and isolated diagonal entries differ from their limits by $\OCal((t^{(k)})^2)$. The leading error instead comes from $z^{(k)}\sim\chi_{\mathrm{corr}}t^{(k)}$, giving
\begin{equation}\label{eq:circle-matrix-rate}
 \fnorm{X^{(k)}-X^{(\infty)}}
 \sim\sqrt2\,\chi_{\mathrm{corr}}t^{(k)}.
\end{equation}
The nonzero isolated correlation is therefore essential for this matrix exponent.

Define $k_{\mathrm{obj},n}(\varepsilon)$, $k_{\mathrm{cert},n}(\varepsilon)$, and $k_{X,n}(\varepsilon)$ as the first indices $k\ge0$ at which the respective errors are at most $\varepsilon$. Inverting the preceding equivalents yields
\begin{equation}\label{eq:slow-family-iteration-count}
\begin{aligned}
 k_{\mathrm{obj},n}(\varepsilon)
 &\sim\frac{c_n^{-1/m_n}}{2m_n(m_n-1)}
     \varepsilon^{-(m_n-1)/m_n},\\
 k_{\mathrm{cert},n}(\varepsilon)
 &\sim\left(\frac{(n+1)m_n}{n}\right)^{(m_n-1)/m_n}
     k_{\mathrm{obj},n}(\varepsilon),\\
 k_{X,n}(\varepsilon)
 &\sim\frac{(\sqrt2\chi_{\mathrm{corr}})^{2m_n-2}}{c_{\mathrm{time},n}}
     \varepsilon^{-(2m_n-2)}.
\end{aligned}
\end{equation}
Every leading constant in Eq.~\eqref{eq:slow-family-iteration-count} is positive, so these equivalents establish the $\Theta$ estimates in Eq.~\eqref{eq:circle-exponent-comparison}. Monotonicity of the certificate or matrix error is not required. Each error is positive at every finite iterate, and its asymptotic equivalent bounds all sufficiently late iterates from both sides. A finite initial segment therefore cannot affect the first hitting asymptotics.

\section{Numerical simulation details}
\label{sec:numerics}
This appendix describes the sampling, initialization, and accuracy conventions for Figures~\ref{fig:timing} and~\ref{fig:gue}.

\subsection{Runtime comparison}
\label{sec:runtime-details}
Figure~\ref{fig:timing} compares the original RW and JRF iterations with \texttt{CVXPY}~\cite{DiamondBoyd2016} models solved by the first order solver SCS~\cite{ODonoghue2016} and the interior point solver Clarabel~\cite{GoulartChen2026}. We fix $d_A=4$ and use $500$ independent instances for each iterative method at each sampled dimension, with the SCS and Clarabel methods evaluated on the same subset of $20$ instances. All methods use an objective error target of $10^{-7}$. For discrimination, we draw equiprobable states independently from the Hilbert--Schmidt measure and initialize the JRF iteration at $M_i^{(0)}=I_B/d_A$. For entanglement fidelity, the input is $\rho=I_A/d_A$ and the noise channel is induced by a Haar random isometry from $\Cbb^{d_A}$ to $\Cbb^{d_B}\otimes\Cbb^{d_B}$, with the second factor traced out. The RW iteration starts at $X^{(0)}=I_{AB}/d_A$. Its cost matrix is singular, while $\tr_A C\succ0$ holds for every instance of both families.

All methods meet the objective accuracy target using a precomputed numerical dual upper bound whose gap from a primal value is at most $2\times10^{-11}$, including roundoff allowances. The outputs from the SCS and Clarabel methods are corrected for positivity and normalization before their objectives are evaluated. Timings exclude instance generation and reference construction. For the RW and JRF iterations, they include initialization, updates, and objective evaluation, but not evaluation of their own certificate gaps. For the \texttt{CVXPY} implementations using the SCS and Clarabel methods, the reported runtimes include model construction, compilation, solving, feasibility correction, and cumulative retries. Runs use one numerical-library thread per worker, with each worker assigned to a single processor core.

\subsection{GUE iteration counts}
\label{sec:gue-details}
For Figure~\ref{fig:gue}, we draw $500$ independent GUE cost matrices at each dimension $d_A=d_B=d\in\{4,6,8,10\}$, with the Wigner normalization specified in Section~\ref{sec:further-applications}. Every sampled cost matrix has negative eigenvalues. We use the minimal admissible shift $s=-\lambda_{\min}(C)$ and the positive definite seed pair $X^{(0)}=I_{AB}$ and $S^{(0)}=I_B$, which satisfies Eq.~\eqref{eq:support-condition}.

For each run of the iteration, let $k_{\mathrm{stop}}$ be the first index $k\ge2$ for which $f^{(k)}-f^{(k-1)}\le10^{-14}$. We record the objective value at this stopping point as $f^{\mathrm{stop}}\coloneqq f^{(k_{\mathrm{stop}})}=\tr(CX^{(k_{\mathrm{stop}})})$. To obtain a refined numerical matrix reference, we continue the same iteration for at least $1000$ additional steps until the certificate gap of Eq.~\eqref{eq:certificate-gap} is at most $10^{-10}$ and the Frobenius distance between iterates $250$ steps apart is at most $10^{-8}$. Every instance reaches this later reference iterate $X^{\mathrm{ref}}$, whose objective lies within $10^{-10}$ of $\fopt$, up to roundoff.

For each tolerance $\varepsilon$, we record the first iteration $k(\varepsilon)\ge1$ at which the chosen error measure is at most $\varepsilon$. The first row of Figure~\ref{fig:gue} measures the objective discrepancy as $\max\{0,f^{\mathrm{stop}}-f^{(k)}\}$. The second row uses the certificate gap $r_{\mathrm{cert}}^{(k)}$, with a roundoff allowance added to the dual bound. The third row measures the empirical distance $\fnorm{X^{(k)}-X^{\mathrm{ref}}}$ to the refined numerical reference.

For each dimension, let $f_i^{\mathrm{upbnd}}$ denote the smallest recorded numerical dual upper bound for instance $i$, including a roundoff allowance. We estimate the uncertainty in its stopping value by $\delta_i\coloneqq\max\{0,f_i^{\mathrm{upbnd}}-f_i^{\mathrm{stop}}\}$. Among the tested objective tolerances, we display only those satisfying
\begin{equation}\label{eq:displayed-objective-tolerance}
 \varepsilon\ge10\max_{1\le i\le500}\delta_i.
\end{equation}
This ensures that the estimated error in each stopping value is at most one tenth of the displayed tolerance. Every instance must also reach each displayed target, and tolerances below the numerical precision floor are excluded.

For each dimension, let $d_i$ denote the Frobenius distance between the final reference iterate and the iterate $250$ steps earlier for instance $i$. We display only tested matrix tolerances satisfying
\begin{equation}\label{eq:displayed-matrix-tolerance}
 \varepsilon\ge\max\left\{10^{-12},\;10\max_{1\le i\le500}d_i\right\}.
\end{equation}
This condition provides a heuristic check of reference stability. It does not bound the remaining distance between $X^{\mathrm{ref}}$ and the limiting matrix $X^{(\infty)}$. The reverse triangle inequality gives
\begin{equation}\label{eq:reference-matrix-error}
\begin{aligned}
 &\left|\fnorm{X^{(k)}-X^{\mathrm{ref}}}
 -\fnorm{X^{(k)}-X^{(\infty)}}\right|\\
 &\qquad\le\fnorm{X^{\mathrm{ref}}-X^{(\infty)}}.
\end{aligned}
\end{equation}
The bottom row therefore reports distance to the numerical reference, without certifying the distance to the limiting matrix.

Continuing each iteration for $1000$ steps beyond its reference iterate changed the recorded matrix iteration counts by less than $0.05\%$. This provides an empirical check of stability at the plotted tolerances.

\begingroup\linespread{0.98}\selectfont
\endgroup
\end{document}